\def\<#1>{\langle#1\rangle}
\let\set\mathbb
\def\vect#1{\mathbf{#1}}
\def\lcm{\operatorname{lcm}}
\def\coeff{\operatorname{coeff}}
\def\ord{\operatorname{ord}}
\def\den{\operatorname{den}}
\def\rE{$R$}
\def\rpiE{$R\Pi$}
\def\piE{$\Pi$}
\def\pisiE{$\Pi\Sigma$}
\def\sigmaE{$\Sigma$}
\def\KK{\set K}
\newcommand{\NN}{\set N_{\geq0}}
\newcommand{\NP}{\set N_{\geq1}}
\def\ZZ{\set Z}
\def\GG{\set G}
\def\HH{\set H}
\def\AR{\set A}
\def\SA{\set S}
\def\FF{\set F}
\def\EE{\set E}
\def\QQ{\set Q}
\newdimen\listablecorrection
\newtheorem{theorem}{Theorem}
\newtheorem{proposition}{Proposition}
\newtheorem{corollary}{Corollary}
\newtheorem{lemma}{Lemma}
\newtheorem{definition}{Definition}
\newtheorem{example}{Example}
\newtheorem{alg}{Algorithm}
\newcommand{\shift}{\quad}
\newcommand{\myframeV}[2]{\begin{minipage}[t]{#1}#2\end{minipage}}
\newcommand{\fct}[3]{#1:#2\to#3}
\newcommand{\dfield}[2]{(#1,#2)}
\newcommand{\const}[2]{\text{const}_{#2}#1}
\newcommand{\Ann}{\text{Ann}}
\newcommand{\lr}[1]{\langle#1\rangle}
\newcommand{\rpisiE}{$\mathrm{R}\mathrm{\Pi\Sigma}$}
\begin{document}

\vspace*{-1cm}

\begin{flushright}
	RISC Report number 23-01\\[1cm] 
\end{flushright}

\author{Carsten Schneider}
\address{Johannes Kepler University Linz\\
	Research Institute for Symbolic Computation\\
	A-4040 Linz, Austria}
\email{Carsten.Schneider@risc.jku.at}
\thanks{Supported by the Austrian Science Foundation (FWF) grant P33530.}

\title[Refined telescoping algorithms to reduce the degrees of the denominators]{Refined telescoping algorithms in \rpisiE-extensions to reduce the degrees of the denominators}

\begin{abstract}
We present a general framework in the setting of difference ring extensions that enables one to find improved representations of indefinite nested sums such that the arising denominators within the summands have reduced degrees. The underlying (parameterized) telescoping algorithms can be executed in \rpisiE-ring extensions that are built over general \pisiE-fields. An important application of this toolbox is the simplification of d'Alembertian and Liouvillian solutions coming from recurrence relations where the denominators of the arising sums do not factor nicely. \end{abstract}

\keywords{telescoping, difference rings, reduced denominators, nested sums}

\maketitle

\section{Introduction}

Parameterized telescoping, a central paradigm of symbolic summation, can be introduced in a a \emph{difference ring} (or \emph{field}) $\dfield{\AR}{\sigma}$ as follows. $\AR$ is a ring (or field) in which the summation objects are modeled, $\fct{\sigma}{\AR}{\AR}$ is a ring (or field) automorphism that scopes the shift operator, and $\KK=\const{\AR}{\sigma}=\{c\in\AR\mid \sigma(c)=c\}$ is the set of constants which forms a subring (or subfield) of $\AR$; here $\KK$ is always a field also called \emph{constant field}. Then we are interested in the following problem.

\medskip

\noindent\textbf{Problem PT in $\dfield{\AR}{\sigma}$} (with constant field $\KK=\const{\AR}{\sigma}$). \textit{Given} $f_1,\dots,f_m\in\AR\setminus\{0\}$. \textit{Find} $h\in\AR$ and $(c_1,\dots,c_m)\in\KK^m\setminus\{\vect{0}\}$  with
\begin{equation}\label{Equ:ParaT}
\sigma(h)-h=c_1\,f_1+\dots+c_d\,f_m.
\end{equation}
For the special case $m=1$ this reduces to the telescoping problem.

\medskip

\noindent\textbf{Problem T in $\dfield{\AR}{\sigma}$}. \textit{Given} $f\in\AR\setminus\{0\}$. Find $h\in\AR$ with
\begin{equation}\label{Equ:Tele}
	\sigma(h)-h=f.
\end{equation}

If $\sigma$ encodes the shift in $k$, equation~\eqref{Equ:Tele} turns to $h(k+1)-h(k)=f(k)$. Summing this equation over $k$ from $a$ to $b$ gives $\sum_{k=a}^bf(k)=h(b+1)-h(a)$.
Similarly, Zeilberger's creative telescoping paradigm~\cite{Zeilberger:91} for finding recurrences of definite sums is covered in Problem~PT by setting $f_i=F(n+i-1,k)\in\AR$ for $1\leq i\leq m$. 
%
%
%

The breakthrough of these summation techniques was Gosper's telescoping algorithm for hypergeometric products~\cite{Gosper:78} and Zeilberger's extension to creative telescoping~\cite{Zeilberger:91}. They have been optimized and extended further to other input classes, such as \hbox{($q$--)}hyper\-geometric products~\cite{Paule:95,PauleRiese:97,Bauer:99,CK:12,CJKS:13,ChenHKL15} or holonomic sequences~\cite{Zeilberger:90a,Chyzak:00,Koutschan:13}. Another milestone was Karr's summation algorithm~\cite{Karr:81,Karr:85} that solves Problems~T and~PT in \pisiE-fields.

\vspace*{-0.1cm}

\begin{definition}\label{Def:PSFieldExt}
\normalfont
	A difference field extension $(\FF,\sigma)$ of a difference field $\dfield{\GG}{\sigma}$ is called a \emph{\pisiE-field extension} if 
	$\GG=\GG_0\leq\GG_1\leq\dots\leq\GG_e=\FF$
	is a tower of rational function field extensions with $\GG_i=\GG_{i-1}(t_i)$ for $1\leq i\leq e$ and we have $\const{\FF}{\sigma}=\const{\GG}{\sigma}$ where for all $1\leq i\leq
	e$ one of the following holds:
\newpage
	\begin{itemize}[topsep=0pt, partopsep=0pt, leftmargin=7pt, itemindent=5pt]
		\item$\frac{\sigma(t_i)}{t_i}\in(\GG_{i-1})^*$ ($t_i$ is called a \emph{\piE-field monomial});
		\item $\sigma(t_i)-t_i\in\GG_{i-1}$ ($t_i$ is called a \emph{\sigmaE-field monomial}).
	\end{itemize}
	Such an $\dfield{\FF}{\sigma}$ is called a \emph{\pisiE-field over $\KK$} if $\const{\GG}{\sigma}=\GG=\KK$.
\end{definition}

Together with \rpisiE-extensions~\cite{DR1,DR3} (see Definition~\ref{Def:APSExt}) one can rephrase indefinite nested sums defined over nested products fully automatically~\cite{Schneider:05c,Petkov:10,DR2,OS:18,OS:20,SchneiderProd:20}; see also~\cite{Petkov:10,ZimingLi:11}. 
In particular, improved algorithms for (parameterized) telescoping~\cite{Schneider:08c,Schneider:10a,Schneider:10b,Petkov:10,Schneider:15} are implemented within the summation package~\texttt{Sigma}\cite{Schneider:07a,Schneider:21} to find representations with minimal nesting depth. 
Further important simplifications have been introduced in~\cite{Abramov:75,Paule:95} for the rational case $\KK(x)$ with $\sigma(x)=x+1$ that finds for a given $f\in\KK(x)$ an $h$ in $\KK(x)$ or in a \pisiE-field $\KK(x)(t)$ with $\sigma(t)-t=f'\in\KK(x)$ such that~\eqref{Equ:Tele} holds and the denominator of $f'$ has minimal degree; for the generalization in a \pisiE-field $\dfield{\FF(x)}{\sigma}$ we refer to~\cite{Schneider:07d}.

In this article we aim at enhancing this telescoping approach~\cite{Abramov:75,Paule:95,Schneider:07d} (also related, e.g., to~\cite{Abramov:03,BChen:05,CSFFL:15}) such that the generator $x$ may arise also within an extension tower. E.g., consider the sum in
\vspace*{-0.5cm}

\begin{multline}\label{Equ:HarmonicSumId}
	\text{$\sum_{k=1}^n$}\Big(\frac{-2+k}{10 (1+k^2)}+\frac{(1-4 k-2 k^2)}{10 (1+k^2)(2+2 k+k^2)}S_1(k)+\frac{(1-4 k-2 k^2)}{5 (1+k^2)(2+2 k+k^2)}S_3(k)\Big)\\[-0.2cm]
	=\frac{(n^2+4 n+5)}{10(n^2+2 n+2)}S_1(n)-\frac{(n-1) (n+1)}{5 (n^2+2 n+2)}S_3(n)-\frac{2}{5}S_2(n).
\end{multline}
where the denominators in $k$ do not factorize nicely over $\QQ$; here $S_o(n)=\sum_{i=1}^k1/i^o$ denotes the harmonic numbers. Then with our new algorithms 
one can compute the right-hand side of~\eqref{Equ:HarmonicSumId} in terms of sums whose denominators factor linearly.
In general, we assume that the sums and products within $\AR$ have nice denominators in $x$ (here in $k$), i.e., have irreducible factors whose degrees are at most $d$ for some given $d\in\NN$. Then we can decide algorithmically if Problems~T and~PT are solvable in $\AR$ or in an extension of it of where the additional sums have again nice denominators.

These algorithms play a crucial role to simplify d'Alem\-bertian and Liouvillian solutions~\cite{Petkov:92,vanHoeij:99,APP:98,Singer:99,Petkov:2013} for hypergeometric products, and their generalizations in \pisiE-fields~\cite{ABPS:20}. E.g., during calculations coming from particle physics~\cite{BKKS:09,BMSS:22a,BMSS:22b} we have obtained sum solutions up to nesting depth $40$ where the denominators of the sums are built by irreducible polynomials with degrees up to 1000. Using our new toolbox we  have obtained optimal sum representations with only linear factors in the denominators. These simplifications are essential to get solutions in terms of harmonic sums and their generalizations~\cite{Bluemlein:99,Vermaseren:99,ABS:11,ABS:13,ABRS:14}. In particular, these tools can be combined efficiently with quasi-shuffle algebras~\cite{Bluemlein:04,AS:18}.

The article proceeds as follows. In Sec.~\ref{Sec:BasicNotions} we present refined \rpisiE-extensions and their main properties. In Sec.~\ref{Sec:RefinedRep} we elaborate on denominator reduced representations. This insight yields new telescoping algorithms in Sec.~\ref{Sec:RefinedTele}. A conclusion is given in Sec.~\ref{Sec:Conclusion}.

\section{Basic notions and properties}\label{Sec:BasicNotions}

All fields and rings have characteristic $0$.  $\dfield{\EE}{\sigma}$ is a \emph{difference ring (or field) extension} of $\dfield{\AR}{\sigma'}$ if $\AR$ is a subring (or subfield) of $\EE$ and $\sigma|_{\AR}=\sigma'$; from now on we do not distinguish between $\sigma$ and $\sigma'$. 

We call a difference field or ring $\dfield{\AR}{\sigma}$ with constant field $\KK$ \emph{computable} if $\sigma$ is computable, one can carry out the standard operations in $\AR$ and can decide if an element is $0$. It is called \emph{LA-computable} if, in addition, one can compute for $f_1,\dots,f_m\in\AR$ a basis of the $\KK$-vector space 
$$\Ann_{\KK}(f_1,\dots,f_m)=\{(c_1,\dots,c_m)\in\KK^m\mid c_1\,f_1+\dots+c_m\,f_m=0\}.$$
In a \pisiE-field extension $\dfield{\FF(x)}{\sigma}$ of $\dfield{\FF}{\sigma}$ we define the \emph{period} of $h\in\FF^*$ by $\text{per}(h)=0$ if there is no $n\in\NP$ with $\sigma^n(h)/h\in\FF$; otherwise, $\text{per}(h)$ is the smallest $n\in\NP$ with this property.
We rely on the following properties proved for a \pisiE-field in~\cite{Karr:81} and for a \pisiE-field extension in~\cite{Bron:00,Schneider:01}.

\begin{lemma}\label{Lemma:ShiftProp}
	Let $f,g\in\FF[x]\setminus\{0\}$ in a \pisiE-extension $\dfield{\FF(x)}{\sigma}$ of $\dfield{\FF}{\sigma}$. 
\begin{enumerate}[topsep=0pt, partopsep=0pt, leftmargin=7pt, itemindent=5pt,label={\arabic*.}]
		\item If $\text{per}(f)>0$, then $\frac{\sigma(x)}{x}\in\FF$ and $f=c\,x^m$ with $c\in\FF^*$, $m\in\ZZ$.
		\item Suppose that $\frac{\sigma(x)}{x}\notin\FF$ or not both $f,g$ have the form $c\,x^m$ with $c\in\FF^*$, $m\in\ZZ$. Then there is at most one $k\in\ZZ$ with $\sigma^{k}(f)/g\in\FF$.
	\end{enumerate}
\end{lemma}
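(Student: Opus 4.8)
The plan is to reduce everything to the basic structure theory of $\pisiE$-extensions, using the fact that the only obstruction to the "shift" of a polynomial staying a polynomial of the same degree is the presence of a $\piE$-field monomial $x$ with $\sigma(x)/x\in\FF^*$. First I would recall the elementary observation: if $x$ is a $\sigmaE$-field monomial, i.e. $\sigma(x)=x+w$ with $w\in\FF$, then for any $f=\sum_i f_i x^i\in\FF[x]$ of degree $d$ we have $\deg(\sigma(f))=d$ and the leading coefficient of $\sigma(f)$ equals $\sigma(f_d)$; whereas if $x$ is a $\piE$-field monomial, i.e. $\sigma(x)=a\,x$ with $a\in\FF^*$, then $\sigma(x^i)=a^i x^i$, so $\sigma$ permutes the monomial basis of $\FF[x]$ up to units in $\FF$. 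This distinction is the engine for both parts.

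For part~1, suppose $\mathrm{per}(f)>0$, so there is $n\in\NP$ with $\sigma^n(f)/f\in\FF^*$. If $x$ were a $\sigmaE$-monomial, then iterating the leading-coefficient computation above gives $\deg(\sigma^n(f))=\deg(f)$ with leading coefficient $\sigma^n(f_d)$; the quotient $\sigma^n(f)/f$ lying in $\FF$ forces, after comparing the coefficient of $x^{d-1}$, a relation that eventually contradicts $\const{\FF(x)}{\sigma}=\const{\FF}{\sigma}$ unless $d=0$ — but $d=0$ means $f\in\FF^*$, and then $\mathrm{per}(f)>0$ trivially holds with $n=1$, so this subcase must be excluded separately or handled by noting that $\mathrm{per}$ is defined on $\FF^*$ and the claim "$f=c\,x^m$" with $m=0$ is then vacuous; the cleaner route is to invoke directly the known $\pisiE$-structure results of~\cite{Karr:81,Bron:00,Schneider:01} that a $\sigmaE$-monomial $x$ satisfies $\sigma^n(g)/g\notin\FF$ for any non-constant $g\in\FF(x)$. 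Hence $x$ must be a $\piE$-monomial, $\sigma(x)/x=a\in\FF^*$. Writing $f=x^m u$ with $u\in\FF[x]$, $x\nmid u$, $u(0)\neq0$, and chasing $\sigma^n(f)/f = a^{nm}\sigma^n(u)/u \in\FF^*$: here $\sigma^n(u)/u\in\FF(x)$, and since $u$ has nonzero constant term the numerator and denominator are coprime polynomials in $x$; for their ratio to lie in $\FF$ we need $\sigma^n(u)=\lambda u$ for some $\lambda\in\FF^*$, which by comparing degrees and then lower coefficients forces $u\in\FF^*$. This gives $f=c\,x^m$.

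For part~2, assume two integers $k_1<k_2$ both satisfy $\sigma^{k_i}(f)/g\in\FF^*$. Dividing, $\sigma^{k_1}(f)/\sigma^{k_2}(f)\in\FF^*$, and applying $\sigma^{-k_1}$ yields $f/\sigma^{k_2-k_1}(f)\in\FF^*$, i.e. $\mathrm{per}(f)>0$ with the period dividing $n:=k_2-k_1>0$. By part~1 this already forces $\sigma(x)/x\in\FF$ and $f=c\,x^m$; running the same argument on $g$ (from $\sigma^{k_1}(g)/\sigma^{k_2}(g)\in\FF^*$ — wait, we only know $\sigma^{k_i}(f)/g\in\FF$, so instead: $g=\sigma^{k_1}(f)/c_1=c\,a^{(\cdot)}x^m/c_1$ up to a power of $a$, hence $g$ also has the form $c'x^m$ with the same $m$). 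Thus the hypothesis "$\sigma(x)/x\notin\FF$ or not both $f,g$ are of the form $c\,x^m$" is violated, a contradiction; therefore at most one such $k$ exists. The main obstacle I anticipate is the bookkeeping in part~1 when $x$ is a $\piE$-monomial: one must carefully separate the pure power-of-$x$ factor from the coprime remainder and correctly argue that a ratio of coprime polynomials in $\FF[x]$ can lie in $\FF$ only if each is (up to an $\FF^*$-scalar) equal to the other — and then that $\sigma^n(u)=\lambda u$ with $u(0)\neq0$ forces $u$ constant, which uses $\const{\FF(x)}{\sigma}=\const{\FF}{\sigma}$ one more time. Everything else is a direct appeal to the cited structure theorems together with degree and leading-coefficient comparisons.
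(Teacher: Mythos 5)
The paper itself offers no proof of this lemma: it is imported verbatim from \cite{Karr:81} (for \pisiE-fields) and \cite{Bron:00,Schneider:01} (for \pisiE-field extensions), so there is no in-paper argument to compare against; your proposal is a from-scratch reconstruction along the standard lines of those sources (degree and coefficient comparison, splitting off the power of $x$). The outline is right, but one step does not close as written. In both the \sigmaE- and \piE-cases you reduce to a relation $\sigma^n(h)=h$ for some normalized $h\in\FF[x]$ (namely $f/f_d$, resp.\ $u/u_0$) and then invoke $\const{\FF(x)}{\sigma}=\const{\FF}{\sigma}$ to force $h\in\FF$. That hypothesis only governs $\sigma$-constants, not $\sigma^n$-constants, so the conclusion does not follow directly --- controlling $\sigma^n$-invariants is precisely the nontrivial content of Karr's period/spread theory that you would otherwise have to cite. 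The gap is fillable by an orbit-product (norm) argument: set $H=h\,\sigma(h)\cdots\sigma^{n-1}(h)$; then $\sigma(H)=H$, so $H\in\FF$, while $H$ is a polynomial of degree $n\deg_x(h)$ which is monic (\sigmaE-case) or has constant term $1$ (\piE-case), forcing $\deg_x(h)=0$. With that insertion both cases of part~1 go through, and your derivation of part~2 from part~1 is correct.

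A second, smaller point: you noticed the degenerate case $f\in\FF^*$ but dismissed it by calling the conclusion ``vacuous''. It is not vacuous --- for $f\in\FF^*$ one has $\text{per}(f)=1$ while the asserted conclusion $\frac{\sigma(x)}{x}\in\FF$ can fail (take $x$ a \sigmaE-monomial), and likewise part~2 fails for $f=g\in\FF^*$. This is an imprecision in the statement as quoted (it is only ever applied to nonconstant, indeed monic irreducible, polynomials, as the remark following the lemma shows), but a complete proof should state the restriction to $f\notin\FF$ explicitly rather than absorb the constant case into the claim.
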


\noindent Thus any element in $\FF(x)$ has period $0$ or $1$. Furthermore, the only monic and irreducible polynomial with period $1$ is the \piE-monomial $x$ itself. Write $f=f_1^{n_1}\dots f_u^{n_u}\in\FF(x)$ where the irreducible polynomials $f_i$ are pairwise coprime and $n_i\in\ZZ$. We say that \emph{$f$ has $x$-degree{$\leq d$}} with $d\in\NN$ if for any period $0$ factor $f_i$ with $1\leq i\leq u$ we have $\deg_x(f_i)\leq d$; note: $f$ may contain a period $1$ factor.
Irreducible polynomials $f,g\in\FF[x]$ are called \emph{$\sigma$-equivalent} if there is a $k\in\ZZ$ with $\sigma^k(f)/g\in\FF$. Otherwise, they are called \emph{$\sigma$-coprime}.	

We introduce \rpisiE-extensions~\cite{DR1,DR3} to model, e.g., $(-1)^n$.

\begin{definition}\label{Def:APSExt}
\normalfont
	A difference ring extension $(\EE,\sigma)$ of a difference ring $\dfield{\AR}{\sigma}$ is called an \emph{\rpisiE-extension} if 
	$\AR=\AR_0\leq\AR_1\leq\dots\leq\AR_e=\EE$
	is a tower of ring extensions with $\const{\EE}{\sigma}=\const{\AR}{\sigma}$ where for all $1\leq i\leq
	e$ one of the following holds:
	\begin{itemize}[topsep=0pt, partopsep=0pt, leftmargin=7pt, itemindent=5pt]
		\item $\AR_i=\AR_{i-1}[t_i]$ is a ring extension subject to the relation $t_i^{\nu}=1$ for some $\nu>1$ where $\frac{\sigma(t_i)}{t_i}\in(\AR_{i-1})^*$ is a primitive  $\nu$th root of unity ($t_i$ is called an \emph{\rE-monomial}, and and we define $\nu=\ord(t_i)$);
		\item $\AR_i=\AR_{i-1}[t_i,t_i^{-1}]$ is a Laurent polynomial ring extension with $\frac{\sigma(t_i)}{t_i}\in(\AR_{i-1})^*$ ($t_i$ is called a \emph{\piE-monomial});
		\item $\AR_i=\AR_{i-1}[t_i]$ is a polynomial ring extension with $\sigma(t_i)-t_i\in\AR_{i-1}$ ($t_i$ is called an \emph{\sigmaE-monomial}).
	\end{itemize}
	Depending on the occurrences of the \rpisiE-monomials such an extension is also called a \emph{\rE-/\piE-/\sigmaE-/$R\Pi$-/$R\Sigma$-/\pisiE-extension}.\\
	$\dfield{\EE}{\sigma}$ is called a \emph{simple \rpisiE-ring extension} of $\dfield{\AR}{\sigma}$ if for all \rpiE-monomials $t_i$ we have $\frac{\sigma(t_{i})}{t_{i}}=u\,t_1^{m_1}\dots t_{i-1}^{m_{i-1}}$ with $u\in\AR^*$ and $m_j=0$ if $t_j$ is a \sigmaE-monomial. If $t_i$ is an \rE-monomial, we require in addition that $u$ is a root of unity and $m_j=0$ if $t_j$ is an \pisiE-monomial. 
\end{definition}

\begin{example}\label{Exp:DR}
	Take the difference field $\dfield{\QQ(x)}{\sigma}$ with $\sigma(x)=x+1$. Since $\const{\QQ(x)}{\sigma}=\QQ$, it is a \pisiE-field over $\QQ$. We introduce the following \rpisiE-extensions $\dfield{\EE}{\sigma}$ over $\dfield{\QQ(x)}{\sigma}$, i.e., $\const{\EE}{\sigma}=\QQ$; for algorithmic techniques that verify this property we refer to~\cite{Karr:81,DR1}.
\begin{enumerate}[topsep=0pt, partopsep=0pt, leftmargin=7pt, itemindent=5pt,label={\arabic*.}]
\item $\dfield{\EE}{\sigma}$ with the polynomial ring $\EE=\QQ(x)[h_1][h_2]$, $\sigma(h_1)=h_1+\frac{1}{1+x}$ and
	$\sigma(h_3)=h_3+\frac{1}{(1+x)^3}$ is a simple \sigmaE-extention of $\dfield{\KK(x)}{\sigma}$.
\item Take the ring $\EE_0=\KK(x)[z]$ subject to the relation $z^2=1$ and define on top the Laurent polynomial ring $\EE=\EE_0[\tau_1,\tau_1^{-1}][\tau_2,\tau_2^{-1}]$. Then $\dfield{\EE}{\sigma}$ with $\sigma(z)=-z$,	
	$\sigma(\tau_1)=(x+1)\tau_1$ and $\sigma(\tau_2)=(x+1)\tau_1$
	is a simple \rpisiE-extension of $\dfield{\QQ(x)}{\sigma}$.
\item Take the polynomial ring $\EE=\EE_0[h_1]$. Then $\dfield{\EE}{\sigma}$ with 
	$\sigma(z)=-z$ and $\sigma(h_1)=h_1+\frac{-z}{1+x}$ is a simple \rpisiE-extension of $\dfield{\QQ(x)}{\sigma}$.
\end{enumerate}
\end{example}

For convenience we use $\AR\lr{t}$ with three different meanings: it is the 
ring $\AR[t]$ subject to the relation $t^{\nu}=1$ if $t$ is an \rE-monomial of order $\nu$, it is the polynomial ring $\AR[t]$ if $t$ is a \sigmaE-monomial, or it is the Laurent polynomial ring $\AR[t,t^{-1}]$ if $t$ is a \piE-monomial.\\
Let $\dfield{\EE}{\sigma}$ be a simple \rpisiE-ring extension of $\dfield{\AR}{\sigma}$ with $\EE=\AR\langle t_{1}\rangle\dots\langle t_{e}\rangle$. The elements in $\EE$ are spanned over the power products
$\vect{t}^{\vect{n}}=t_1^{n_1}\dots t_e^{n_e}\in\EE$
with $\vect{n}=(n_1,\dots,n_e)\in\ZZ^e$ where $n_i\geq0$ if $t_i$ is a \sigmaE-monomial. If \emph{$\vect{n}$ is reduced}, i.e., if $0\leq i<\ord(t_i)$ in case that $t_i$ is an \rE-monomial, the power products are uniquely given. In particular, $\vect{t}^{\vect{n}}\in\AR$ implies $\vect{n}=\vect{0}$. Furthermore, one can reorder the generators in $\EE$ such that first \rE-monomials, then \piE-monomials and finally \sigmaE-monomials are adjoined.

Subsequently, let $\AR=\FF(x)$ where $\dfield{\FF(x)}{\sigma}$ is a \pisiE-field extension of $\dfield{\FF}{\sigma}$. 
Let
$f=\sum_{\vect{i}\in\ZZ^e}f_\vect{i}\,\vect{t}^{\vect{i}}\in\EE$  with $f_\vect{i}=\frac{p_\vect{i}}{q_\vect{i}}$ where the polynomials $p_\vect{i},q_\vect{i}\in\FF[x]$ are coprime. Define $q=\lcm_\vect{i}(q_\vect{i})\in\FF[x]$ being monic. 
Then we say that $f=\frac{h}{q}$ with $h=\sum_{\vect{i}\in\ZZ^e}f'_\vect{i}\,\vect{t}^{\vect{i}}\in\FF[x]\lr{t_1}\dots\lr{t_e}$ and $f'_\vect{i}=f_{\vect{i}}\,q/q_i\in\FF[x]$ is in \emph{reduced representation}, and we denote $q$ by $\den(f)$.
Subsequently, we will use the following properties: if $\den(h)$, $\den(g)$ with $h,g\in\EE$ have $x$-degrees{$\leq d$}, then $\den(h\smash{{}^+_\bullet}g)$ have $x$-degrees{$\leq d$}. Further, if $\den(h)$ has $x$-degree{$\leq d$} but not $\den(g)$, then $\den(h\smash{{}^+_\bullet}g)$ do not have $x$-degrees{$\leq d$}.

Finally, we refine simple \rpisiE-exten\-sions further as follows.

\begin{definition}\label{Def:DegreeOptimalExt}
\normalfont
Let $\dfield{\FF(x)}{\sigma}$ be a \pisiE-field extension of $\dfield{\FF}{\sigma}$ and let $\dfield{\EE}{\sigma}$ be a simple \rpisiE-extension of $\dfield{\FF(x)}{\sigma}$ with $\EE=\FF(x)\lr{t_1}\dots\lr{t_e}$. Then this \emph{extension has $x$-degree{$\leq d$}} with $d\in\NN$ 
if for all \pisiE-monomials $t_i$ one of the following properties hold:
\begin{itemize}[topsep=0pt, partopsep=0pt, leftmargin=7pt, itemindent=5pt]
\item If $t_i$ is a \piE-monomial, then $\frac{\sigma(t_i)}{t_i}=u\,t^{n_1}\dots t^{n_{i-1}}$ where $n_j\in\ZZ$ and $u\in\FF(x)^*$ has $x$-degree{$\leq d$}.
\item If $t_i$ is a \sigmaE-monomial, $\sigma(t_i)-t_i=f$ where $\den(f)$ has $x$-degree{$\leq d$}.
\end{itemize}
\end{definition}

\noindent With $\FF=\QQ$ all the difference rings in Example~\ref{Exp:DR} have $x$-degree$\leq 1$.

\begin{lemma}\label{Lemma:BoundLargeFactors}
Let $\dfield{\FF(x)}{\sigma}$ be a \pisiE-field extension of $\dfield{\FF}{\sigma}$ and let $\dfield{\EE}{\sigma}$ be a simple \rpisiE-extension of $\dfield{\FF(x)}{\sigma}$ with $x$-degree$\leq d\in\NN$. 
\begin{enumerate}[topsep=0pt, partopsep=0pt, leftmargin=7pt, itemindent=5pt,label={\arabic*.}]
\item Let $f\in\EE$ such that $\den(f)$ has $x$-degree{$\leq d$}. Then $\den(\sigma^k(f))$ has $x$-degree{$\leq d$} for any $k\in\ZZ$.
\item Let $f\in\EE$ with $\den(f)=b\,c$  where $c\in\FF[x]$ contains precisely the irreducible period $0$ factors with $x$-degrees larger than $d$ and $b$ has $x$-degree{$\leq d$}. Then for any $k\in\ZZ$ we have  $\den(\sigma^k(f))=\sigma^k(c)B$ for some $B\in\FF[x]$ which has $x$-degree{$\leq d$}.
\item For $g\in\EE$ and a period $0$ irreducible $q\in\FF[x]$ with $\deg_x(q)>d$ the following holds:\\
(i) If $q\mid\den(g)$ and $\sigma(q)\nmid\den(g)$, then $\sigma(q)\mid\den(\sigma(g)-g)$.\\
(ii) If $\sigma^k(q)\nmid\den(g)$ for any $k\in\ZZ$, then $\sigma^k(q)\nmid\den(\sigma(g)-g)$ for any $k\in\ZZ$. 
\end{enumerate}
\end{lemma}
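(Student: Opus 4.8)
The plan is to reduce everything to a single shift and then to trace denominators through a partial fraction decomposition, treating the three items in order. As preliminaries: by Definition~\ref{Def:PSFieldExt} either $\sigma(x)=\alpha\,x$ with $\alpha\in\FF^*$ ($x$ a \piE-field monomial) or $\sigma(x)=x+a$ with $a\in\FF\setminus\{0\}$ ($x$ a \sigmaE-field monomial), so $\sigma$ and $\sigma^{-1}$ restrict to ring automorphisms of $\FF[x]$ that preserve degrees and send irreducible polynomials to irreducible polynomials; since moreover $\sigma^{\pm1}(\FF)=\FF$, they also preserve the property ``period~$0$'' (cf.\ Lemma~\ref{Lemma:ShiftProp}). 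Hence $\sigma^{\pm1}$ permute, degree by degree, the monic irreducible period-$0$ polynomials, and in particular permute the \emph{large} ones, i.e.\ those of degree $>d$. Since $\sigma^{\pm k}$ is an iterate of $\sigma^{\pm1}$, it suffices to prove items~1 and~2 for $k=1$ and $k=-1$.

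For item~1, I would show that $\mathcal D:=\{f\in\EE\mid\den(f)\text{ has }x\text{-degree}\le d\}$ --- a subring of $\EE$ by the closure properties recalled just before the lemma --- satisfies $\sigma(\mathcal D)=\sigma^{-1}(\mathcal D)=\mathcal D$, by induction on the number $e$ of generators of $\EE$ over $\FF(x)$. The base case $e=0$ is the preliminary observation applied to $\FF(x)$. For the step, write $\EE=\EE'\lr{t_e}$ and $f=\sum_n f_n\,t_e^n$ with $f_n\in\EE'$ and $\den(f_n)\mid\den(f)$, so that the induction hypothesis applies to $\EE'$ and its relevant elements. If $t_e$ is a \piE- or \rE-monomial, then by the ``simple'' condition (Definition~\ref{Def:APSExt}) together with Definition~\ref{Def:DegreeOptimalExt} we have $\frac{\sigma(t_e)}{t_e}=u\,t_1^{m_1}\cdots t_{e-1}^{m_{e-1}}$ with $u\in\FF(x)^*$ of $x$-degree $\le d$ and $m_j=0$ for every \sigmaE-monomial $t_j$; hence $\frac{\sigma(t_e)}{t_e}$ and its inverse lie in $\mathcal D\cap\EE'$, and $\frac{\sigma^{-1}(t_e)}{t_e}=\sigma^{-1}\big((\tfrac{\sigma(t_e)}{t_e})^{-1}\big)\in\mathcal D\cap\EE'$ by the induction hypothesis. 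If $t_e$ is a \sigmaE-monomial, then $\sigma(t_e)-t_e\in\EE'$ has denominator of $x$-degree $\le d$, so $\sigma^{\pm1}(t_e)-t_e\in\mathcal D\cap\EE'$ (again using the induction hypothesis for the $\sigma^{-1}$-case). Expanding $\sigma^{\pm1}(f)$ as a polynomial expression in $t_e^{\pm1}$ and applying the induction hypothesis to the $\EE'$-coefficients then keeps everything inside $\mathcal D$.

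For item~2, which is the technical heart, I would decompose a given $g\in\EE$ by coefficient-wise partial fractions along the coprime factorisation $\den(g)=b_g\,c_g$, where $c_g$ collects the large irreducible factors with their multiplicities; this yields $g=g_0+g_1$ with $g_0\in\mathcal D$ and $g_1=\sum_{\vect i}(\epsilon_{\vect i}/\gamma_{\vect i})\,\vect t^{\vect i}$ where each $\gamma_{\vect i}\mid c_g$. Then $\sigma^{\pm1}(g_0)\in\mathcal D$ by item~1, and since $\sigma^{\pm1}(\vect t^{\vect i})\in\mathcal D$ for every power product (again item~1), every coefficient-denominator of $\sigma^{\pm1}(g_1)$ divides $\sigma^{\pm1}(c_g)$ times a polynomial of $x$-degree $\le d$; hence the large part of $\den(\sigma^{\pm1}(g))$ divides $\sigma^{\pm1}(c_g)$. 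Iterating this bound along $f,\sigma^{\pm1}(f),\dots$ shows that the large part of $\den(\sigma^k(f))$ divides $\sigma^k(c)$, and applying the same bound to $\sigma^{\mp1}$ of $\sigma^{\pm1}(f)$ --- whose denominator has large part $c$ --- gives the reverse divisibility $\sigma^k(c)\mid{}$(large part of $\den(\sigma^k(f))$). Since $\den(\sigma^k(f))$ and its large part are monic and the remaining factor has $x$-degree $\le d$, this is item~2 with a suitable $B\in\FF[x]$.

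For item~3 I would refine the decomposition to $g=g_0+g_q+g_{\ne q}$, where $g_q$ collects the $q$-part of $g$ --- so $\den(g_q)=q^{\,n}$ with $n=v_q(\den g)$, and $g_q\ne0$ iff $q\mid\den g$ --- and $g_{\ne q}$ collects the other large irreducible factors; note $\sigma(q)$ is again a large period-$0$ irreducible. In case~(i), $q\mid\den g$ gives $g_q\ne0$, and item~2 applied to $g_q$ shows that $\sigma(q)^{\,n}$ is the large part of $\den(\sigma(g_q))$; on the other hand $\sigma(q)\ne q$ (because $q\mid\den g$ but $\sigma(q)\nmid\den g$), and $\sigma(q)$ divides none of $\den(g_q)$, $\den(g_0)$, $\den(\sigma(g_0))$, $\den(g_{\ne q})$, $\den(\sigma(g_{\ne q}))$ --- the last two because $\deg\sigma(q)>d$ and, by item~2, the large part of $\den(\sigma(g_{\ne q}))$ is $\sigma$ applied to the large part of $\den(g_{\ne q})$, which omits $\sigma(q)$. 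Writing $\sigma(g)-g=\sigma(g_q)+v$ with $v$ the sum of the other five pieces, and invoking the elementary fact that $p\mid\den(u)$ and $p\nmid\den(v)$ force $p\mid\den(u+v)$, we get $\sigma(q)\mid\den(\sigma(g)-g)$. In case~(ii), fix $k$; since $\den(\sigma(g)-g)\mid\lcm(\den(\sigma g),\den g)$, it suffices that $\sigma^k(q)$ divides neither $\den g$ (the hypothesis) nor $\den(\sigma g)$, and the latter holds because $\deg\sigma^k(q)>d$ and, by item~2, the large part of $\den(\sigma g)$ is $\sigma$ applied to the large part of $\den g$, which cannot contain $\sigma^k(q)$ since $\sigma^{k-1}(q)\nmid\den g$. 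The step I expect to be the main obstacle is the bookkeeping behind ``expanding $\sigma^{\pm1}(\cdot)$ as a polynomial expression in the $t_i$'' used in items~1 and~2: one must verify that $\sigma^{\pm1}$ applied to any power product $\vect t^{\vect i}$ introduces only denominator factors of $x$-degree $\le d$, which is exactly what the hypotheses ``simple'' (Definition~\ref{Def:APSExt}) and ``$x$-degree~$\le d$'' (Definition~\ref{Def:DegreeOptimalExt}) guarantee --- e.g.\ the shift ratio of a \piE-monomial must not involve \sigmaE-monomials and its $\FF(x)$-part must have small denominator --- and it is also the reason why the $\sigma^{-1}$-direction is handled by induction on the tower rather than by a direct computation.
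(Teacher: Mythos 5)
Your proposal is correct and follows essentially the same route as the paper: item~1 by induction on the tower of \rpisiE-monomials using the simplicity and $x$-degree{$\leq d$} conditions (your reduction to $k=\pm1$ with iteration versus the paper's explicit formulas $\sigma^k(t_e)=\alpha_k t_e+\beta_k$ is only a bookkeeping difference), item~2 by the same two-sided divisibility argument (bound the large part of $\den(\sigma^k(f))$ by $\sigma^k(c)$ and then apply the bound to $\sigma^{-k}$ of the result to get the reverse divisibility), and item~3 as a direct consequence of item~2. The only cosmetic divergence is your coefficient-wise partial-fraction split $g=g_0+g_1$ (resp.\ $g=g_0+g_q+g_{\neq q}$) where the paper simply writes $f=a/(b\,c)$ and tracks the factor $\sigma^k(c)$; both rest on the same elementary cancellation facts recalled before the lemma.
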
 
\begin{proof}
(1) We show statement 1 by induction on $e$. 
The base case $e=0$ obviously holds. Now suppose that the lemma holds for $e-1$ extensions and consider the next \rpisiE-monomial $t_e$ with $\sigma(t_e)=\alpha\,t_e+\beta$. 
If $t_e$ is an \rpiE-monomial, then $\beta=0$ and $\alpha=u\,m$ with $u\in\FF(x)^*$, $m=t_1^{z_1}\dots t_{e-1}^{z_{e-1}}$ with $z_i\in\ZZ$; here $z_i=0$ for all $1\leq i<e$ if $t_i$ is a \sigmaE-monomial. Note that $\sigma^k(t_e)=\alpha_k t_e+\beta_k$ with $\beta_k=0$ and $\alpha_k=\prod_{i=0}^{k-1}\sigma^i(u\,m)$ if $k\geq0$ and $\alpha_k=\prod_{i=1}^{-k}\sigma^{-i}(u^{-1}\,m^{-1})$ if $k<0$. Since $u$ has $x$-degree{$\leq d$} (it $t_e$ is an \rE-monomial, $u\in\FF^*$ is a root of unity), the induction assumption can be applied and it follows that $\den(\alpha_k)$ has $x$-degree{$\leq d$} for any $k\in\ZZ$. Otherwise, suppose that $t_e$ is a \sigmaE-monomial with $\alpha=1$ and $\beta=\sigma(t_e)-t_e\in\FF(x)\lr{t_1}\dots\lr{t_{e-1}}$. Then $\sigma^k(t_e)=\alpha_k\,t_e+\beta_k$ with $\alpha_k=1$ and $\beta_k=\sum_{i=0}^{k-1}\sigma^i(\beta)$ if $k\geq0$ and $\beta_k=-\sum_{i=1}^{k}\sigma^{-i}(\beta)$ if $k<0$. Since $\den(\beta)$ has $x$-degree{$\leq d$}, we can apply again the induction assumption and $\den(\beta_k)$ has $x$-degree$<d$ for any $k\in\ZZ$.
Now consider $f=\sum_{i}f_i\,t_e^i\in\EE$ with $f_i\in\FF(x)\lr{t_1}\dots\lr{t_{e-1}}$. Then $\sigma^k(f)=\sum_{i}\sigma^k(f_i)(\alpha_k\,t_e+\beta_k)^i$
where all components have $x$-degrees{$\leq d$}. 
Thus $\sigma^k(f)$ has $x$-degree{$\leq d$}.\\
(2) Let $f=\frac{a}{b\,c}$ with $a\in\FF[x]\lr{t_1}\dots\lr{t_e}$ and $\den(f)=b\,c$ as claimed in statement~2.
Let $k\in\ZZ$ and consider $\sigma^k(f)=\frac{A}{B\,C}$ with $\den(\sigma^k(f))=B\,C$ where $C$ contains precisely the period $0$ irreducible factors having $x$-degrees larger than $d$ and $B$ has $x$-degree{$\leq d$}.  By statement~1 it follows that $\sigma^{k}(a)=\frac{a'}{b'}$ with $a'\in\FF[x]\lr{t_1}\dots\lr{t_e}$ and $b'=\den(\sigma^{k}(a))\in\FF[x]$ has $x$-degree{$\leq d$}. Thus $\sigma^k(f)=\frac{\sigma^k(a)}{\sigma^k(b)\,\sigma^k(c)}=\frac{a'}{b'\sigma^{k}(b)\sigma^{k}(c)}$ where $\sigma^{k}(c)$ contains all irreducible period $0$ factors of $\den(f)$ whose degree is larger than $d$ and $b'\sigma^{k}(b)$ has $x$-degree{$\leq d$}. Note that cancellation might happen. However, $C\mid\sigma^k(c)$. Now consider $f=\frac{\sigma^{-k}(A)}{\sigma^{-k}(B)\sigma^{-k}(C)}$. Similarly, we get $f=\frac{A'}{B'\sigma^{-k}(C)}$ with $A'\in\FF[x]\lr{t_1}\dots\lr{t_e}$ and $B'\in\FF[x]$ has $x$-degree{$\leq d$}. This implies that $c\mid\sigma^{-k}(C)$ and thus $\sigma(c)\mid C$. Consequently $c=C\,u$ for some $u\in\FF^*$ and the statement is proven.\\
(3) Write $g=\frac{a}{b\,c}$ with $a\in\FF[x]\lr{t_1}\dots\lr{t_e}$ and $\den(g)=b\,c$ with $b,c\in\FF[x]$ were $b$ has $x$-degree{$\leq d$} and $c$ contains all period $0$ irreducible factors whose $x$-degrees are larger than $d$. Then $\sigma(g)=\frac{A}{B\,\sigma(c)}$ with $\den(\sigma(g))=B\,\sigma(c)$ where $B\in\FF[x]$ has $x$-degree{$\leq d$} by statement~2. (i) Suppose $q\mid\den(g)$. Thus $q\mid c$, hence $\sigma(q)\mid\sigma(c)$ and therefore $\sigma(q)\mid\den(\sigma(g))$. By the second assumption $\sigma(q)\nmid\den(g)$ it follows that $\sigma(q)\mid\den(\sigma(g)-g)$.\\ 
(ii) If $\sigma^k(q)$ is no factor of $\den(g)$ for any $k\in\ZZ$, then it is no factor of $c$ and thus of $\sigma(c)$. Consequently it is not a factor in $\den(\sigma(g))$. In particular, it cannot be a factor in $\den(\sigma(g)-g)$.
\end{proof}

\section{Refined representations}\label{Sec:RefinedRep}

We start with the following definition and lemmas to get a normalized representation of the denominator of a given input summand.

\begin{definition}\label{Def:SpecialSetsforD}
	\normalfont
	Let $\dfield{\FF(x)}{\sigma}$ be a \pisiE-field extension of $\dfield{\FF}{\sigma}$ and $d\in\NN$. 
	We call a finite set $Q\subseteq\FF[x]$ of monic irreducible polynomials a \emph{$(d,x)$-set} if the degrees are larger than $d$, they have period $0$ and are pairwise $\sigma$-coprime.
	Let $f\in\FF[x]\setminus\{0\}$. Then a $(d,x)$-set $Q$ is called \emph{$(d,f)$-complete} if for any irreducible factor $h$ of $f$ with $\deg_x(h)>d$ there are $q\in Q$ and $k\in\ZZ$ with $\sigma^k(q)/h\in\FF$.
\end{definition}

In the following we require that one can solve Problem~SE; for algorithmic details see Thm.~\ref{Thm:Computable} below.
\medskip

\noindent\textbf{Problem SE in $\dfield{\FF(x)}{\sigma}$} (Shift Equivalence) \textit{Given} a \pisiE-field extension $\dfield{\FF(x)}{\sigma}$ of $\dfield{\FF}{\sigma}$ and irreducible $f,g\in\FF[x]$. \textit{Decide} constructively if there is a $k\in\ZZ$ with $\sigma^k(f)/g\in\FF$.

\begin{lemma}\label{Lemma:UpdateCompleteSet}
	Let $\dfield{\FF(x)}{\sigma}$ be a \pisiE-field extension of $\dfield{\FF}{\sigma}$ in which one can solve Problem~SE and can factorize polynomials. Let $f\in\FF[x]\setminus\{0\}$, $d\in\NN$ and $Q\subseteq\FF[x]$ be an $(d,x)$-set. Then one can compute a set $Q'\supseteq Q$ which is $(d,f)$-complete.
\end{lemma}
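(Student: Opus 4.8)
The plan is to factor $f$ into monic irreducible factors, discard those of $x$-degree $\leq d$, and then for each remaining factor decide—using the solvability of Problem~SE—whether it is $\sigma$-equivalent to some polynomial already present in $Q$ (after normalizing to a monic representative); if so, nothing needs to be added for that factor, and if not, we throw the (monic, period~$0$, $x$-degree $>d$) factor into a growing set, taking care to also test new factors against each other so the result remains a $(d,x)$-set.

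More precisely, first I would write $f = c\, h_1^{m_1}\cdots h_r^{m_r}$ with $c\in\FF^*$, the $h_i\in\FF[x]$ monic irreducible and pairwise coprime, and $m_i\in\NP$; this is possible since $\dfield{\FF(x)}{\sigma}$ permits polynomial factorization by hypothesis. By Lemma~\ref{Lemma:ShiftProp}(1) the only monic irreducible polynomial of period $1$ is $x$ itself, and $\deg_x(x)=1\leq d$ whenever $d\geq1$; in the remaining corner case $d=0$ the monomial $x$ has $x$-degree $\leq d$ by the convention fixed before Definition~\ref{Def:SpecialSetsforD} (period~$1$ factors are exempt from the degree bound), so in either case any $h_i$ with $\deg_x(h_i)>d$ is automatically of period $0$. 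Let $h_{i_1},\dots,h_{i_s}$ be exactly those factors with $\deg_x(h_{i_j})>d$.

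Now I would build $Q'$ incrementally, starting from $Q' := Q$. For $j=1,\dots,s$ in turn, test whether there exist $p\in Q'$ and $k\in\ZZ$ with $\sigma^k(p)/h_{i_j}\in\FF$; this is exactly Problem~SE for the pair $(p,h_{i_j})$, applied for each of the finitely many $p$ currently in $Q'$, and it is decidable constructively by assumption. If some such $p$ is found, $h_{i_j}$ is already "covered" and we leave $Q'$ unchanged. If no such $p$ exists, then $h_{i_j}$ is $\sigma$-coprime to every element of $Q'$, it is monic, irreducible, of period $0$, and of $x$-degree $>d$; hence $Q'\cup\{h_{i_j}\}$ is again a $(d,x)$-set, and we set $Q' := Q'\cup\{h_{i_j}\}$. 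After the loop terminates, $Q'$ is a $(d,x)$-set with $Q'\supseteq Q$, and by construction for every irreducible factor $h$ of $f$ with $\deg_x(h)>d$ (i.e. every $h_{i_j}$) there is $p\in Q'$ and $k\in\ZZ$ with $\sigma^k(p)/h\in\FF$, so $Q'$ is $(d,f)$-complete.

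The only genuinely nontrivial ingredient is the decidability of Problem~SE, which is assumed in the hypothesis (and addressed later in Theorem~\ref{Thm:Computable}); everything else is bookkeeping. The one subtlety worth stating explicitly is \emph{why the $h_{i_j}$ that get adjoined remain pairwise $\sigma$-coprime}: this is ensured precisely because we test each new factor not only against the original $Q$ but against all elements added so far, so a factor is adjoined only when it is $\sigma$-coprime to everything currently in $Q'$. Transitivity of $\sigma$-equivalence (if $\sigma^k(p)/q\in\FF$ and $\sigma^\ell(q)/h\in\FF$ then $\sigma^{k+\ell}(p)/h\in\FF$) then guarantees that once a factor is covered by some $p\in Q'$ it stays covered as $Q'$ grows, so the final set is simultaneously a $(d,x)$-set and $(d,f)$-complete.
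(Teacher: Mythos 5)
Your proposal is correct and follows essentially the same route as the paper's proof: factorize $f$, isolate the period-$0$ irreducible factors of degree greater than $d$, and loop over them, using Problem~SE to test each against the current (growing) set $Q'$ and adjoining it only when no $\sigma$-equivalent element is already present. The extra remarks you supply (that degree $>d$ factors other than $x$ are automatically period $0$, and that transitivity of $\sigma$-equivalence keeps covered factors covered) are correct elaborations of points the paper leaves implicit.
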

\begin{proof}
	Compute all irreducible, pairwise coprime, period $0$ factors $f_1,\dots,f_m\in\FF[x]$ of $f$ with $\deg_x(f_i)>d$. If $m=0$, $Q$ is the desired result. Otherwise, set $Q':=Q$ and proceed for each $i=1\dots m$ and check if there is a $q\in Q'$ and $k\in\ZZ$ with $\sigma^k(f_i)/q$; if there is none, set $Q':=Q'\cup\{f_i\}$. The obtained $Q'$ is $(d,f)$-complete.
\end{proof}

Given these notions, we obtain the following representation; it can be considered as a variant of partial fraction decomposition and is connected to constructions given~\cite{Abramov:75,Paule:95,Abramov:03,BChen:05,Schneider:07d,CSFFL:15}.

\begin{lemma}\label{Lemma:TransformToReducedForm}
Let $\dfield{\FF(x)}{\sigma}$ be a \pisiE-field extension of $\dfield{\FF}{\sigma}$ and let $\dfield{\EE}{\sigma}$ be a simple \rpisiE-extension of $\dfield{\FF(x)}{\sigma}$ with $x$-degree{$\leq d$} with $d\in\NN$ and $\EE=\FF(x)\lr{t_1}\dots\lr{t_e}$. Let $f\in\EE$ and let $Q=\{q_1,\dots,q_r\}$ be $(d,\den(f))$-complete. Then there are $f',g\in\EE$ s.t.\
\begin{equation}\label{Equ:RefinedTele}
\sigma(g)-g+f'=f
\end{equation}
where $f'$ can be written in the \emph{$\sigma$-reduced form}
\begin{equation}\label{Equ:NormalForm}
f'=\frac{p_1}{q_1^{n_1}}+\frac{p_r}{q_r^{n_r}}+\frac{p}{q}
\end{equation}
with the following ingredients: 
\begin{enumerate}[topsep=0pt, partopsep=0pt, leftmargin=7pt, itemindent=5pt,label={\arabic*.}]
\item $n_1,\dots,n_r\in\NP$,
\item $q\in\FF[x]\setminus\{0\}$ with $x$-degree {$\leq d$}, 
\item $p\in\FF[x]\lr{t_1}\dots\lr{t_e}$,
\item and $p_1,\dots,p_r\in\FF[x]\lr{t_1}\dots\lr{t_e}$ with $\deg_x(p_i)<\deg_x(q_i)\,n_i$. 
\end{enumerate}
If one can factorize polynomials in $\FF[x]$ and can solve Problem~SE in a computable $\dfield{\FF(x)}{\sigma}$, then $g$ and $f'$ with~\eqref{Equ:NormalForm} can be computed.
\end{lemma}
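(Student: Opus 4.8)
The plan is to reduce everything to the shift behaviour of irreducible period-$0$ factors with $x$-degree $>d$. First I would write $f$ in reduced representation $f = \frac{a}{D}$ with $D = \den(f)$ monic, and factor $D = B\,\prod_{j} g_j^{m_j}$ where $B$ collects all irreducible period-$1$ factors and all period-$0$ factors of $x$-degree $\leq d$, and the $g_j$ are the irreducible period-$0$ factors with $\deg_x(g_j) > d$. Since $Q$ is $(d,\den(f))$-complete, each such $g_j$ satisfies $\sigma^{k_j}(q_{i(j)})/g_j \in \FF$ for a unique pair $(i(j), k_j)$ (uniqueness by Lemma~\ref{Lemma:ShiftProp}(2), since these factors are not of the form $c\,x^m$). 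The idea is: a factor $q_i^{n_i}$ in the denominator of some $\sigma^{k}(f)$ can always be shifted back to become a factor of $q_i^{n_i}$ itself by applying $\sigma^{-k}$, at the cost of a term $\sigma(g)-g$; this is the standard telescoping trick $\sum_{j=0}^{k-1}(\sigma^{j+1}(h)-\sigma^j(h)) = \sigma^k(h)-h$.

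Concretely, for each $i$ let $S_i = \{k \in \ZZ : \sigma^k(q_i) \text{ divides some } g_j\}$ (a finite set by the completeness and coprimality of $Q$, together with Lemma~\ref{Lemma:ShiftProp}(2)). Group the partial-fraction decomposition of $f$ over $\FF(x)$, but done inside $\EE$: write $f = \sum_{i=1}^r \sum_{k \in S_i} \frac{b_{i,k}}{\sigma^k(q_i)^{e_{i,k}}} + (\text{a part whose denominator has }x\text{-degree} \leq d)$, where the numerators $b_{i,k} \in \FF[x]\lr{t_1}\dots\lr{t_e}$ and $e_{i,k} \in \NP$; this decomposition is legitimate because the $\sigma^k(q_i)$ over all $(i,k)$ with $k\in S_i$ are pairwise coprime (distinct monic irreducibles: within a fixed $i$ they are distinct shifts, across different $i$ they are $\sigma$-coprime hence certainly unequal), so ordinary partial fractions applies coefficientwise in the $\vect{t}^{\vect{n}}$-basis. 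For each fixed $i$ and each $k\in S_i$ with $k \neq 0$, I replace $\frac{b_{i,k}}{\sigma^k(q_i)^{e_{i,k}}}$ by $\sigma^{-k}\!\big(\frac{b_{i,k}}{\sigma^k(q_i)^{e_{i,k}}}\big) + \big(\sigma^0 - \sigma^{-k}\big)(\cdots)$, i.e.\ $\frac{\sigma^{-k}(b_{i,k})}{q_i^{e_{i,k}}} + (\sigma(g_{i,k}) - g_{i,k})$ for a suitable $g_{i,k}\in\EE$ obtained as a telescoping sum of $\sigma$-iterates; by Lemma~\ref{Lemma:BoundLargeFactors}(1) the extra denominators introduced by these shifts all have $x$-degree $\leq d$, so they can be absorbed into the $\frac{p}{q}$ part. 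Summing the resulting $\frac{\sigma^{-k}(b_{i,k})}{q_i^{e_{i,k}}}$ over $k\in S_i$ and taking a common power $n_i := \max_{k\in S_i} e_{i,k}$ gives a single term $\frac{p_i}{q_i^{n_i}}$; reducing $p_i$ modulo $q_i^{n_i}$ (absorbing the quotient into $\frac{p}{q}$, which only enlarges $q$ by a factor of $x$-degree $\leq d$) enforces $\deg_x(p_i) < n_i\deg_x(q_i)$. Setting $g = \sum_{i,k} g_{i,k}$ and collecting the leftover into $\frac{p}{q}$ with $q$ of $x$-degree $\leq d$ yields~\eqref{Equ:NormalForm} with all four properties.

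The main obstacle is purely bookkeeping rather than conceptual: one must check that the whole construction stays inside $\EE$ and that every auxiliary denominator created along the way genuinely has $x$-degree $\leq d$ — this is exactly what Lemma~\ref{Lemma:BoundLargeFactors}(1) guarantees for the $\sigma^{\pm k}$-iterates, and the closure remark before Definition~\ref{Def:DegreeOptimalExt} (sums and products of elements with small-$x$-degree denominators have small-$x$-degree denominators) handles the rest — and that the partial-fraction step respects the $\vect{t}^{\vect{n}}$-module structure, which it does because partial fractions over $\FF(x)$ can be carried out componentwise on the finitely many $\FF(x)$-coefficients of $f$. For the effectivity claim, each step is algorithmic under the stated hypotheses: factoring $D$ needs polynomial factorization in $\FF[x]$; computing the sets $S_i$ and the pairs $(i(j),k_j)$ needs Problem~SE together with the a priori bound that makes $S_i$ finite; partial fraction decomposition, applying $\sigma$ and $\sigma^{-1}$, polynomial division, and lcm are standard operations in the computable difference ring $\dfield{\FF(x)}{\sigma}$ lifted coefficientwise to $\EE$. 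Hence $g$ and $f'$ in $\sigma$-reduced form~\eqref{Equ:NormalForm} can be computed, which is the assertion.
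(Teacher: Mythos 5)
Your plan is correct and follows essentially the same route as the paper's proof: a partial-fraction separation of the irreducible period-$0$ factors of $x$-degree $>d$, uniqueness of the shift $k$ via Lemma~\ref{Lemma:ShiftProp}.2, the telescoping trick $\sigma^k(h)-h=\sigma(\gamma)-\gamma$ to move each such factor onto its representative $q_i\in Q$, Lemma~\ref{Lemma:BoundLargeFactors}.1 plus a Bezout split to push the auxiliary small-degree denominators into $p/q$, and a final polynomial division to enforce $\deg_x(p_i)<n_i\deg_x(q_i)$. The only difference is organizational (you index all shifts $(i,k)$ at once where the paper processes one $\sigma$-equivalence class at a time and recurses), and the effectivity argument matches as well.
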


\begin{proof}
Write $f=\frac{a}{b}$ with $a\in\FF[x]\lr{t_1}\dots\lr{t_e}$ and $b\in\FF[x]\setminus\{0\}$ monic in reduced representation. If $\FF$ is computable, this can be accomplished with the Euclidean algorithm. In particular, write $b=q'\,b'$ where $q'\in\FF[x]\setminus\{0\}$ has $x$-degree{$\leq d$} and where
$b'=v_1^{n_1}\dots v_r^{n_r}\in\FF[x]$ with $v_i\in\FF[x]\setminus\{0\}$ are the monic irreducible and period $0$ factors with $\deg_x(v_i)>d$. Note that $Q$ is $(d,b')$-complete.
If $b'=1$, we can take $p=a$, $q=b$, $r=0$ and $g=0$, and we are done. Otherwise, take $s,t\in\FF[x]$ such that $1=s\,b'+t\,q'$; since $\gcd(b',q')=1$, such $s$ and $t$ exist and can be calculated by the extended Euclidean algorithm if $\FF$ is computable. Hence 
$\frac{a}{b}=\frac{s\,a}{q'}+\frac{t\,a}{b'}$. Now we repeat this tactic to $\frac{s\,a}{b'}$ iteratively to separate the coprime factors $v_i^{n_i}$ in the denominator of $b'$ and get 

\vspace*{-0.2cm}

\begin{equation}\label{Equ:PFDStep}
\frac{a}{b}=\frac{s\,a}{q'}+\frac{u_1}{v_1^{n_1}}+\dots+\frac{u_r}{v_{r}^{n_r}}
\end{equation}

\vspace*{-0.1cm}

\noindent with $u_i\in\FF[x]\lr{t_1}\dots\lr{t_e}$. W.l.o.g.\ suppose that $v_1,\dots,v_k$ are all those factors that are $\sigma$-equivalent to $q_1\in Q$. Hence for all $1\leq i\leq k$, $c_i:=\frac{\sigma^{s_i}(q_1)}{v_i}\in\FF$ for some uniquely determined $s_i\in\ZZ$; see Lemma~\ref{Lemma:ShiftProp}.2. Define $f'_i,\gamma_i\in\EE$ with $f'_i=\frac{\sigma^{-s_i}(u_i\,c_i^{n_i})}{q_1^{n_i}}$ and
$\gamma_i=\sum_{j=0}^{s_i-1}\sigma^j(\frac{\sigma^{-s_i}(u_i\,c_i^{n_i})}{q_1^{n_i}})$ if $s_i\geq0$ and
$\gamma_i=-\sum_{j=1}^{-s_i}\sigma^{-j}(\frac{\sigma^{-s_i}(u_i\,c_i^{n_i})}{q_1^{n_i}})$ if $s_i<0$.
Then by telescoping and $\sigma^{s_i}(q_1)=c_i\,v_i$ we get 
$$\sigma(\gamma_i)-\gamma_i+f'_i=\sigma^{s_i}\big(\sigma^{-s_i}(u_i\,c_i^{n_i})/q_1^{n_i}\big)=\frac{u_i\,c_i^{n_i}}{\sigma^{s_i}(q_1)^{n_i}}=\frac{u_i}{v_i^{n_i}}.$$
Since $u_i\,c_i^{n_i}\in\FF[x]\lr{t_1}\dots\lr{t_e}$, it follows that $\den(\sigma^{-s_i}(u_i\,c_i^{n_i}))$ has no irreducible factors with $x$-degrees larger than $d$ by Lemma~\ref{Lemma:BoundLargeFactors}.1. Thus we can write
$f'_i=\frac{\alpha_i}{\beta_i\,q_1^{n_1}}$ with $\alpha_i\in\FF[x]\lr{t_1}\dots\lr{t_e}$ and $\beta_i\in\FF[x]\setminus\{0\}$ whose irreducible factors have $x$-degrees{$\leq d$}. Since $\gcd(q_1^{n_1},\beta_i)=1$, we can write  $s'_i\,\beta_i+t'_i\,q_1^{n_1}=1$ with $s'_i,t'_i\in\FF[x]$ (and can again compute it if $\FF$ is computable). Hence $f'_i=\phi_i+h_i$ with $\phi_i=\frac{\alpha_i\,s'_i}{q_1^{n_1}}$ and $h_i=\frac{\alpha_i\,t'_i}{\beta_i}$.
Now take $g':=\gamma_1+\dots+\gamma_k\in\EE$. Further let $\frac{H_1}{H_2}=h_1+\dots+h_k$ where $H_1\in\FF[x]\langle t_1,\dots,t_e\rangle$ and $H_2\in\FF[x]\setminus\{0\}$ which has $x$-degree{$\leq d$}. In addition, define $w_1\in\FF[x]\langle t_1,\dots,t_e\rangle$ with $\frac{w_1}{q_1^{n_1}}=\phi_1+\dots+\phi_k$. Then 
$\sigma(g')-g'+\frac{w_1}{q_1^{n_1}}+\frac{H_1}{H_2}=\frac{u_1}{v_1^{n_1}}+\dots+\frac{u_k}{v_k^{n_k}}.$
Since $q_1^{n_1}\in\FF[x]\setminus\{0\}$, the leading coefficient of $q_1^{n_1}$ is a unit. Thus we can compute $w_1=\tilde{q}_1\,q_1^{n_1}+p_1$ with $p_1,\tilde{q}_1\in\FF[x]\langle t_1,\dots,t_e\rangle$ with $\deg_x(p_1)<\deg_x(q_1^n)$ by polynomial division (and considering $x$ as the top variable). This gives
$\frac{w_1}{q_1^{n_1}}=\frac{p_1}{q_1^{n_1}}+\tilde{q}_1$. Define $p''\in\FF[x]\langle t_1,\dots,t_e\rangle$ and $q''\in\FF[x]\setminus\{0\}$ with
$\frac{p''}{q''}=\frac{a\,a}{q'}+\tilde{q}_1+\frac{H_1}{H_2}$; note: $q''$ has only factors with $x$-degree{$\leq d$}. Thus plugging the ingredients into~\eqref{Equ:PFDStep} gives

\vspace*{-0.6cm}

$$\frac{a}{b}=\sigma(g')-g'+\frac{p''}{q''}+\frac{p_1}{q_1^n}+\overbrace{\frac{u_{k+1}}{v_{k+1}^{m_{k+1}}}+\dots+\frac{u_r}{v_{r}^{m_r}}}^{R}.$$
Repeating this transformation to $R$ produces the desired representation~\eqref{Equ:NormalForm}. If one can factorize polynomials in $\FF[x]$ and one can solve Problem~SE in $\dfield{\FF(x)}{\sigma}$, this representation can be calculated.
\end{proof}

\begin{example}\label{Exp:SigmaReduceForm}
1. Take the difference ring $\dfield{\EE}{\sigma}$ from Example~\ref{Exp:DR}.1. Here we can rephrase the summand on the left-hand side of~\eqref{Equ:HarmonicSumId} with
\begin{equation}\label{Equ:fFromHarmonicSum}
f=\frac{-2+x}{10 (
	1+x^2)}
+\frac{h_1 (
	1-4 x-2 x^2)}{10 (
	1+x^2
	)
	(2+2 x+x^2)}
+\frac{h_3 (
	1-4 x-2 x^2)}{5 (
	1+x^2
	)
	(2+2 x+x^2)}\in\EE.
\end{equation}
$Q=\{q_1\}$ with $q_1=1+x^2$ is $(1,f)$-complete. We can compute
\begin{equation}\label{Equ:gFromHarmonicSum}
g=\frac{h_3 (1+2 x)}{10 (
	1+x^2)}
+\frac{h_3 (1+2 x)}{5 (
	1+x^2)}
-\frac{(1+2 x) (
	2+x^2)}{10 x^3 (
	1+x^2)}
\end{equation}
and $f'=\frac{p_1}{q_1}+\frac{p}{q}$ with $p_1=0$ and $\frac{p}{q}=\frac{-2-4 x+x^2}{10 x^3}$ such that~\eqref{Equ:RefinedTele} holds. \\
2. Take the difference ring $\dfield{\EE}{\sigma}$ from Example~\ref{Exp:DR}.3 and consider
\begin{equation}\label{Exp:VectorF}
\vect{f}=(f_1,f_2,f_3)=\Big(\frac{h_1(
	x
	+z
	+x^2 z)}{x(1+x^2)},\frac{h_1}{2+2 x+x^2},\frac{x z}{1+x^2}\Big)\in\EE^3.
\end{equation}
$Q=\{q_1\}$ with $q_1=1+x^2$ is $(1,f_i)$-complete for $1\leq i\leq 3$. For $\sigma(g_i)-g_i+f'_i=f_i$ we get $g_i\in\EE$ and the $\sigma$-reduced form $f'_i=\frac{p_{i,1}}{q_1}+\rho_i$ with $\rho_i=p'_i/q'_i$.
Namely,	
$p_{1,1}=h_1$, $\rho_1=\frac{h_1 z}{x}$, $g_1=0$, and
$p_{2,1}=h_1+x z$, $\rho_2=-\frac{z}{x}$, $g_2=\frac{h_1 x-z}{x(1+x^2)}$, and
$p_{3,1}=x z$, $\rho_3= 0$, $g_3=0$. 
\end{example}


\begin{lemma}\label{Lemma:PiCoeff}
Let $\dfield{\EE}{\sigma}$ be a simple \rpiE-extension of $\dfield{\HH}{\sigma}$ with $\EE=\HH\lr{t_1}\dots\lr{t_e}$. Let $f=\sum_{\vect{i}\in\ZZ^e}f_\vect{i}\,\vect{t}^{\vect{i}}\in\EE$ and $g=\sum_{\vect{i}\in\ZZ^e}g_\vect{i}\,\vect{t}^{\vect{i}}\in\EE$ with $\sigma(g)-g=f$. Then for each reduced $\vect{i}\in\ZZ^e$ there is a unique reduced $\vect{j}\in\ZZ^e$ with $u=\frac{\sigma(\vect{t}^{\vect{i}})}{\vect{t}^{\vect{j}}}\in\HH^*$. Conversely, for each reduced $\vect{j}\in\ZZ^e$ there is a unique reduced $\vect{i}\in\ZZ^e$ 
with $u=\frac{\sigma(\vect{t}^{\vect{i}})}{\vect{t}^{\vect{j}}}\in\HH^*$.
For such a tuple $(\vect{i},\vect{j})$ with $u=\frac{\sigma(\vect{t}^{\vect{i}})}{\vect{t}^{\vect{j}}}\in\HH^*$ we have $u\,\sigma(g_{\vect{i}})-g_{\vect{j}}=f_{\vect{j}}$. 
\end{lemma}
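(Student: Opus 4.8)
The statement packages two facts: that $\sigma$ permutes the power-product monomials up to factors from $\HH^*$, and does so bijectively, and that once this is known the displayed recursion is just a coefficient comparison in $\sigma(g)-g=f$ relative to the $\HH$-basis of power products. The plan is to establish this in three steps, plus one auxiliary inclusion that requires a genuine argument. First I would introduce the multiplicative subgroup $T\leq\EE^*$ generated by $t_1,\dots,t_e$, where $t_i^{-1}:=t_i^{\ord(t_i)-1}$ for an \rE-monomial and $t_i^{-1}$ is the Laurent inverse for a \piE-monomial. By the uniqueness of the power-product representation over reduced exponent tuples, $\vect{n}\mapsto\vect{t}^{\vect{n}}$ is a bijection from the reduced tuples in $\ZZ^e$ onto $T$; and since $\vect{t}^{\vect{n}}\in\HH$ forces $\vect{n}=\vect{0}$ for reduced $\vect{n}$, we get $T\cap\HH=\{1\}$, hence $T\cap\HH^*=\{1\}$. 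Because $\dfield{\EE}{\sigma}$ is a simple \rpiE-extension, $\frac{\sigma(t_i)}{t_i}=u_i\,t_1^{m_{i,1}}\cdots t_{i-1}^{m_{i,i-1}}$ with $u_i\in\HH^*$, so $\sigma(t_i)\in\HH^*\,T$ for all $i$ and therefore $\sigma(T)\subseteq\HH^*\,T$; reading the same relations as equations for $\sigma^{-1}(t_i)$ and inducting on $i$ gives likewise $\sigma^{-1}(T)\subseteq\HH^*\,T$. Since $\sigma$ restricts to an automorphism of $\HH$ (in particular $\sigma(\HH^*)=\HH^*$), $\sigma$ maps $\HH^*\,T$ bijectively onto itself and fixes $\HH^*$ setwise, so it induces a group automorphism $\bar\sigma$ of $G:=\HH^*\,T/\HH^*$, under which $\vect{n}\mapsto\vect{t}^{\vect{n}}\HH^*$ identifies the set of reduced tuples with $G$.

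Next I would derive the two ``unique partner'' statements. Given a reduced $\vect{i}$, let $\vect{j}$ be the unique reduced tuple representing the class $\bar\sigma(\vect{t}^{\vect{i}}\HH^*)$; uniqueness holds because distinct reduced tuples give distinct classes, by $T\cap\HH^*=\{1\}$. Unwinding the definition, $u:=\frac{\sigma(\vect{t}^{\vect{i}})}{\vect{t}^{\vect{j}}}\in\HH^*$, and $\vect{j}$ is the only reduced tuple with this property (two would force $\vect{t}^{\vect{j}}/\vect{t}^{\vect{j}'}\in\HH^*$, hence $\vect{j}=\vect{j}'$). Running the same argument with $\bar\sigma^{-1}$ in place of $\bar\sigma$ yields, for each reduced $\vect{j}$, a unique reduced $\vect{i}$ with $\frac{\sigma(\vect{t}^{\vect{i}})}{\vect{t}^{\vect{j}}}\in\HH^*$. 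In particular the partner assignment $\vect{i}\leftrightarrow\vect{j}$ is a bijection of the set of reduced tuples onto itself.

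Then the last identity is a coefficient comparison. Write $g=\sum_{\vect{i}}g_{\vect{i}}\,\vect{t}^{\vect{i}}$ over reduced $\vect{i}$. With $\vect{j}(\vect{i})$ the partner of $\vect{i}$ and $u_{\vect{i}}:=\frac{\sigma(\vect{t}^{\vect{i}})}{\vect{t}^{\vect{j}(\vect{i})}}\in\HH^*$, we have $\sigma(\vect{t}^{\vect{i}})=u_{\vect{i}}\,\vect{t}^{\vect{j}(\vect{i})}$, hence $\sigma(g)=\sum_{\vect{i}}u_{\vect{i}}\,\sigma(g_{\vect{i}})\,\vect{t}^{\vect{j}(\vect{i})}$. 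Reindexing this finite sum by $\vect{j}$ via the bijection just obtained and subtracting $g$, the coefficient of $\vect{t}^{\vect{j}}$ in $\sigma(g)-g$ is $u\,\sigma(g_{\vect{i}})-g_{\vect{j}}$, where $\vect{i}$ is the partner of $\vect{j}$ and $u=\frac{\sigma(\vect{t}^{\vect{i}})}{\vect{t}^{\vect{j}}}$. Since the reduced power products form an $\HH$-basis of $\EE$ and $f=\sigma(g)-g=\sum_{\vect{j}}f_{\vect{j}}\,\vect{t}^{\vect{j}}$, comparing coefficients gives $u\,\sigma(g_{\vect{i}})-g_{\vect{j}}=f_{\vect{j}}$, which is the assertion.

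The hard part is the auxiliary inclusion $\sigma^{-1}(T)\subseteq\HH^*\,T$ used in the first step, equivalently the surjectivity of $\bar\sigma$; note that $G$ is a finitely generated but generally infinite abelian group, so injectivity of $\bar\sigma$ alone does not suffice. I would prove it by induction along the tower: from $\sigma(t_i)=u_i\,t_1^{m_{i,1}}\cdots t_{i-1}^{m_{i,i-1}}\,t_i$ one solves for $\sigma^{-1}(t_i)$ in terms of $t_i$ and the already established $\sigma^{-1}(t_l)$ with $l<i$, which all lie in $\HH^*\,T$. Alternatively one records the action of $\sigma$ on $T$ as a unipotent upper-triangular integer matrix and inverts it over $\ZZ$; there the one point to check is that the orders of the \rE-monomials are respected, which follows from $\sigma(t_i)^{\ord(t_i)}=1$. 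Everything else is standard free-module bookkeeping.
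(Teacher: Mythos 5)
Your proposal is correct and follows essentially the same route as the paper's proof: uniqueness of the partner tuple from the fact that $\vect{t}^{\vect{n}}\in\HH^*$ forces $\vect{n}=\vect{0}$ for reduced $\vect{n}$, the converse direction via $\sigma^{-1}$, and then coefficient comparison with respect to the reduced power products. The only difference is presentational — you package the argument as an induced automorphism of $\HH^*T/\HH^*$ and explicitly verify $\sigma^{-1}(T)\subseteq\HH^*T$ by induction along the tower, a point the paper subsumes under ``by definition''.
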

\begin{proof}
Let $\vect{i}\in\ZZ^e$ be reduced and take $h=\sigma(\vect{t}^{\vect{i}})$. By definition we have $h=u\,\vect{t}^{\vect{j}}$ with $u\in\HH^*$ and a reduced $\vect{j}\in\ZZ^e$, i.e., $\frac{\sigma(\vect{t}^{\vect{i}})}{\vect{t}^{\vect{j}}}=u\in\HH^*$.  
Suppose that $\frac{\sigma(\vect{t}^{\vect{i}})}{\vect{t}^{\vect{k}}}=u'\in\HH^*$ with another reduced $\vect{k}\in\ZZ^e$. Then $\vect{t}^{\vect{j}-\vect{k}}=u'/u\in\HH^*$ which implies that $\vect{j}=\vect{k}$, i.e., $\vect{j}$ is uniquely determined.
Similarly, let $\vect{j}\in\ZZ^e$ be reduced and take $h=\sigma^{-1}(\vect{t}^{\vect{j}})$. By definition we have $h=u'\,\vect{t}^{\vect{i}}$ with $u'\in\HH^*$ and $\vect{i}\in\ZZ^e$ reduced, i.e.,  $\vect{t}^{\vect{i}}/\sigma^{-1}(\vect{t}^{\vect{j}})=1/u'\in\HH^*$ and thus 
$\frac{\sigma(\vect{t}^{\vect{i}})}{\vect{t}^{\vect{j}}}=u$ with $u=\sigma(1/u')\in\HH^*$. Further,
suppose that $\frac{\sigma(\vect{t}^{\vect{k}})}{\vect{t}^{\vect{j}}}=u'\in\HH^*$ with another reduced $\vect{k}\in\ZZ^e$. Then $\sigma(\vect{t}^{\vect{i}-\vect{k}})=u/u'$ and thus $\vect{t}^{\vect{i}-\vect{k}}=\sigma^{-1}(u/u')\in\HH^*$. This implies $\vect{i}=\vect{k}$ and proves the uniqueness of $\vect{i}$.
Now take such a tuple $(\vect{i},\vect{j})$ of reduced elements with $u=\frac{\sigma(\vect{t}^{\vect{i}})}{\vect{t}^{\vect{j}}}\in\HH^*$.
By coefficient comparison in $\sigma(g)-g=f$ w.r.t.\ $\vect{t}^{\vect{j}}$ we get
  $\sigma(g_{\vect{i}}\,\vect{t}^{\vect{i}})-g_{\vect{j}}\,\vect{t}^{\vect{j}}=f_{\vect{j}}\,\vect{t}^{\vect{j}}$. With $\sigma(g_{\vect{i}}\,\vect{t}^{\vect{i}})=\sigma(g_{\vect{i}})\,u\,\vect{t}^{\vect{j}}$ and dividing through $\vect{t}^{\vect{j}}$ we get $u\,\sigma(g_{\vect{i}})-g_{\vect{j}}=f_{\vect{j}}$. 
\end{proof}

\begin{lemma}\label{Lemma:BaseCasePiExt}
Let $\dfield{\FF(x)}{\sigma}$ be a \pisiE-field extension of $\dfield{\FF}{\sigma}$ and $\dfield{\EE}{\sigma}$ be a simple \rpiE-extension of $\dfield{\FF(x)}{\sigma}$ with $x$-degree{$\leq d$}. Let $f\in\EE$ with $\den(f)=v\,q^n$ where $n\in\NP$,
$q\in\FF[x]$ is an irreducible period $0$ factor with $\deg_x(v)>d$ and $v\in\FF[x]$ does not contain any factor which is $\sigma$-equivalent to $q$.
Then there is no $g\in\EE$ with
$\sigma(g)-g=f.$
\end{lemma}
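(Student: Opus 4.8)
The approach is a proof by contradiction based on a telescoping estimate: telescoping $f$ would force the whole $\sigma$-orbit of $q$ --- an unbounded family of pairwise coprime irreducibles of $x$-degree $\deg_x(q)>d$ --- to sit inside one fixed denominator, and Lemma~\ref{Lemma:BoundLargeFactors}.2 says this is impossible. So suppose, for contradiction, that $g\in\EE$ satisfies $\sigma(g)-g=f$. For every $N\in\NP$, applying $\sigma^k$ to $\sigma(g)-g=f$ and summing over $k=0,\dots,N-1$ telescopes to
\[
\sigma^N(g)-g=\sum_{k=0}^{N-1}\sigma^k(f),
\]
and the plan is to play the two sides of this identity against each other, measuring size by the $x$-degree of the product of the period-$0$ irreducible factors of $x$-degree larger than $d$ occurring in a denominator.

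For the left-hand side I would write $\den(g)=b\,c$ with $b$ of $x$-degree $\leq d$ and $c\in\FF[x]$ the product of all period-$0$ irreducible factors of $\den(g)$ of $x$-degree $>d$. Since $\dfield{\FF(x)}{\sigma}$ is a \pisiE-field extension, $\sigma$ restricts to an $x$-degree- and period-preserving ring automorphism of $\FF[x]$; combined with Lemma~\ref{Lemma:BoundLargeFactors}.2 this shows that the period-$0$, $x$-degree-$>d$ part of $\den(\sigma^N(g))$ equals $\sigma^N(c)$, of $x$-degree $\deg_x(c)$. As $\den(\sigma^N(g)-g)$ divides $\lcm(\den(\sigma^N(g)),\den(g))$, the product of its period-$0$ irreducible factors of $x$-degree $>d$ divides $\lcm(\sigma^N(c),c)$ and hence has $x$-degree at most $2\deg_x(c)$, a bound independent of $N$.

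For the right-hand side I would first collect the elementary facts about the orbit of $q$: since $q$ has period $0$, no shift $\sigma^m(q)$ with $m\neq0$ is an $\FF^*$-multiple of $q$, so $\sigma^0(q),\dots,\sigma^{N-1}(q)$ are pairwise coprime irreducibles, each again of period $0$ and $x$-degree $\deg_x(q)>d$, and $\den(\sigma^k(f))=\sigma^k(v)\,\sigma^k(q)^n$ with $\sigma^k(v)$ coprime to the whole orbit of $q$. The crucial step is that these poles do not cancel in the telescoped sum, i.e.\ $\prod_{k=0}^{N-1}\sigma^k(q)^n\mid\den(\sigma^N(g)-g)$. To show it I would fix $k_0\in\{0,\dots,N-1\}$ and expand $\sigma^N(g)-g=\sum_k\sigma^k(f)$ coefficientwise over $\FF(x)$: because the extension is a \emph{simple} \rpiE-extension of $x$-degree $\leq d$, $\sigma$ permutes the power products $\vect{t}^{\vect{i}}$ up to nonzero $\FF(x)$-factors whose numerator and denominator have $x$-degree $\leq d$ (by Definition~\ref{Def:DegreeOptimalExt} together with Lemma~\ref{Lemma:BoundLargeFactors}.1), hence are coprime to $\sigma^{k_0}(q)$. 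Tracking the power product that carries the $q$-pole of $f$, exactly one coefficient of $\sigma^N(g)-g$ receives a $\sigma^{k_0}(q)$-adic pole --- of order $n$ --- from the $k=k_0$ summand and none from the others; so $\sigma^{k_0}(q)^n$ divides that coefficient's denominator and therefore divides $\den(\sigma^N(g)-g)$. Ranging over $k_0$ and using pairwise coprimality, $\prod_{k=0}^{N-1}\sigma^k(q)^n\mid\den(\sigma^N(g)-g)$; since these are period-$0$ irreducibles of $x$-degree $>d$, the corresponding part of that denominator has $x$-degree at least $N\,n\,\deg_x(q)\geq N\deg_x(q)$.

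Comparing the two bounds yields $N\deg_x(q)\leq 2\deg_x(c)$ for every $N\in\NP$, which is absurd because $\deg_x(q)>d\geq0$; hence no such $g$ exists. I expect the only genuinely delicate point to be the non-cancellation argument of the third paragraph --- keeping track of how $\sigma$ permutes the power products and confirming that the $\sigma^{k_0}(q)$-pole survives in one definite coefficient of the sum --- while the degree bookkeeping and the two appeals to Lemma~\ref{Lemma:BoundLargeFactors} are routine.
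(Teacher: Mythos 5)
Your proof is correct, but it takes a genuinely different route from the paper's. The paper argues locally and extremally: it expands $g=\sum_{\vect{i}}g_{\vect{i}}\vect{t}^{\vect{i}}$, uses Lemma~\ref{Lemma:PiCoeff} to reduce $\sigma(g)-g=f$ to scalar relations $u\,\sigma(g_{\vect{j}})-g_{\vect{i}}=f_{\vect{i}}$ over $\FF(x)$, and then looks at the \emph{maximal} and \emph{minimal} shifts $\lambda$ for which $\sigma^{\lambda}(q)$ divides some coefficient denominator of $g$; the maximal one is forced to be $-1$ (else $\sigma^{\lambda+1}(q)$ with $\lambda+1\neq 0$ would appear in $\den(f)$), and the minimal one, being $\leq-1$, then forces a shifted copy $\sigma^{\lambda}(q)$ with $\lambda<0$ into $\den(f)$, a contradiction. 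Your argument is global and quantitative instead: you iterate the telescoping relation $N$ times, bound the degree of the ``bad part'' of $\den(\sigma^N(g)-g)$ by $2\deg_x(c)$ uniformly in $N$ via Lemma~\ref{Lemma:BoundLargeFactors}.2, and show that the right-hand side accumulates the $N$ pairwise coprime period-$0$ irreducibles $\sigma^k(q)$, each of degree $>d$, which blows up the degree. The delicate point you correctly isolate --- that the $\sigma^{k_0}(q)$-pole survives in one definite coefficient of $\sum_k\sigma^k(f)$ --- is exactly where you use the same structural input as the paper (Lemma~\ref{Lemma:PiCoeff} plus the fact that the cocycle factors $u_{k,\vect{i}}$ have $x$-degree $\leq d$ and are therefore coprime to the orbit of $q$), and your valuation bookkeeping there is sound: the $k=k_0$ summand contributes valuation exactly $-n$ at $\sigma^{k_0}(q)$ while all other summands contribute valuation $\geq 0$. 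What the paper's version buys is brevity and no need for the auxiliary $\lcm$/degree estimates; what yours buys is a more transparent ``infinitely many incompatible poles'' picture that makes the role of period $0$ (infinite, pairwise coprime orbit) explicit. One cosmetic remark: the hypothesis ``$\deg_x(v)>d$'' in the statement is evidently a typo for $\deg_x(q)>d$, and you have (correctly) read it that way.
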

\begin{proof}
Suppose that there is such a $g=\sum_{\vect{i}\in\ZZ^e}g_{\vect{i}}\,\vect{t}^{\vect{i}}\in\EE$  
with $g_{\vect{i}}=\frac{\gamma_{\vect{i}}}{\delta_{\vect{i}}}$ where $\gamma_{\vect{i}}\in\FF[x],\delta_{\vect{i}}\in\FF[x]\setminus\{0\}$ with $\gcd(\gamma_{\vect{i}},\delta_{\vect{i}})=1$.
Write $f=\sum_{\vect{i}\in\ZZ^e}f_\vect{i}\,\vect{t}^{\vect{i}}\in\EE$ with $f_\vect{i}=\frac{a_\vect{i}}{b_\vect{i}}$ with $a_\vect{i},b_\vect{i}\in\FF[x]$ and $\gcd(a_\vect{i},b_\vect{i})=1$ where $b_\vect{i}$ may contain $q$ as factor but not $\sigma^k(q)$ with $k\neq0$.
There must be a denominator $\delta_{\vect{i}}$ that contains $\sigma^{\lambda}(q)$ for some $\lambda\in\ZZ$. Otherwise, we conclude with Lemma~\ref{Lemma:BoundLargeFactors}.3.(ii) that $q\nmid\den(\sigma(g)-g)$, a contradiction. Among all $g_{\vect{j}}\neq0$ take $\vect{j}\in\ZZ^e$ such that $\sigma^{\lambda}(q)\mid\delta_{\vect{j}}$ with $\lambda\in\ZZ$ maximal.  
By Lemma~\ref{Lemma:PiCoeff} we can take $\vect{i}\in\ZZ^e$ with $u=\frac{\sigma(\vect{t}^{\vect{j}})}{\vect{t}^{\vect{i}}}\in\HH^*$ and  $u\,\sigma(g_{\vect{j}})-g_{\vect{i}}=f_{\vect{i}}$. Note that $u=v\,t_1^{z_1}\dots t_e^{z_e}$ with $z_i\in\ZZ$ where $v\in\FF[x]$ has $x$-degree{$\leq d$}. Then $\sigma^{\lambda+1}(q)\mid\sigma(\delta_{\vect{j}})$ but $\sigma^{\lambda+1}(q)\nmid \delta_{\vect{i}}$. Thus 
$\sigma^{\lambda+1}(q)\mid\den(f_{\vect{i}})=b_{\vect{i}}$. Hence $\lambda=-1$ for the maximal choice $\lambda$. Among all $g_{\vect{j}}\neq0$ take $\vect{j}\in\ZZ^e$ such that $\sigma^{\lambda}(q)\mid\delta_{\vect{j}}$ with $\lambda$ minimal. Note that $\lambda<0$ (since the maximal choice is $-1$). By Lemma~\ref{Lemma:PiCoeff} we can take $\vect{j}\in\ZZ^e$ with $u=\frac{\sigma(\vect{t}^{\vect{i}})}{\vect{t}^{\vect{j}}}\in\HH^*$ and  $u\,\sigma(g_{\vect{i}})-g_{\vect{j}}=f_{\vect{j}}$. Similarly, one gets $\sigma^{\lambda}(q)\nmid\den(u\,\sigma(g_{\vect{i}}))$ and we conclude that $\sigma^{\lambda}(q)\mid\den(f_{\vect{j}})=b_{\vect{j}}$ with $\lambda<0$, a contradiction. Thus $g\in\EE$ with $\sigma(g)-g=f$ cannot exist. 
\end{proof}

\noindent Now we can present the main property for \rpisiE-extensions which can be considered as a generalization appearing in~\cite{Abramov:75,Paule:95,Abramov:03,BChen:05,Schneider:07d,CSFFL:15}; there the denominators are split by $\sigma$-equivalent factors.

\begin{proposition}\label{Prop:RemoveBadFactors}
Let $\dfield{\FF(x)}{\sigma}$ be a \pisiE-field extension of $\dfield{\FF}{\sigma}$ and $\dfield{\EE}{\sigma}$ be a simple \rpisiE-extension of $\dfield{\FF(x)}{\sigma}$ with $\EE=\FF(x)\lr{t_1}\dots\lr{t_e}$ and $x$-degree{$\leq d$}. Let $p\in\EE$ and $q\in\FF[x]\setminus\{0\}$ with $x$-degree{$\leq d$}. Let 
$\{q_1,\dots,q_r\}\subseteq\FF[x]$ be a $(d,x)$-set, $n_1,\dots,n_r\in\NP$ and $p_1,\dots,p_r\in\FF[x]\lr{t_1}\dots\lr{t_e}$ with $\deg_x(p_i)\leq\deg_x(q_i)\,n_i$.
If there is a $g\in\EE$ with

\vspace*{-0.4cm}

\begin{equation}\label{Equ:FundamentalThm}
\hspace*{1.5cm}\sigma(g)-g=\frac{p_1}{q_1^{n_1}}+\frac{p_r}{q_r^{n_r}}+\frac{p}{q}
\end{equation}
then $p_1=\dots=p_r=0$ and $\den(g)$ has $x$-degree{$\leq d$}. 
\end{proposition}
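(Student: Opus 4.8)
The strategy is to peel off the denominators $q_1^{n_1},\dots,q_r^{n_r}$ one at a time, reducing to a situation in which Lemma~\ref{Lemma:BaseCasePiExt} applies and yields a contradiction unless the corresponding $p_i$ vanishes. First I would argue by induction on $r$, so it suffices to treat $r=1$: we have $\sigma(g)-g=\frac{p_1}{q_1^{n_1}}+\frac{p}{q}$ with $q$ of $x$-degree$\leq d$, and we must show $p_1=0$ and that $\den(g)$ has $x$-degree$\leq d$. Write $g$ in reduced representation and let $c\in\FF[x]$ collect exactly the irreducible period~$0$ factors of $\den(g)$ whose $x$-degree exceeds $d$, so $\den(g)=b\,c$ with $b$ of $x$-degree$\leq d$. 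By Lemma~\ref{Lemma:BoundLargeFactors}.2, $\den(\sigma(g)-g)=\sigma^{\pm}(c)$-type factors times something of $x$-degree$\leq d$; more precisely, the large-degree part of $\den(\sigma(g)-g)$ is governed by $c$ and $\sigma(c)$. On the other hand the right-hand side of~\eqref{Equ:FundamentalThm} has, as its only large-degree denominator content, a power of $q_1$ (all conjugates $\sigma^k(q_1)$ for $k\neq0$ are absent because $q_1$ has period~$0$, and $\{q_1\}$ being a $(d,x)$-set forces the $q_i$ pairwise $\sigma$-coprime; here $r=1$). So any large irreducible $q\mid\den(g)$ with $\sigma^k(q)\nmid\den(g)$ for all $k\neq 0$ forces, via Lemma~\ref{Lemma:BoundLargeFactors}.3.(i)–(ii), that some conjugate of $q$ divides $\den(\sigma(g)-g)$; matching against the RHS we conclude $q$ must be $\sigma$-equivalent to $q_1$.

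Next I would split $g$ according to whether its denominator involves a factor $\sigma$-equivalent to $q_1$: using the $\sigma$-reduced-form machinery of Lemma~\ref{Lemma:TransformToReducedForm} (applied with the $(d,\cdot)$-complete set $\{q_1\}$) we may replace $g$ by $g-(\text{telescoped part})$ and assume $\den(g)=b\cdot q_1^{m}$ for some $m\geq 0$ with $b$ of $x$-degree$\leq d$ and no other large factors; the telescoping only changes the $p/q$ term, not $p_1/q_1^{n_1}$. If $m=0$ then $\den(g)$ has $x$-degree$\leq d$, hence $\den(\sigma(g)-g)$ has $x$-degree$\leq d$, which forces the $p_1/q_1^{n_1}$ contribution to cancel; since $\deg_x(p_1)\leq\deg_x(q_1)\,n_1$ and (after the reduction) $p_1$ is already reduced modulo $q_1$, this gives $p_1=0$ and we are done. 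If $m\geq 1$, I would look at the coefficient of the highest power of $q_1$ appearing in $\den(\sigma(g)-g)$: by Lemma~\ref{Lemma:BoundLargeFactors}.2 and $q_1$ having period~$0$, $\den(\sigma(g)-g)$ contains $q_1^{m}$ (the powers from $g$ and $\sigma(g)$ sit over the coprime conjugates $q_1$ and $\sigma(q_1)$ and do not combine), so comparing with the RHS forces $m\leq n_1$ and, extracting the numerator over $q_1^{m}$, we get an equation of the shape $u\,\sigma(g_{\text{top}})-g_{\text{top}}=(\text{stuff with }q_1\text{-degree}<m)$ after clearing the other denominators — precisely the kind of one-variable telescoping equation over $\FF[x]$ with an irreducible denominator $q_1$ of large degree that Lemma~\ref{Lemma:BaseCasePiExt} rules out (the $v$ there being $b\cdot\sigma(b)$ and the like, which contain no factor $\sigma$-equivalent to $q_1$). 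This contradiction shows $m=0$ after all, returning us to the previous case.

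I expect the main obstacle to be the bookkeeping in the last step: carefully isolating, inside the ring $\EE=\FF(x)\lr{t_1}\dots\lr{t_e}$, the part of $g$ responsible for the factor $q_1$ in the denominator and reducing the full identity~\eqref{Equ:FundamentalThm} to a genuine telescoping problem of the form handled by Lemma~\ref{Lemma:BaseCasePiExt}, including the reduction to the \rpiE\ case (the \sigmaE-monomials must be stripped off first — one uses that a \sigmaE-monomial $t$ satisfies $\sigma(t)-t\in\FF(x)\lr{\text{lower}}$ and compares the top-degree-in-$t$ coefficients, which obey a \rpiE-type equation, while handling the inhomogeneous lower part by descending induction on the $t$-degree). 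One must also verify that the $(d,x)$-set hypothesis — the $q_i$ being pairwise $\sigma$-coprime and of period~$0$ — guarantees the contributions of distinct $q_i$'s never interfere, so the $r=1$ reduction is legitimate; this is where $\sigma$-coprimeness of the $q_i$ and Lemma~\ref{Lemma:ShiftProp}.2 are used. Once $p_1=\dots=p_r=0$, the identity reads $\sigma(g)-g=p/q$ with $q$ of $x$-degree$\leq d$, and one final application of Lemma~\ref{Lemma:BoundLargeFactors}.3.(i) shows $\den(g)$ can have no large irreducible factor, giving the claim.
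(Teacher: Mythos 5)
Your skeleton matches the paper's proof: strip off the \sigmaE-monomials by a double induction (on the number of \sigmaE-generators in the tower and on the $t_e$-degree of $g$, comparing top coefficients and pushing the inhomogeneous remainder down), reduce the \rpiE-case to Lemma~\ref{Lemma:BaseCasePiExt}, use the pairwise $\sigma$-coprimeness of the $q_i$ so that distinct $q_i$ cannot interfere, and finish the denominator claim with Lemma~\ref{Lemma:BoundLargeFactors}.3.(i) after choosing, among the $\sigma$-equivalent large factors of $\den(g)$, the representative of maximal shift so that $\sigma(q)\nmid\den(g)$. In that sense the route is essentially the paper's.

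However, one step of your base case does not work as written, and it is also unnecessary. You propose to replace $g$ by $g$ minus a telescoped part so as to assume $\den(g)=b\,q_1^{m}$, claiming the telescoping "only changes the $p/q$ term". It does not: if $\sigma(g)-g=f$ and you pass to $\tilde g=g-h$, the right-hand side becomes $f-(\sigma(h)-h)$, and whenever $h$ carries denominator factors $\sigma^k(q_1)$ with $k\neq0$ (exactly the factors you are trying to strip from $g$), the correction $\sigma(h)-h$ reintroduces large factors $\sigma^{k}(q_1)$, $\sigma^{k+1}(q_1)$ into the right-hand side, which is then no longer of the form $\frac{p_1}{q_1^{n_1}}+\frac{p}{q}$. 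The paper avoids this entirely: if some $p_i\neq0$, then writing the right-hand side of~\eqref{Equ:FundamentalThm} in reduced representation gives $\frac{p'}{v\,q_i^{\mu}}$ with $\mu\geq1$ (here the strict degree bound on $p_i$ is used) and with $v$ containing no factor $\sigma$-equivalent to $q_i$ --- $v$ absorbs $q$ and all the other $q_j^{n_j}$ by the $(d,x)$-set hypothesis, which also makes your "induction on $r$" unnecessary --- and Lemma~\ref{Lemma:BaseCasePiExt} applied to this right-hand side rules out any solution $g$ at once, with no normalization of $\den(g)$ and no case split on $m$. The max/min-shift analysis of $\den(g)$ you sketch for $m\geq1$ is in effect a re-proof of Lemma~\ref{Lemma:BaseCasePiExt} rather than an application of it. With that repair your argument coincides with the paper's.
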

\begin{proof}
Write $\EE=\FF(x)\lr{t_1}\dots\lr{t_e}$. W.l.o.g.\ we may assume that all \rE-monomials are adjoined first, \piE-monomials come next and \sigmaE-monomials are adjoint at the end; otherwise we reorder them accordingly.
We prove the proposition by induction of the number of \sigmaE-extensions in $\EE$. In the base case, we assume that $\EE$ is built only by simple \rpiE-extensions and there is a $g\in\EE$ with~\eqref{Equ:FundamentalThm}. In addition assume that there is $i$ such that $p_i$ is nonzero.
Write $\frac{p_i}{q_1^{n_i}}=\frac{p'_i}{q_1^{\mu}}$ in reduced representation with $\mu\leq n_i$. 
Since $\deg_x(p_i)<\deg_x(q_i^{n_i})$, it follows $\mu\geq1$. In particular we can write the right-hand side of~\eqref{Equ:FundamentalThm} in reduced representation with
$f=\frac{p'}{v\,q_i^{\mu}}$ where $p'\in\FF[x]\lr{t_1}\dots\lr{t_e}$ and $v\in\FF[x]$ whose irreducible factors are $\sigma$-coprime with $q_i$. By Lemma~\ref{Lemma:BaseCasePiExt} a solution $g\in\EE$ with~\eqref{Equ:FundamentalThm} is not possible, a contradiction. Hence $p_i=0$ for all $i$, and we get
$\sigma(g)-g=\frac{p}{q}$
with $g\in\EE$. Suppose there is an irreducible period $0$ factor $v$ in $\den(g)$ with $\deg_x(v)>d$. 
By Lemma~\ref{Lemma:ShiftProp}.1 we can take among the $\sigma$-equivalent factors of $v$ in $\den(g)$ that one
which maps to the other factors only by negative $\sigma$-shifts.
By Lemma~\ref{Lemma:BoundLargeFactors}.3.(i), $\sigma(v)\mid\den(\sigma(g)-g)$, a contradiction. Thus $\den(g)$ has $x$-degree{$\leq d$}.

Now consider the simple \rpisiE-extension $\dfield{\FF(x)\lr{t_1}\dots\lr{t_e}}{\sigma}$ of $\dfield{\FF(x)}{\sigma}$ with $x$-degree {$\leq d$} with $\sigma(t_e)=t_e+\beta$ and suppose that the proposition holds for $\dfield{\HH}{\sigma}$ with $\HH=\FF(x)\lr{t_1}\dots\lr{t_{e-1}}$. Let $g\in\FF(x)\lr{t_1}\dots\lr{t_e}$ such that~\eqref{Equ:FundamentalThm} holds. By~\cite[Lemma~7.2]{DR1} it follows that for $b=\deg_{t_e}(f)+1$ we have that $\deg_{t_e}(g)\leq b$. We show the proposition by a second induction on $b$. If $b=0$, it follows that $f$ is free of $t_e$ and the main induction assumption implies the correctness. Now suppose that the proposition holds for a solution where the degree is smaller than $b$. Define $\phi:=\coeff(f,t_e,b)\in\HH$ and $\gamma:=\coeff(g,t_e,b)\in\HH$ being the coefficients of $t_e^b$ in $f$ and $g$. Then $g=\gamma\,t_e^b+w$ with $w\in\FF(x)\langle t_1,\dots,t_{e-1}\rangle[t_e]$ where $\deg_{t_e}(w)<b$. By coefficient comparison it follows that $\sigma(\gamma)-\gamma=\phi$. Define $h_i:=\coeff(p_i,t_e,b)\in\HH$ and $u:=\coeff(p,t_e,b)\in\HH$. Then
$\phi=\frac{h_1}{q_1^{n_1}}+\dots+\frac{h_r}{q_r^{n_r}}+\frac{u}{q}$
holds in $\HH$. Hence by the induction assumption on $\HH$ we conclude that $h_1=\dots=h_r=0$ and $\den(\gamma)$ has $x$-degree{$\leq d$}. Now define $U=\frac{p}{q}-(\sigma(\gamma\,t^b)-\gamma\,t^b)\in\HH[t_e]$. Then by construction 
$\sigma(w)-w=\frac{p_1}{q_1^{n_1}}+\dots+\frac{p_r}{q_r^{n_r}}+U$
and $\deg_{t_e}(U)<b$. Moreover, $\den(\sigma(\gamma\,t^b))$ has $x$-degrees{$\leq d$} by Lemma~\ref{Lemma:BoundLargeFactors}.1.
Since also $q$ and $\den(\gamma)$ have $x$-degrees{$\leq d$}, we conclude that $\den(U)$ has $x$-degree{$\leq d$}. With the second induction hypothesis (induction on $b$) it follows that $p_1=\dots=p_r=0$ and $\den(w)$ has $x$-degree{$\leq d$}. Hence $\den(\gamma\,t^b+w)$ has $x$-degree{$\leq d$}. This completes the proof.
\end{proof}

\noindent In the following we rely on Theorem~\ref{Thm:SigmaExtPara} shown in~\cite[Theorem~7.10]{DR3}; for the field version with $m=1$ see~\cite{Karr:81} and for the general case $m\in\NN$ see~\cite{Schneider:10c}; this result is also related to~\cite{Singer:08}.

\begin{theorem}[\cite{DR3}]\label{Thm:SigmaExtPara}
	Let $\dfield{\AR}{\sigma}$ be a difference ring with constant field $\KK=\const{\AR}{\sigma}$ and let $f_1,\dots,f_m\in\AR$. Then there is a \sigmaE-extension $\dfield{\AR[s_1]\dots[s_m]}{\sigma}$ of $\dfield{\AR}{\sigma}$ with $\sigma(s_i)=s_i+f_i$ for $1\leq i\leq m$ iff there are no $(c_1,\dots,c_m)\in\KK^m\setminus\{\vect{0}\}$ and $h\in\AR$ with~\eqref{Equ:ParaT}. 
\end{theorem}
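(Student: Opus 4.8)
The plan is to deduce this equivalence from the scalar case $m=1$ together with a structural description of those $h$ in a \sigmaE-extension that satisfy $\sigma(h)-h\in\AR$. First I would fix the base case: for $f\in\AR$, formally adjoining $s$ with $\sigma(s)=s+f$ turns $\AR[s]$ into a difference ring extension of $\dfield{\AR}{\sigma}$, and it is a \sigmaE-extension (equivalently $\const{\AR[s]}{\sigma}=\KK$) iff there is no $g\in\AR$ with $\sigma(g)-g=f$. The ``only if'' direction is immediate: such a $g$ would make $s-g$ a new constant. For ``if'' one takes a hypothetical $w=\sum_{j=0}^{n}a_j\,s^j\in\AR[s]$ with $\sigma(w)=w$ and $n\geq1$, compares the coefficient of $s^n$ in $\sigma(w)-w$ to get $a_n\in\KK$, then the coefficient of $s^{n-1}$ to get $\sigma(g)-g=f$ for $g=-a_{n-1}/(n\,a_n)\in\AR$, a contradiction; hence $n=0$ and $w\in\KK$ (this is the classical argument, cf.~\cite{Karr:81}).

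The main work — and the step I expect to be the chief obstacle — is the following \emph{structure lemma}: if $\dfield{\AR[s_1]\dots[s_k]}{\sigma}$ is a \sigmaE-extension of $\dfield{\AR}{\sigma}$ with $\sigma(s_i)-s_i=f_i\in\AR$, then every $h\in\AR[s_1]\dots[s_k]$ with $\sigma(h)-h\in\AR$ can be written as $h=c_1\,s_1+\dots+c_k\,s_k+g$ with $c_1,\dots,c_k\in\KK$ and $g\in\AR$. I would prove this by induction on $k$, viewing $h$ as a polynomial of degree $n$ in the top generator $s_k$ over $\HH:=\AR[s_1]\dots[s_{k-1}]$; note $\dfield{\HH}{\sigma}$ is itself a \sigmaE-extension, so $\const{\HH}{\sigma}=\KK$. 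Comparing the coefficient of $s_k^n$ in $\sigma(h)-h$ forces the leading coefficient to be $\sigma$-fixed, hence in $\KK$; and if $n\geq2$, comparing the coefficient of $s_k^{n-1}$ produces an element of $\HH$ that telescopes $f_k$, contradicting the base case applied to $\dfield{\HH[s_k]}{\sigma}$. So $\deg_{s_k}(h)\leq1$, and then $h=c_k\,s_k+a_0$ with $c_k\in\KK$ and $\sigma(a_0)-a_0=(\sigma(h)-h)-c_k\,f_k\in\AR$, so the induction hypothesis applied to $a_0\in\HH$ closes the step. Establishing this degree bound $\deg_{s_k}(h)\leq1$ is the crux.

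Granting these two ingredients, the theorem follows quickly. For ``$\Leftarrow$'' I would build the tower one generator at a time, showing by induction on $k$ that $\dfield{\AR[s_1]\dots[s_k]}{\sigma}$ is a \sigmaE-extension: in the step $k\to k+1$, a hypothetical $h\in\AR[s_1]\dots[s_k]$ with $\sigma(h)-h=f_{k+1}$ would, by the structure lemma, satisfy $h=\sum_{i=1}^{k}c_i\,s_i+g$ with $c_i\in\KK$, $g\in\AR$, whence $\sigma(g)-g=f_{k+1}-\sum_{i=1}^{k}c_i\,f_i$; thus $g$ and $(-c_1,\dots,-c_k,1,0,\dots,0)\in\KK^m\setminus\{\vect{0}\}$ solve~\eqref{Equ:ParaT}, contradicting the hypothesis, so by the base case $\dfield{\AR[s_1]\dots[s_{k+1}]}{\sigma}$ is again a \sigmaE-extension. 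For ``$\Rightarrow$'' I would argue by contraposition: given $h\in\AR$ and $(c_1,\dots,c_m)\in\KK^m\setminus\{\vect{0}\}$ solving~\eqref{Equ:ParaT}, the element $w:=c_1\,s_1+\dots+c_m\,s_m-h$ satisfies $\sigma(w)-w=\sum_i c_i\,f_i-(\sigma(h)-h)=0$, so $w\in\const{\AR[s_1]\dots[s_m]}{\sigma}$; since $c_1\,s_1+\dots+c_m\,s_m=w+h$ has positive degree in the transcendental generators $s_i$ as soon as some $c_i\neq0$, it cannot lie in $\AR$, so $w\notin\AR$ and the constant field strictly exceeds $\KK$ — i.e.\ the tower is not a \sigmaE-extension.
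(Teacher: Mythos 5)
Your proof is correct. Note, however, that the paper does not prove this statement at all: it is imported verbatim as \cite[Theorem~7.10]{DR3}, with the text pointing to \cite{Karr:81} for the field case $m=1$ and to \cite{Schneider:10c} for general $m$. Your argument is essentially the standard one from those sources: the base case $m=1$ via comparison of the two top coefficients in $s$ (using $\operatorname{char}=0$ and $a_n\in\KK^*$ to divide by $n\,a_n$), the structure lemma that any $h$ in a \sigmaE-tower with $\sigma(h)-h\in\AR$ has the form $c_1s_1+\dots+c_ks_k+g$ (this is the content of the result the paper invokes elsewhere as \cite[Prop.~6.4]{AS:18}), and then the two directions of the equivalence exactly as you set them up. All steps check out, including the sign bookkeeping for the vector $(-c_1,\dots,-c_k,1,0,\dots,0)$ and the observation that in the formal polynomial tower $\sum_i c_i s_i - h$ is a constant outside $\AR$ whenever some $c_i\neq 0$.
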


\noindent Using this result we obtain the following characterization of certain classes of simple \rpisiE-extension. They will be introduced in Def.~\ref{Def:reducedExt} below and will be the basis of our telescoping algorithms.

\begin{theorem}\label{Thm:SigmaCharDeg}
	Let $\dfield{\FF(x)}{\sigma}$ be a \pisiE-field extension of $\dfield{\FF}{\sigma}$ with $\KK=\const{\FF}{\sigma}$ and let $\dfield{\EE}{\sigma}$ be a simple \rpisiE-extension of $\dfield{\FF(x)}{\sigma}$ with $x$-degree{$\leq d$} and $\EE=\FF(x)\lr{t_1}\dots\lr{t_e}$. Let $\{q_1,\dots,q_r\}\subseteq\FF[x]$ be a $(d,x)$-set, $n_1,\dots,n_r\in\NP$, and for $1\leq i\leq r$ and $1\leq j\leq e_i$ with $e_i\geq1$ let $p_{i,j}\in\FF[x]\lr{t_1}\dots\lr{t_e}\setminus\{0\}$ with $\deg_x(p_{i,j})<\deg_x(q_i)n_{i}$. Then the following statements are equivalent.
	\begin{enumerate}[topsep=0pt, partopsep=0pt, leftmargin=7pt, itemindent=5pt,label={\arabic*.}]
		\item $\Ann_{\KK}(p_{i,1},\dots,p_{i,e_i})=\{\vect{0}\}$ for all $1\leq i\leq r$; 
		\item there are no $g\in\EE$ and $c_{i,j}\in\KK$ (not all zero) such that
		\begin{equation}\label{Equ:CreaSolSigma}
			\sigma(g)-g=\sum_{i,j}c_{i,j}\frac{p_{i,j}}{q_i^{n_i}};
		\end{equation}
		\item the difference ring extension
		$\dfield{\SA}{\sigma}$ of $\dfield{\EE}{\sigma}$ with the polynomial ring 
		$$\SA=\EE[s_{1,1},\dots,s_{1,e_1},\dots,s_{r,1},\dots,s_{r,e_r}]$$ and $\sigma(s_{i,j})=s_{i,j}+\frac{p_{i,j}}{q_i^{n_i}}$ is a \sigmaE-extension, i.e., $\const{\SA}{\sigma}=\const{\FF}{\sigma}$.
	\end{enumerate}
\end{theorem}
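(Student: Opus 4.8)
The plan is to obtain $2\Leftrightarrow 3$ as an immediate instance of Theorem~\ref{Thm:SigmaExtPara} and then to prove $1\Leftrightarrow 2$ by combining Proposition~\ref{Prop:RemoveBadFactors} with an elementary observation. First note that, since $\dfield{\EE}{\sigma}$ is a simple \rpisiE-extension of $\dfield{\FF(x)}{\sigma}$, we have $\const{\EE}{\sigma}=\const{\FF(x)}{\sigma}=\const{\FF}{\sigma}=\KK$; hence requiring $\const{\SA}{\sigma}=\const{\FF}{\sigma}$ is the same as saying that $\dfield{\SA}{\sigma}$ is a \sigmaE-extension of $\dfield{\EE}{\sigma}$. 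Applying Theorem~\ref{Thm:SigmaExtPara} with $\dfield{\AR}{\sigma}=\dfield{\EE}{\sigma}$ and the family $f_{i,j}=\frac{p_{i,j}}{q_i^{n_i}}\in\EE$ ($1\leq i\leq r$, $1\leq j\leq e_i$), one gets: the polynomial ring $\SA=\EE[s_{1,1},\dots,s_{r,e_r}]$ with $\sigma(s_{i,j})=s_{i,j}+\frac{p_{i,j}}{q_i^{n_i}}$ is a \sigmaE-extension of $\dfield{\EE}{\sigma}$ if and only if there are no $(c_{i,j})\in\KK^{e_1+\dots+e_r}\setminus\{\vect{0}\}$ and $h\in\EE$ with $\sigma(h)-h=\sum_{i,j}c_{i,j}\frac{p_{i,j}}{q_i^{n_i}}$. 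The latter condition is exactly statement~2, and by the remark on constant fields the former is exactly statement~3; hence $2\Leftrightarrow 3$ with no further argument.

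For the implication $\neg 1\Rightarrow\neg 2$ I would argue directly. Assume $\Ann_{\KK}(p_{i_0,1},\dots,p_{i_0,e_{i_0}})\neq\{\vect{0}\}$ for some index $i_0$, and pick a nonzero tuple $(c_{i_0,1},\dots,c_{i_0,e_{i_0}})$ in this annihilator; set $c_{i,j}:=0$ for all $i\neq i_0$. Then $\sum_{i,j}c_{i,j}\frac{p_{i,j}}{q_i^{n_i}}=\frac{1}{q_{i_0}^{n_{i_0}}}\sum_{j=1}^{e_{i_0}}c_{i_0,j}\,p_{i_0,j}=0$, so $g=0$ together with this choice of constants (which are not all zero) satisfies~\eqref{Equ:CreaSolSigma}; hence statement~2 is false.

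The substantive direction is $\neg 2\Rightarrow\neg 1$. Suppose $g\in\EE$ and constants $c_{i,j}\in\KK$, not all zero, satisfy~\eqref{Equ:CreaSolSigma}. For each $i$ set $P_i:=\sum_{j=1}^{e_i}c_{i,j}\,p_{i,j}\in\FF[x]\lr{t_1}\dots\lr{t_e}$, so that $\sigma(g)-g=\frac{P_1}{q_1^{n_1}}+\dots+\frac{P_r}{q_r^{n_r}}$. Since each $p_{i,j}$ satisfies $\deg_x(p_{i,j})<\deg_x(q_i)\,n_i$ and $P_i$ is a $\KK$-linear combination of the $p_{i,j}$, we also have $\deg_x(P_i)\leq\deg_x(q_i)\,n_i$. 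Writing the right-hand side as $\frac{P_1}{q_1^{n_1}}+\dots+\frac{P_r}{q_r^{n_r}}+\frac{0}{1}$, where $1\in\FF[x]$ has $x$-degree $\leq d$, and using that $\{q_1,\dots,q_r\}$ is a $(d,x)$-set and that $\dfield{\EE}{\sigma}$ has $x$-degree $\leq d$, Proposition~\ref{Prop:RemoveBadFactors} applies and yields $P_1=\dots=P_r=0$, i.e.\ $\sum_{j=1}^{e_i}c_{i,j}\,p_{i,j}=0$ for every $i$. As the $c_{i,j}$ are not all zero, there is an $i_0$ with $(c_{i_0,1},\dots,c_{i_0,e_{i_0}})\neq\vect{0}$, and this tuple then lies in $\Ann_{\KK}(p_{i_0,1},\dots,p_{i_0,e_{i_0}})$; hence statement~1 fails. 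I expect the only point needing care here to be checking that the grouped data $(P_1,\dots,P_r)$ genuinely meets the hypotheses of Proposition~\ref{Prop:RemoveBadFactors}: the degree bound is inherited from the individual $p_{i,j}$, and the absent ``$\frac{p}{q}$'' term is supplied harmlessly as $\frac{0}{1}$, after which the proposition does all of the real work.
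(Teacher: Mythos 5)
Your proposal is correct and follows essentially the same route as the paper: the equivalence $2\Leftrightarrow 3$ is obtained directly from Theorem~\ref{Thm:SigmaExtPara}, the implication $\neg 1\Rightarrow\neg 2$ by taking $g=0$ with a nonzero annihilator tuple, and $\neg 2\Rightarrow\neg 1$ by grouping the sum into $P_i=\sum_j c_{i,j}p_{i,j}$ and invoking Proposition~\ref{Prop:RemoveBadFactors} to conclude $P_i=0$. Your care in checking the degree bound and supplying the trivial $\frac{p}{q}=\frac{0}{1}$ term matches what the paper leaves implicit.
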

\begin{proof}
	$(1)\Rightarrow(2)$ Let $c_{i,j}\in\KK$, not all zero, and $g\in\EE$ such that
	$\sigma(g)-g=\sum_{i,j}c_{i,j}\frac{p_{i,j}}{q_i^{n_i}}=\sum_{1\leq i\leq r}\frac{p_i}{q_i^{n_i}}$	with 
	$p_i=\sum_{j=1}^{e_i}c_{i,j}p_{i,j}\in\FF[x]\lr{t_1}\dots\lr{t_e}$
	for $1\leq i\leq r$. Since $\deg_x(p_{i,j})<\deg_x(q_i)n_{i}$, we have $\deg_x(p_i)<\deg_x(q_i)n_{i}$. Hence we can apply Proposition~\ref{Prop:RemoveBadFactors} and it follows that $p_i=0$ for all $i$. 
	By assumption  we can take $i,j$ with $c_{i,j}\neq0$. Thus $\vect{0}\neq (c_{i,1},\dots,c_{i,e_i})\in\Ann_{\KK}(p_{i,1},\dots,p_{i,e_i})$. \\
	$(2)\Rightarrow(1)$ Suppose that there is $i$ with $V_i=\Ann_{\KK}(p_{i,1},\dots,p_{i,e_i})\neq\{\vect{0}\}$. Then we can take $g=0\in\EE$ and $(c_{i,1},\dots,c_{i,e_i})\in V_i\setminus\{\vect{0}\}$ where all other $c_{k,j}0$ are set to zero. This gives~\eqref{Equ:CreaSolSigma}.\\
	$(2)\Leftrightarrow(3)$ follows by Theorem~\ref{Thm:SigmaExtPara}.
\end{proof}

\section{Refined telescoping algorithms}\label{Sec:RefinedTele}

We will assume that certain algorithmic properties are satisfied in the ground field $\dfield{\FF(x)}{\sigma}$. Here we can exploit the following result.

\begin{theorem}[\cite{Karr:81,DR3}]\label{Thm:Computable}
	Let $\KK=A(y_1,\dots,y_{\lambda})$ be a rational function field over an algebraic number field $A$, 
	$\dfield{\FF(x)}{\sigma}$ be a \pisiE-field over $\KK$ and $\dfield{\EE}{\sigma}$ be a simple \rpisiE-extension of $\dfield{\FF}{\sigma}$. Then: 
	\begin{enumerate}[topsep=0pt, partopsep=0pt, leftmargin=7pt, itemindent=5pt,label={\arabic*.}]
		\item One can solve Problem~SE in $\dfield{\FF(x)}{\sigma}$. 
		\item $\dfield{\EE}{\sigma}$ is LA-computable and Problems~T and~PT are solvable in $\EE$.
	\end{enumerate}
\end{theorem}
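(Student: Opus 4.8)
The plan is to prove Theorem~\ref{Thm:Computable} by reducing everything to the two classical ingredients already available: Karr's algorithm for \pisiE-fields and the \rpisiE-machinery of~\cite{DR1,DR3}. For part~1 (solvability of Problem~SE), recall that by Lemma~\ref{Lemma:ShiftProp}.2 there is \emph{at most one} $k\in\ZZ$ with $\sigma^k(f)/g\in\FF$ unless $\tfrac{\sigma(x)}{x}\notin\FF$ fails \emph{and} both $f,g$ have the shape $c\,x^m$. In the exceptional period~$1$ case one checks directly whether $f$ and $g$ are (up to a constant in $\FF^*$) the same power of the \piE-monomial $x$, which is trivial. In the generic case one must \emph{constructively} locate the unique candidate $k$, if it exists. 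The idea: compare $\sigma^k(f)/g$ as a rational function in $\FF(x)$; since $f,g\in\FF[x]$ are irreducible, $\sigma^k(f)/g\in\FF$ forces $\deg_x\sigma^k(f)=\deg_x f=\deg_x g$ and the leading coefficients to match up to the action of $\sigma$, which pins down $k$ to finitely many possibilities via a comparison of the ``content'' of the shift on the leading coefficient; one then simply substitutes each candidate and tests membership in $\FF$. That membership test, and the extraction of $k$ from a multiplicative relation in $\FF^*$ such as $\sigma^k(\ell_f)/\ell_g\in\KK$ with $\ell_f,\ell_g$ the leading coefficients, is exactly what Karr's algorithm~\cite{Karr:81} supplies for \pisiE-fields over the stated constant field $\KK=A(y_1,\dots,y_\lambda)$: one can decide multiplicative dependencies and solve first-order problems. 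So part~1 follows by citing~\cite{Karr:81} for the \pisiE-field $\dfield{\FF(x)}{\sigma}$ together with the finiteness bound on $k$ from Lemma~\ref{Lemma:ShiftProp}.2.

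For part~2, the statement that $\dfield{\EE}{\sigma}$ is LA-computable and that Problems~T and~PT are solvable in $\EE$ is essentially a restatement of the main algorithmic theorems of~\cite{DR3} (and~\cite{Schneider:10c} for the parameterized case $m\in\NN$), specialized to the present class of ground fields. The key point to check is that the hypotheses of those theorems are met: the ground \pisiE-field $\dfield{\FF(x)}{\sigma}$ must itself be LA-computable and one must be able to solve Problems~T/PT in it and all its algebraic prerequisites (factorization in $\FF[x]$, the shift-equivalence decision of part~1, computation of annihilators $\Ann_\KK$). The constant field $\KK=A(y_1,\dots,y_\lambda)$ over an algebraic number field $A$ is exactly the setting in which all of these are classically effective: linear algebra over $\KK$ is effective, polynomial factorization over $\FF=\KK(\text{further \pisiE-generators})$ reduces to factorization over number fields by standard techniques, and Problem~SE is handled by part~1. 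With these in place, the recursive \sigmaE-/\piE-/\rE-extension algorithms of~\cite{DR1,DR3} lift solvability from $\dfield{\FF(x)}{\sigma}$ to $\dfield{\EE}{\sigma}$ tower-by-tower; in particular the reduction at each \sigmaE-monomial invokes a bound on $\deg_{t}(g)$ (as in \cite[Lemma~7.2]{DR1}, used already in the proof of Proposition~\ref{Prop:RemoveBadFactors}) and at each \rpiE-monomial a coefficient-comparison argument in the spirit of Lemma~\ref{Lemma:PiCoeff}, each step reducing to the ground field. Computation of $\Ann_\KK(f_1,\dots,f_m)$ in $\EE$ is straightforward: expand the $f_i$ over the monomial basis $\vect{t}^{\vect{n}}$ and over a common denominator, then solve the resulting finite $\KK$-linear system.

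The main obstacle is part~1, and within it the constructive bounding and extraction of the shift exponent $k$ in the generic (period~$0$) case: one needs to guarantee that the search space for $k$ is finite and explicitly computable before invoking a membership-in-$\FF$ oracle, and this requires careful use of Lemma~\ref{Lemma:ShiftProp}.2 together with degree/leading-coefficient bookkeeping inside the \pisiE-tower $\FF$ (which may itself contain nested \piE- and \sigmaE-monomials, so ``leading coefficient'' must be tracked recursively). Part~2, by contrast, is mostly an application: the heavy lifting is done in~\cite{DR3,Schneider:10c}, and the work here is only to verify that the concrete ground-field class of Theorem~\ref{Thm:Computable} satisfies the effectiveness hypotheses those theorems demand — which, as indicated above, it does.
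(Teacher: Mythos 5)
Your proposal matches the paper's treatment: the paper gives no detailed proof of Theorem~\ref{Thm:Computable} but simply notes that statement~1 follows from~\cite{Karr:81,Ge:93,Schneider:05c} and statement~2 from~\cite{Karr:81,DR3}, which is exactly the citation-based reduction you carry out in more detail. The only minor point is that you attribute the decision of multiplicative relations in the constant field $\KK=A(y_1,\dots,y_{\lambda})$ entirely to Karr's algorithm, whereas the paper additionally invokes Ge's algorithm~\cite{Ge:93} (and~\cite{Schneider:05c}) precisely for deciding such multiplicative dependencies over an algebraic number field.
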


	Statement~1 of Theorem~\ref{Thm:Computable} follows by~\cite{Karr:81,Ge:93,Schneider:05c} and statement~2 by~\cite{Karr:81,DR3}. 
	More general difference fields $\dfield{\FF}{\sigma}$ can be constructed provided that certain algorithmic properties hold in $\FF$; compare~\cite[Sec.~2.3.3]{DR1}. E.g., one can take \pisiE-field extensions and radical field extensions~\cite{Schneider:07f} over free difference fields~\cite{Schneider:06d,Schneider:06e}.

\noindent In the following we will obtain an enhanced telescoping algorithm that works for the following subclass of simple \rpisiE-extensions; note that these extensions are precisely those which are characterized in Thm.~\ref{Thm:SigmaCharDeg}.

\begin{definition}\label{Def:reducedExt}	
	\normalfont
	Let $\dfield{\FF(x)}{\sigma}$ be a \pisiE-field extension of $\dfield{\FF}{\sigma}$, $d\in\NN$ and 
	$Q\subseteq\FF[x]$ be $(d,x)$-set $Q$. 
	We call $\dfield{\SA}{\sigma}$ an \rpisiE-extension of $\dfield{\FF}{\sigma}$ also \emph{$(d,x,Q)$-reduced} if
	the extension is simple and it can be rewritten (after reordering of the generators) to the form $\SA=\EE\lr{s_1}\dots\lr{s_u}$ with $\EE=\FF(x)\lr{t_1}\dots\lr{t_e}$ such that 
	\begin{itemize}[topsep=0pt, partopsep=0pt, leftmargin=7pt, itemindent=5pt]
		\item $\dfield{\EE}{\sigma}$ is an \rpisiE-extension of $\dfield{\FF(x)}{\sigma}$ with $x$-degree{$\leq d$};
		\item $\dfield{\SA}{\sigma}$ is a \sigmaE-extension of $\dfield{\EE}{\sigma}$ with $\sigma(s_i)-s_i=p_i/q_i^{m_i}$ where $m_i\geq1$, $p_i\in\FF[x]\lr{t_1}\dots\lr{t_{e}}\setminus\{0\}$ and $q_i\in Q$ with $\deg_x(p_i)<m_i\deg_x(q_i)$. The $s_1,\dots,s_u$ are also called the \emph{$Q$-contributions}.
	\end{itemize}
\end{definition}

\noindent Given such a $(d,x,Q)$-reduced \rpisiE-extension $\dfield{\SA}{\sigma}$ of $\dfield{\FF(x)}{\sigma}$ with $f\in\EE$ (by iterative application of the algorithm below), one can construct a \sigmaE-extension that is again a $(d,x,Q')$ reduced extension (with $Q\subseteq Q'$) and in which one finds $h\in\SA'$ with~\eqref{Equ:Tele}.

\begin{alg}\label{Alg:GeneralSummation}
\normalfont\small
\textbf{(Finding degree-reduced representations)}
\begin{description}[topsep=0pt, partopsep=0pt, leftmargin=3pt]
\item[\textbf{Input}]A \pisiE-field extension $\dfield{\FF(x)}{\sigma}$ of $\dfield{\FF}{\sigma}$ in which Problem~SE is solvable, $d\in\NN$, a simple \rpisiE-extension $\dfield{\EE}{\sigma}$ of $\dfield{\FF(x)}{\sigma}$ with $x$-degree{$\leq d$} which is LA-computable and in which one can solve Problem~T, and a $(d,x)$-set $Q=\{q_1,\dots,q_{\lambda}\}\subseteq\FF[x]$ such that the \sigmaE-extension $\dfield{\SA}{\sigma}$ of $\dfield{\EE}{\sigma}$ is $(d,x,Q)$-reduced \rpisiE-extension of $\dfield{\FF(x)}{\sigma}$; $f\in\EE$.
\item[\textbf{Output}] A $(d,x)$ set $Q'\supseteq Q$; a \sigmaE-extension $\dfield{\SA'}{\sigma}$ of $\dfield{\SA}{\sigma}$ which is a 
a $(d,x,Q')$-reduced \rpisiE-extension $\dfield{\SA'}{\sigma}$ of $\dfield{\FF(x)}{\sigma}$ together with a solution $h\in\SA'$ for~\eqref{Equ:Tele}.
\end{description}
\begin{enumerate}[topsep=0pt, partopsep=0pt, leftmargin=8pt,label={\footnotesize\arabic*}]
\item[] Write $\SA=\EE[s_{1,1},\dots,s_{1,e_1},\dots,s_{\lambda,1},\dots,s_{r,e_\lambda}]$ with $e_i\in\NN$, $\sigma(s_{i,j})=s_{i,j}+\frac{p_{i,j}}{q_i^{n_i}}$, $n_{i,j}\in\NP$ and $p_{i,j}\in\EE$ with $\deg_x(p_{i,j})<n_{i,j}\deg_x(q_i)$.
\item Compute $Q\subseteq\tilde{Q}=\{q_1,\dots,q_r\}\subseteq\FF[x]$ which is $(d,\den(f))$-complete; see Lemma~\ref{Lemma:UpdateCompleteSet}. Set $e_i=0$ for $\lambda<i\leq r$ (i.e., for the elements $\tilde{Q}\setminus Q$). 
\item Compute $f',g\in\EE$ such that~\eqref{Equ:RefinedTele}
where $f'$ is written in the form~\eqref{Equ:NormalForm} with the properties (1)--(4) as given in Lemma~\ref{Lemma:TransformToReducedForm}.
\item Set $Q'=\tilde{Q}$, $\SA_0=\SA$, $u=0$ and $w=0$.
\item For $i=1$ to $r$ do
\item\shift If $p_i\neq0$ then
\item\shift\shift\label{Step:ComputeBasis} If $e_i=0$ then set $B=\{\}$
\item[]\shift\shift\shift else set $\mu_i=\max(n_i,n_{i,1},\dots,n_{i,e_i})$ and compute a basis $B$ of 

\vspace*{-0.35cm}

\begin{equation}\label{Equ:ViDef}
\hspace*{1cm}V_i=\Ann_{\KK}(q_i^{\mu_i-n_{i,1}}p_{i,1},\dots,q_i^{\mu_i-n_{i,{e_i}}}p_{i,e_i},q_i^{\mu_i-n_{i}}p_{i})
\end{equation}

\vspace*{-0.1cm}

\item\shift\shift\label{Step:BasisCheck} If $B=\{\}$ then
\item\shift\shift\shift\label{Step:BasisEmptyPartStart} Set $u=u+1$ and $Q'=Q'\cup\{q_i\}$.
\item\shift\shift\shift\label{Step:suExt}\myframeV{13cm}{Take a new variable $s_u$ being transcendental over $\SA_{u-1}$ and construct the difference ring extension $\dfield{\SA_u}{\sigma}$ of $\dfield{\SA_{u-1}}{\sigma}$ with $\SA_u=\SA_{u-1}[s_u]$ and $\sigma(s_u)=s_u+\frac{p_i}{q_i^{n_i}}$.}
\item\shift\shift\shift\label{Step:BasisEmptyPartEnd} Set $w=w+s_u\in\SA_u$.
\item[]\shift\shift else 
\item\shift\shift\shift\label{Step:LinSPart} Set $w=w+c_1\,s_{i,1}+\dots+c_{e_i}\,s_{i,e_i}$ where $B=\{(c_1,\dots,c_{e_i},-1)\}$.
\item[]\shift\shift fi
\item[]\shift fi
\item[] od
\item\label{Step:Compute g'} Compute, if possible, a $\gamma\in\EE$ with $\sigma(\gamma)-\gamma=\frac{p}{q}$.\\ 
If such a $\gamma$ exists, set $g'=\gamma$ and $\SA'=\SA_u$. Otherwise, define the ring extension $\dfield{\SA'}{\sigma}$ of $\dfield{\SA_u}{\sigma}$ with the polynomial ring $\SA'=\SA_u[t]$ and $\sigma(t)=t+\frac{p}{q}$, and set $g'=s$. 
\item Return $(h,\dfield{\SA'}{\sigma}, Q')$ with $h=g+g'+w\in\SA'$.
\end{enumerate}
\end{alg}

\begin{proposition}
Algorithm~\ref{Alg:GeneralSummation} is correct and can be executed in a \pisiE-field $\dfield{\FF(x)}{\sigma}$ as specified in Theorem~\ref{Thm:Computable}.
\end{proposition}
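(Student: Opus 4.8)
The plan is to verify that each step of Algorithm~\ref{Alg:GeneralSummation} is well-defined and that the returned triple $(h,\dfield{\SA'}{\sigma},Q')$ satisfies all the asserted properties: $Q'$ is a $(d,x)$-set containing $Q$, $\dfield{\SA'}{\sigma}$ is a $(d,x,Q')$-reduced \rpisiE-extension of $\dfield{\FF(x)}{\sigma}$, and $h\in\SA'$ solves~\eqref{Equ:Tele}, i.e.\ $\sigma(h)-h=f$. First I would address executability: Step~1 invokes Lemma~\ref{Lemma:UpdateCompleteSet}, which applies since $\dfield{\FF(x)}{\sigma}$ solves Problem~SE and one can factor polynomials in $\FF[x]$ (guaranteed by Theorem~\ref{Thm:Computable}); Step~2 invokes the constructive part of Lemma~\ref{Lemma:TransformToReducedForm} under the same hypotheses; Step~\ref{Step:ComputeBasis} requires computing a basis of the annihilator $V_i$, which is possible because $\dfield{\EE}{\sigma}$ is LA-computable; and Step~\ref{Step:Compute g'} invokes solvability of Problem~T in $\EE$. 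The \sigmaE-extension steps~\ref{Step:suExt} and the final $t$-adjunction are purely formal ring constructions. Theorem~\ref{Thm:Computable} then certifies that all these hypotheses hold in the setting specified there, giving the executability claim.

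Next I would establish correctness of the equation $\sigma(h)-h=f$. By Step~2 we have $\sigma(g)-g+f'=f$ with $f'=\sum_{i=1}^r\frac{p_i}{q_i^{n_i}}+\frac{p}{q}$ in $\sigma$-reduced form (relabelling the $q_i$ of Lemma~\ref{Lemma:TransformToReducedForm} so that the $(d,f)$-complete set is $\tilde Q=\{q_1,\dots,q_r\}$). For each $i$ with $p_i\neq0$, the loop body adds to $w$ a term whose $\sigma$-difference equals $\frac{p_i}{q_i^{n_i}}$: in the case $B=\{\}$ this is $s_u$ with $\sigma(s_u)-s_u=\frac{p_i}{q_i^{n_i}}$ by construction in Step~\ref{Step:suExt}; in the case $B=\{(c_1,\dots,c_{e_i},-1)\}$, membership of this vector in $V_i$ (after clearing the common factor $q_i^{\mu_i}$) means precisely $c_1 p_{i,1}+\dots+c_{e_i}p_{i,e_i}=p_i$ — here I must be careful to track the powers $n_{i,j}$ versus $n_i$ and the normalizing factors $q_i^{\mu_i-n_{i,j}}$ so that dividing through by $q_i^{\mu_i}$ recovers $\sum_j c_j\frac{p_{i,j}}{q_i^{n_{i,j}}}=\frac{p_i}{q_i^{n_i}}$, hence $\sigma(c_1 s_{i,1}+\dots+c_{e_i}s_{i,e_i})-(\cdots)=\frac{p_i}{q_i^{n_i}}$. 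Summing over $i$ gives $\sigma(w)-w=\sum_{i:p_i\neq0}\frac{p_i}{q_i^{n_i}}=\sum_{i=1}^r\frac{p_i}{q_i^{n_i}}$ (terms with $p_i=0$ contribute nothing). Finally Step~\ref{Step:Compute g'} yields $g'$ with $\sigma(g')-g'=\frac{p}{q}$, either inside $\SA_u$ or after adjoining the \sigmaE-monomial $t$. Adding up, $\sigma(h)-h=(\sigma(g)-g)+(\sigma(w)-w)+(\sigma(g')-g')=f'+\frac{p}{q}+\dots$ — wait, more precisely $\sigma(h)-h=(\sigma(g)-g)+(\sigma(g')-g')+(\sigma(w)-w)=\big(f-f'\big)+\frac{p}{q}+\sum_{i=1}^r\frac{p_i}{q_i^{n_i}}=f$, as desired.

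Then I would verify the structural claims. $Q'=\tilde Q$ is a $(d,x)$-set because $Q$ was one and Lemma~\ref{Lemma:UpdateCompleteSet} only adds monic irreducible period~$0$ polynomials of $x$-degree $>d$ that are pairwise $\sigma$-coprime to the existing ones; and $Q\subseteq Q'$ by the same lemma. For the $(d,x,Q')$-reducedness of $\dfield{\SA'}{\sigma}$: the sub-extension $\dfield{\EE}{\sigma}$ of $\dfield{\FF(x)}{\sigma}$ is unchanged and has $x$-degree$\leq d$ by hypothesis; the generators $s_{i,j}$ inherited from $\SA$ satisfy $\sigma(s_{i,j})-s_{i,j}=p_{i,j}/q_i^{n_{i,j}}$ with $q_i\in Q\subseteq Q'$ and the degree bound by input; the newly adjoined $s_1,\dots,s_u$ satisfy $\sigma(s_u)-s_u=p_i/q_i^{n_i}$ with $q_i\in Q'$ (added in Step~\ref{Step:BasisEmptyPartStart}), $p_i\in\FF[x]\lr{t_1}\dots\lr{t_e}\setminus\{0\}$, and $\deg_x(p_i)<n_i\deg_x(q_i)$ from~\eqref{Equ:NormalForm}.4 of Lemma~\ref{Lemma:TransformToReducedForm}; and the final $t$ (if adjoined) is a \sigmaE-monomial with $\sigma(t)-t=p/q$ where $q$ has $x$-degree$\leq d$, so it contributes to the $\dfield{\EE'}{\sigma}$-part, enlarging $\EE$ by a \sigmaE-monomial of $x$-degree$\leq d$. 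The one genuinely nontrivial point — and what I expect to be the main obstacle — is proving that $\dfield{\SA'}{\sigma}$ really is a difference \emph{ring extension}, i.e.\ that each adjunction preserves the constant field ($\const{\SA'}{\sigma}=\const{\FF}{\sigma}$); for the $s_u$ adjunctions in Step~\ref{Step:suExt} this is exactly where Theorem~\ref{Thm:SigmaCharDeg} (equivalence $(2)\Leftrightarrow(3)$, via Theorem~\ref{Thm:SigmaExtPara}) is needed, and one must check that the condition $B=\{\}$ — emptiness of a basis of $V_i$, i.e.\ $V_i=\{\vect 0\}$, i.e.\ $\Ann_\KK$ of the relevant $p$'s is trivial — is precisely the hypothesis of that theorem guaranteeing the \sigmaE-property. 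Here the subtlety is that at the stage we adjoin $s_u$ the already-present $s_{i',j}$ for $i'<i$ plus the earlier $s_{u'}$ must be folded into the "$\EE$" of Theorem~\ref{Thm:SigmaCharDeg}, and one argues inductively that the whole tower stays a \sigmaE-extension; the degree condition $\mu_i=\max(n_i,n_{i,1},\dots,n_{i,e_i})$ in~\eqref{Equ:ViDef} is the device that lets all the $p_{i,j}$ and $p_i$ be compared over a common denominator $q_i^{\mu_i}$ so that Theorem~\ref{Thm:SigmaCharDeg} applies verbatim. Assembling these observations — executability from Theorem~\ref{Thm:Computable}, the telescoping identity from the per-step difference computations, and the reducedness plus constant-field preservation from Lemmas~\ref{Lemma:UpdateCompleteSet},~\ref{Lemma:TransformToReducedForm} and Theorems~\ref{Thm:SigmaCharDeg},~\ref{Thm:SigmaExtPara} — completes the proof.
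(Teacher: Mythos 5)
Your overall strategy coincides with the paper's: executability is delegated to Theorem~\ref{Thm:Computable} together with Lemmas~\ref{Lemma:UpdateCompleteSet} and~\ref{Lemma:TransformToReducedForm}, the telescoping identity $\sigma(h)-h=f$ is obtained by summing the per-step differences exactly as in~\eqref{Equ:CombineDifference}, and the $(d,x,Q')$-reducedness of the output is certified through Theorem~\ref{Thm:SigmaCharDeg} (for the $Q$-contributions, via $V_i=\{\vect{0}\}$) and Theorem~\ref{Thm:SigmaExtPara} (for the final \sigmaE-monomial $t$). Your bookkeeping with the normalizing powers $q_i^{\mu_i-n_{i,j}}$ in the $B\neq\{\}$ branch is also correct and matches the paper.

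There is, however, one genuine gap: you take for granted that whenever $B\neq\{\}$ it has the form $\{(c_1,\dots,c_{e_i},-1)\}$, i.e.\ that $V_i$ in~\eqref{Equ:ViDef} has dimension at most one and that its basis vector has nonzero last component. Without this, step~\ref{Step:LinSPart} is not even well defined (a two-dimensional $V_i$, or a basis vector with last entry $0$, would make the prescription meaningless), so it is part of the correctness claim and must be proved. The paper closes this by invoking the \emph{input} hypothesis that $\dfield{\SA}{\sigma}$ is already a $(d,x,Q)$-reduced (hence \sigmaE-) extension: any nontrivial element of $\Ann_{\KK}(q_i^{\mu_i-n_{i,1}}p_{i,1},\dots,q_i^{\mu_i-n_{i,e_i}}p_{i,e_i})$ --- which is what you get either from a basis vector of $V_i$ with vanishing last entry, or from the difference of two distinct basis vectors normalized to have last entry $1$ --- would, by the implication $(1)\Rightarrow(3)$ of Theorem~\ref{Thm:SigmaCharDeg} read contrapositively, contradict the fact that $\dfield{\SA}{\sigma}$ is a \sigmaE-extension of $\dfield{\EE}{\sigma}$. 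Your write-up never uses the reducedness of the input $\SA$ for this purpose, so you should add this argument. A second, much smaller point: your inductive ``fold the earlier $s_{u'}$ into $\EE$'' scheme for constant-field preservation is workable but unnecessary --- Theorem~\ref{Thm:SigmaCharDeg} is formulated so that all $Q$-contributions (old and newly adjoined) are certified simultaneously over $\EE'$ in one application, which is how the paper concludes after reordering the generators so that $t$ sits below the $s_{i,j}$.
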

\begin{proof}
Consider the $i$th loop with $1\leq i\leq r$. 
For the special case $e_i=0$ in step~\ref{Step:ComputeBasis} it follows with $p_i\neq0$ that we have $V_i=\{\vect{0}\}$ with the basis $B=\{\}$. Otherwise, we compute a basis $B$ of $V_i$ and we proceed.
If $B=\{\}$ in step~\ref{Step:BasisCheck} then we adjoin a new variable $s_u$ which, for later arguments, we also denote by $s_{i,e'_{i}}$ with $e'_i=e_i+1$. 
In particular, we set $p_{i,e'_{i}}=p_i$ and $n_{i,e'_{i}}=n_i$ and get $\sigma(s_{e'_i})-s_{e'_i}=p_{i,e'_i}/{q_i}^{n_{i,e'_i}}$.\\
Otherwise, if $B\neq\{\}$, the ring will remain unchanged and we define $e'_i=e_i$. Note that $|B|=1$. Namely, suppose that we can take two elements $\vect{c}=(c_1,\dots,c_{e_i+1}), \vect{d}=(d_1,\dots,d_{e_i+1})\in B$ with $\vect{c}\neq\vect{d}$. If the last component of both vectors is nonzero, we can assume that it is $1$ by multiplying the vectors with an appropriate element of $\KK$. These normalized vectors must be still different (since $B$ is a basis). Thus $\vect{e}=\vect{c}-\vect{d}\neq\vect{0}$ where the last entry is $0$. Removing this last entry gives a vector in $\Ann_{\KK}(q_i^{\mu_i-n_{i,1}}p_{i,1},\dots,q_i^{\mu_i-n_{i,{e_i}}}p_{i,e_i})$ with $\deg_x(q_i^{\mu_i-n_{i,1}}p_{i,1})<\mu_i\,\deg_x(q_i)$. Thus we can apply Theorem~\ref{Thm:SigmaCharDeg}: $\dfield{\SA}{\sigma}$ is not a \sigmaE-extension of $\dfield{\EE}{\sigma}$, a contradiction. Hence we can suppose that $B=\{(c_1,\dots,c_{e_i},-1)\}$ as stated in step~\ref{Step:LinSPart}.\\
Now consider the difference ring extension $\dfield{\SA'}{\sigma}$ of $\dfield{\EE}{\sigma}$ of the output.
After reordering and using the renaming from above we get $\SA'=\EE'[s_{1,1},\dots,s_{1,e'_1},\dots,s_{r,1},\dots,s_{r,e'_\lambda}]$ as follows:
\begin{enumerate}[topsep=0pt, partopsep=0pt, leftmargin=7pt, itemindent=5pt]
\item $\EE'=\EE$ in case that one finds a $\gamma\in\EE$ with $\sigma(\gamma)-\gamma=\frac{p}{q}$, or $\EE'=\EE[t]$ if there is no such $\gamma$. In this case $\dfield{\EE'}{\sigma}$ with $\EE'=\EE[t]$ is a \sigmaE-extension of $\dfield{\EE}{\sigma}$ with $x$-degree{$\leq d$} by Theorem~\ref{Thm:SigmaExtPara};
\item we have $\sigma(s_{i,j})=s_{i,j}+p_{i,j}/q_i^{n_i}$ where $q_i\in Q'$, $\deg_x(p_{i,j})<n_i\,\deg_x(q_i)$ and $V_i=\{\vect{0}\}$ with~\eqref{Equ:ViDef} and with $\deg_x(q_i^{\mu_i-n_{i,1}}p_{i})<\mu_i\deg_x(q_i)$. By Theorem~\ref{Thm:SigmaCharDeg}, $\dfield{\SA'}{\sigma}$ is a \sigmaE-extension of $\dfield{\EE'}{\sigma}$.
\end{enumerate}
In summary, $\dfield{\SA'}{\sigma}$ is a $(d,x,Q')$-reduced \rpisiE-extension of $\dfield{\FF(x)}{\sigma}$. Finally, we observe that in the steps~\ref{Step:suExt} or~\ref{Step:LinSPart} we have $\sigma(s_u)-s_u=p_i/q_i^{n_i}$ with $q_i\in Q'$ or $\sigma(b)-b=p_i/q_i^{n_i}$ with $b=c_1s_{i,1}+\dots+c_{e_i}s_{i,e_i}$. Thus after quitting the for loop we get $\sigma(w)-w=\sum_{i=1}^rp_{i}/q_i^{n_i}$. With step~\ref{Step:Compute g'} and~\eqref{Equ:NormalForm}, $\sigma(g'+w)-(g'+f)=f'$. Hence with $h=g+g'+w$,
\begin{equation}\label{Equ:CombineDifference}
	\sigma(h)-h=(\sigma(g)-g)+(\sigma(g'+w)-(g'+w))\stackrel{\eqref{Equ:RefinedTele}}{=}(f-f')+f'=f.
\end{equation}
All steps are executable in $\dfield{\FF(x)}{\sigma}$ as given in Thm.~\ref{Thm:Computable}.
\end{proof}

\begin{example}
We apply Algorithm~\ref{Alg:GeneralSummation} to $\dfield{\SA}{\sigma}$ with $\SA=\EE$ given in Ex.~\ref{Exp:DR}.1, $Q=\{q_1\}=\{1+x^2\}$ and $f$ as given in~\eqref{Equ:fFromHarmonicSum}. For step~\ref{Equ:RefinedTele} see Ex.~\ref{Exp:SigmaReduceForm}.1. Since $\SA=\EE$ and $|Q|=1$, we have $r=1$ and $e_1=0$. Furthermore, $p_1=0$. Thus we enter step~\ref{Step:Compute g'} with $\frac{p}{q}=\frac{-2-4 x+x^2}{10 x^3}$.
Since there is no $\gamma\in\EE$ with $\sigma(\gamma)-\gamma=\frac{p}{q}$, we can adjoin the \sigmaE-monomial $t$ to $\EE$ with $\sigma(t)=t+\frac{p}{q}$ and get the solution
$h=g+t$ with $g$ given in~\eqref{Equ:gFromHarmonicSum}. We reinterpret $t$ as
$
\sum_{i=2}^k \frac{3-6 i+i^2}{10 (-1+i)^3}=\frac{2+4 k-k^2}{10 k^3}
-\frac{1}{5} S_3(k)
-\frac{2}{5} S_2(k)
+\frac{1}{10} S_1(k)$. Rephrasing $h$ back to the given summation objects and summing~\eqref{Equ:Tele} over $k$ from $1$ to $n$ yield the right-hand side of~\eqref{Equ:HarmonicSumId};
\end{example}

\begin{example}
	Denote the summand on the the left-hand side

\vspace*{-0.4cm}

	\begin{multline}\label{Equ:FactorialDenSum}
		\text{$\sum_{k=1}^n$} \frac{(
			1
			+k
			+(
			(1+k)^2
			+k!
			) k!
			) (1+k!)
			-k (1+k) (k!)^4 
		}{(1+k) (1
			+(1+k) k!
			) (k!)^3 (1+k!)}\text{$\sum_{i=1}^k$} \frac{1}{i!}\\[-0.3cm]
		=-\frac{1}{2}
		+\frac{1}{n!
			+n n!
		}
		-\frac{1}{1
			+n!
			+n n!
		}
		+
		\text{$\sum_{i=1}^n$} \frac{1}{(i!)^3}
		+\frac{1}{1
			+n!
			+n n!
		}\text{$\sum_{i=1}^n$} \frac{1}{i!}
	\end{multline}	
	by $F(k)$. Take the \pisiE-field $\dfield{\QQ(x)(\tau)}{\sigma}$ over $\QQ$ with $\sigma(x)=x+1$, $\sigma(\tau)=(x+1)\tau$ and consider the simple \rpisiE-extension $\dfield{\EE}{\sigma}$ of $\dfield{\QQ(x)(\tau)}{\sigma}$ with $\sigma(s)=s+\frac1{(x+1)\tau}$. Then we can rephrase $F(k+1)$ by
	$f=\frac{(
		-\tau^4 s x (1+x)
		+(1+\tau) (
		1
		+x
		+\tau (
		\tau
		+(1+x)^2
		)
		)
		)}{\tau^3 (1+\tau) (1+x) (1
		+\tau (1+x)
		)}\in\EE$. Here we set $\FF=\QQ(x)$ and the \piE-monomial $\tau$ will play the role of $x$. Note that $\tau^3$ is a period $1$ factor. Hence the extension $\dfield{\EE}{\sigma}$ of $\dfield{\QQ(x)(\tau)}{\sigma}$ has $\tau$-degree$\leq0=d$. Further, $Q=\{q_1\}$ with $q_1=\tau+1$ is $(0,f)$-complete. We apply Algorithm~\ref{Alg:GeneralSummation} with $\SA=\EE$ and get $g\in\EE$ and $f'\in\EE$ with $f'=\frac{p_1}{q_1}+\frac{p}{q}$. Namely,
	$p_1=0$, $\frac{p}{q}=\frac{1
		+x
		-\tau^2 x
	}{\tau^3 (1+x)}$ and $g=-\frac{1}{\tau}+\frac{s}{1+\tau}$. Since there is no $\gamma\in\EE'$ with $\sigma(\gamma)-\gamma=\frac{p}{q}$, we can construct the \sigmaE-extension $\dfield{\EE[t]}{\sigma}$ of $\dfield{\EE}{\sigma}$ with $\sigma(t)=t+\frac{p}{q}$. In particular, $h=g+t$ is a solution of~\eqref{Equ:Tele}. Finally, we rephrase $h$ back to summation objects. 
	Here $t$ can be interpreted as
	$\sum_{i=1}^k \frac{i^3
		+(i!)^2
		-i (i!)^2
	}{(i!)^3}=-\frac{1}{(k!)^3}
	+\frac{1}{k!}
	+
	\sum_{i=1}^k \frac{1}{(i!)^3}$. Finally, summing~\eqref{Equ:Tele} over $k$ one gets the right-hand side of~\eqref{Equ:FactorialDenSum}.	
\end{example}

\begin{example}
We want to model the sums $T_1(n)=\sum_{k=1}^n F_1(k)$ and $T_2(n)=\sum_{k=2}^n F_2(k)$ with $F_1(k)=\frac{(
	1-(-1)^j) j}{(
	3-3 j+j^2) (j!)^2}\prod_{i=1}^j i!$ and 

\vspace*{-0.2cm}

\begin{align*}
	F_2(k)=\Big((-1)^{k-1}& (3
	+(-3+k) k
	) (1
	+(-1+k) k
	) (k!)^2+(k-1)\Big(\prod_{i=1}^k i!\Big)\times\\[-0.2cm]
	&\Big((
	-1+(-1)^k) k (1+n) (1
	+(-1+k) k
	)+(
	1+(-1)^k)  (3+(-3+k) k
	) k!\Big)\Big/\\[-0.1cm]
	&\hspace*{3.4cm}\Big((-1+k) (
	3-3 k+k^2
	)
	(1-k+k^2) (k!)^2\Big)
\end{align*}
\normalsize
in a difference ring. Here we start with $\dfield{\EE}{\sigma}$ given in Ex.~\ref{Exp:DR}.2. 
First we rephrase $F_1(k+1)$  in $\EE$ by replacing the objects $k, (-1)^k, k!, \prod_{i=1}^ki!$ with
$x,z,\tau_1,\tau_2$ yielding $f_1\in\EE$. Note that $Q=\{q_1\}$ with $q_1=x^2-x+1$ is $(1,f_1)$-complete. 
Next,
activating Algorithm~\ref{Alg:GeneralSummation} we 
get as output the $(1,x,Q)$-reduced \rpisiE-extension $\dfield{\SA}{\sigma}$ of $\dfield{\EE}{\sigma}$ with $\SA=\EE[s_{1,1}]$ and $\sigma(s_{1,1})-s_{1,1}=\frac{p_{1,1}}{q_1}(=f_1)$  where $p_{1,1}=\tau_2/\tau_1(1+z)$. Now we turn to $T_2$. As above we rephrase $F_2(k+1)$ in $\SA$ yielding $f:=f_2$. Note that $Q$ is again $(1,f)$-complete. Since there is no $h\in\EE$ with~\eqref{Equ:Tele}, we could activate Theorem~\ref{Thm:SigmaExtPara} (with $m=1$) to get the \sigmaE-monomial $t$ over $\SA$ with $\sigma(t)=t+f$. But we can do better. Activating again Algorithm~\ref{Alg:GeneralSummation} we compute 
	$g=\frac{\tau_2 (1+z)}{\tau_1 (
		1-x+x^2)}\in\EE$
	 and $f'\in\EE$ such that~\eqref{Equ:RefinedTele} holds where $f'$ has the $\sigma$-reduced form~\eqref{Equ:NormalForm} with $r=1$.	
	Namely, we get 
	$p_1=\tau_2/\tau_1 (-n
	-n z
	)$ and $\frac{p}{q}=\frac{z}{x}$. Note that this time we have $p_1\neq0$ and $e_1=1$. 
	Hence we compute for $i=1$ in step~\ref{Step:ComputeBasis}  the value $\mu_1=0$ and the basis $B=\{(-n,-1)\}$ of $\Ann_{\QQ}(p_{1,1},p_1)$ which gives $w=-n\,s(=-n\,s_{1,1})$.
	Since there is no $\gamma\in\EE$ with $\sigma(\gamma)-\gamma=\frac{p}{q}$, we can adjoin the \sigmaE-monomial $t$ to $\SA$
	 with 
	$\sigma(t)-t=\frac{p}{q}=\frac{z}{x}$ and we get the solution
	$h=w+t+g=\frac{\tau_2 (1+z)}{\tau_1 (
		1-x+x^2)}-n\,s_1+t\in\SA[t]$
	of~\eqref{Equ:Tele} as output. $t$ can be reinterpreted as $\sum_{i=2}^n(-1)^{i-1}/(i-1)$. Rephrasing $h$ with the given summation objects and summing~\eqref{Equ:Tele} over $k$ yield
	$$T_2(n)=2n 
	-n\,T_1(n)
	+\frac{1+(-1)^n}{(1-n+n^2)n!}\text{$\prod_{i=1}^n$}i!
	+
	\text{$\sum_{i=2}^n$}\frac{(-1)^{i-1}}{i-1}.$$
\end{example}

Algorithm~\ref{Alg:GeneralSummation} gives a strategy to find telescoping solutions such that the denominators have $x$-degrees{$\leq d$}. In the following we show that this is the only possible tactic and that it will always lead to a nice solution whenever it exists in some appropriate extension.

\begin{theorem}\label{Thm:TelescopingGeneralChar}
	Let $\dfield{\FF(x)}{\sigma}$ be a \pisiE-field extension of $\dfield{\FF}{\sigma}$ and let $Q=\{q_1,\dots,q_r\}\subseteq\FF[x]$ be a $(d,x)$-set. Let $\dfield{\SA}{\sigma}$ with $\SA=\EE[s_{1,1},\dots,s_{1,e_1},\dots,s_{r,1},\dots,s_{r,e_r}]$ be a $(d,x,Q)$-reduced \rpisiE-exten\-sion of $\dfield{\FF(x)}{\sigma}$ with $e_i\in\NN$ and where $s_{i,j}$ are the $Q$-contributions with $\sigma(s_{i,j})=s_{i,j}+\frac{p_{i,j}}{q_i^{n_{i,j}}}$ where 
	$n_{i,j}\in\NP$ and $p_{i,j}\in\EE$.
	For $f\in\EE$, take
	$f',g\in\EE$ such that~\eqref{Equ:RefinedTele}
	where $f'$ is written in the form~\eqref{Equ:NormalForm} with the properties (1)--(4) as given in Lemma~\ref{Lemma:TransformToReducedForm}. Then there is an $h\in\SA$ with~\eqref{Equ:Tele}	iff for all $1\leq i\leq r$ there are $c_{i,j}\in\KK=\const{\FF}{\sigma}$ with 
	\begin{equation}\label{Equ:pSigmaRel}
		q_i^{\mu_i-n_i}p_i=c_{i,1}q_i^{\mu_i-n_{i,1}}p_{i,1}+\dots+c_{i,e_i}q_i^{\mu_i-n_{i,{e_i}}}p_{i,e_i}
	\end{equation}
	for $\mu_i=\max(n_i,n_{i,1},\dots,n_{i,e_i})$ and there is a $g'\in\EE$ with 
	\begin{equation}\label{Equ:ExtractTele}
		\sigma(g')-g'=\frac{p}{q};
	\end{equation} 
	if this is the case, $\den(g')$ has $x$-degrees{$\leq d$} and we get the solution  
	\begin{equation}\label{Equ:SolForGeneralExt}
		h=g+g'+\sum_{i=1}^{r}\sum_{j=1}^{e_i}c_{i,j}s_{i,j}\in\SA.
	\end{equation}
\end{theorem}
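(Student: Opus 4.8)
The plan is to prove the equivalence by treating the two implications separately; the bulk of the work is the ``only if'' direction, which I would handle by peeling off the $\Sigma$-monomials $s_{i,j}$ one at a time and then invoking Proposition~\ref{Prop:RemoveBadFactors}. For the ``if'' direction I would simply verify that \eqref{Equ:SolForGeneralExt} is a solution: setting $h=g+g'+\sum_{i,j}c_{i,j}s_{i,j}$ and using $\sigma(g)-g=f-f'$ from \eqref{Equ:RefinedTele}, $\sigma(g')-g'=p/q$ from \eqref{Equ:ExtractTele}, and $\sigma(s_{i,j})-s_{i,j}=p_{i,j}/q_i^{n_{i,j}}$, one obtains $\sigma(h)-h=(f-f')+p/q+\sum_{i,j}c_{i,j}\,p_{i,j}/q_i^{n_{i,j}}$; dividing \eqref{Equ:pSigmaRel} through by $q_i^{\mu_i}$ collapses the inner sum to $p_i/q_i^{n_i}$, so with the shape \eqref{Equ:NormalForm} of $f'$ (i.e.\ $f'=\sum_{i=1}^r p_i/q_i^{n_i}+p/q$) we get $\sigma(h)-h=(f-f')+f'=f$. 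The assertion that $\den(g')$ has $x$-degree $\leq d$ then follows from Proposition~\ref{Prop:RemoveBadFactors} applied to \eqref{Equ:ExtractTele} (taking all numerators over the $q_i^{n_i}$ equal to $0$), since $q$ has $x$-degree $\leq d$ by Lemma~\ref{Lemma:TransformToReducedForm}.2.

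For the ``only if'' direction, suppose $h\in\SA$ satisfies \eqref{Equ:Tele} and put $\tilde h:=h-g$, so that $\sigma(\tilde h)-\tilde h=f-(\sigma(g)-g)=f'=\sum_{i=1}^r p_i/q_i^{n_i}+p/q\in\EE$ by \eqref{Equ:RefinedTele} and \eqref{Equ:NormalForm}. Since $\dfield{\SA}{\sigma}$ is $(d,x,Q)$-reduced, after reordering the generators it is a tower $\EE=\SA_0\leq\SA_1\leq\dots\leq\SA_N=\SA$ of single $\Sigma$-extensions in which every $\sigma(s)-s$ actually lies in $\EE$; moreover each $\SA_k$ is an intermediate difference ring containing $\KK$, and since an \rpisiE-extension satisfies $\const{\SA}{\sigma}=\const{\FF}{\sigma}=\KK$ we get $\const{\SA_k}{\sigma}=\KK$ for all $k$. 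I would then peel off the top generator $s$ of $\SA_N=\SA_{N-1}[s]$: as $\sigma(\tilde h)-\tilde h\in\SA_{N-1}$, \cite[Lemma~7.2]{DR1} forces $\deg_s(\tilde h)\leq1$, so $\tilde h=c\,s+w$ with $w\in\SA_{N-1}$, and comparing the coefficients of $s^1$ gives $\sigma(c)-c=0$, i.e.\ $c\in\const{\SA_{N-1}}{\sigma}=\KK$; then $\sigma(w)-w=(\sigma(\tilde h)-\tilde h)-c(\sigma(s)-s)$ again lies in $\EE$ and the argument repeats. After $N$ steps one obtains constants $c_{i,j}\in\KK$ and $g''\in\EE$ with $\sigma(g'')-g''=\sum_{i=1}^r p_i/q_i^{n_i}+p/q-\sum_{i,j}c_{i,j}\,p_{i,j}/q_i^{n_{i,j}}$.

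It remains to clear denominators and apply Proposition~\ref{Prop:RemoveBadFactors}. With $\mu_i=\max(n_i,n_{i,1},\dots,n_{i,e_i})$ put $P_i:=q_i^{\mu_i-n_i}p_i-\sum_j c_{i,j}\,q_i^{\mu_i-n_{i,j}}p_{i,j}\in\FF[x]\lr{t_1}\dots\lr{t_e}$; the degree bounds $\deg_x(p_i)<n_i\deg_x(q_i)$ (Lemma~\ref{Lemma:TransformToReducedForm}.4) and $\deg_x(p_{i,j})<n_{i,j}\deg_x(q_i)$ (Definition~\ref{Def:reducedExt}) yield $\deg_x(P_i)<\mu_i\deg_x(q_i)$, and $\sigma(g'')-g''=\sum_{i=1}^r P_i/q_i^{\mu_i}+p/q$. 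Now Proposition~\ref{Prop:RemoveBadFactors} (with its $g$ being $g''$, its $n_i$ being $\mu_i$, and its $p_i$ being $P_i$) gives $P_1=\dots=P_r=0$ --- which is precisely relation~\eqref{Equ:pSigmaRel} --- together with $\den(g'')$ having $x$-degree $\leq d$; hence $\sigma(g'')-g''=p/q$, so $g':=g''$ satisfies \eqref{Equ:ExtractTele}, and back-substituting gives $h=g+\tilde h=g+g''+\sum_{i,j}c_{i,j}s_{i,j}$, which is \eqref{Equ:SolForGeneralExt}.

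The step that needs the most care is the peeling: one must check that the right-hand side stays inside $\EE$ at every stage --- so that \cite[Lemma~7.2]{DR1} keeps delivering the degree bound $1$ and the extracted coefficient is forced into the constant field $\KK$ --- and that reordering the $s_{i,j}$ into a genuine tower of single $\Sigma$-extensions (with constants preserved at each level) is legitimate; both rest on the fact that $\sigma(s_{i,j})-s_{i,j}\in\EE$, which is guaranteed by Definition~\ref{Def:reducedExt}. The remaining parts --- the bookkeeping with the exponents of the $q_i$ and the single invocation of Proposition~\ref{Prop:RemoveBadFactors} --- are routine.
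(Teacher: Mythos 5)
Your proof is correct, and its overall skeleton is the same as the paper's: reduce to $\sigma(\gamma)-\gamma=f'$ with $\gamma=h-g$, decompose $\gamma$ as a $\KK$-linear combination of the $s_{i,j}$ plus an element $g'\in\EE$, clear denominators over the $q_i^{\mu_i}$ and invoke Proposition~\ref{Prop:RemoveBadFactors} to kill the $P_i$ and bound $\den(g')$, then verify the converse by direct computation. The one place where you genuinely diverge is the decomposition step: the paper obtains $\gamma=\sum_{i,j}c_{i,j}s_{i,j}+g'$ with $c_{i,j}\in\KK$ and $g'\in\EE$ in a single stroke by citing \cite[Prop.~6.4]{AS:18}, whereas you re-derive it from scratch by peeling off the \sigmaE-monomials one at a time --- using \cite[Lemma~7.2]{DR1} to force $\deg_s(\tilde h)\leq 1$ at each level, extracting the top coefficient into $\const{\SA_{N-1}}{\sigma}=\KK$, and observing that the right-hand side stays in $\EE$ after each subtraction. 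Your version buys self-containedness (it only uses machinery already deployed elsewhere in this paper, namely the degree bound from \cite{DR1} and the constant-preservation property of \sigmaE-extensions) at the cost of length; the paper's citation is shorter but pushes the structural content into an external reference. Your bookkeeping is also slightly more careful than the paper's in two minor respects: you retain the $p/q$ term on the right-hand side of $\sigma(g'')-g''$ before applying Proposition~\ref{Prop:RemoveBadFactors} (the paper's displayed equation omits it, evidently a typo, since the subsequent conclusion \eqref{Equ:ExtractTele} requires it), and you verify the strict degree inequality $\deg_x(P_i)<\mu_i\deg_x(q_i)$ explicitly from Lemma~\ref{Lemma:TransformToReducedForm}.4 and Definition~\ref{Def:reducedExt}, which is the hypothesis Proposition~\ref{Prop:RemoveBadFactors} actually uses.
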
 
\begin{proof}
	Suppose there is an $h\in\SA$ with~\eqref{Equ:Tele}. Then with~\eqref{Equ:RefinedTele} we get $\sigma(\gamma)-\gamma=f'$ with $\gamma=h-g\in\SA$. In particular, by~\cite[Prop.~6.4]{AS:18} it follows that $\gamma=\sum_{i=1}^{r}\sum_{j=1}^{e_i}c_{i,j}s_{i,j}+g'$ with $c_{i,j}\in\KK$ and $g'\in\EE$. Consequently,
	$\sigma(g')-g'=\frac{p'_1}{q_1^{\mu_1}}+\dots+\frac{p'_r}{q_r^{\mu_r}}$
	with $p'_i=q_i^{\mu_i-n_i}p_i-(c_{i,1}q_i^{\mu_i-n_{i,1}}p_{i,1}+\dots+c_{i,e_i}q_i^{\mu_i-n_{i,e_i}}p_{i,e_i})$. Since $\deg_x(p'_i)<\deg_x(q_i)\mu_i$, we can apply Prop.~\ref{Prop:RemoveBadFactors} and we get $p'_1=\dots=p'_r=0$. Thus \eqref{Equ:pSigmaRel} and~\eqref{Equ:ExtractTele} hold. Furthermore, $\den(g')$ has $x$-degree {$\leq d$} by Prop.~\ref{Prop:RemoveBadFactors}. Conversely, if~\eqref{Equ:pSigmaRel} holds and there is a $g'\in\EE$ with~\eqref{Equ:ExtractTele}, then for $w=\sum_{i=1}^{r}\sum_{j=1}^{e_i}c_{i,j}s_{i,j}$ we obtain
	\begin{align*}
		\sigma(g'&+w)-(g'+w)=\frac{p}{q}+\text{$\sum_{i=1}^r$}\text{$\sum_{j=1}^{e_i}$}c_{i,j}\frac{p_{i,j}}{q_i^{n_{i,j}}}\\[-0.2cm]
		&=\frac{p}{q}+\text{$\sum_{i=1}^r$}\frac1{q_i^{\mu_i}}\text{$\sum_{j=1}^{e_i}$}c_{i,j}q_i^{\mu_i-n_{i,j}}p_{i,j}=\frac{p}{q}+\text{$\sum_{i=1}^{r}$}\frac{q_i^{\mu_i-n_i}\,p_i}{q_i^{\mu_i}}=f'.
	\end{align*} 
	Hence with~\eqref{Equ:SolForGeneralExt} it follows that~\eqref{Equ:CombineDifference} which completes the proof.
\end{proof}

\noindent With $\SA=\EE$ Theorem~\ref{Thm:TelescopingGeneralChar} reduces to Corollary~\ref{Cor:TelescopingChar}.

\begin{corollary}\label{Cor:TelescopingChar}
	Let $\dfield{\FF(x)}{\sigma}$ be a \pisiE-field extension of $\dfield{\FF}{\sigma}$ and let $\dfield{\EE}{\sigma}$ be a simple \rpisiE-extension of $\dfield{\FF(x)}{\sigma}$ with $\EE=\FF(x)\lr{t_1}\dots\lr{t_e}$ and $x$-degree{$\leq d$}. For $f\in\EE$, take
	$f',g\in\EE$ such that~\eqref{Equ:RefinedTele}
	where $f'$ is written in the form~\eqref{Equ:NormalForm} with the properties (1)--(4) as given in Lemma~\ref{Lemma:TransformToReducedForm}. Then there is an $h\in\EE$ with~\eqref{Equ:Tele} 
	if and only if $p_1=\dots=p_r=0$ and there is a $g'\in\EE$ with~\eqref{Equ:ExtractTele};
	if this is the case, $\den(g')$ has $x$-degree{$\leq d$} and $h=g+g'\in\EE$ is a solution.
\end{corollary}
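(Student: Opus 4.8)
The plan is to derive Corollary~\ref{Cor:TelescopingChar} as the special case $\SA=\EE$ of Theorem~\ref{Thm:TelescopingGeneralChar}, but since the reader may wish to see a direct argument, I would also spell out the short self-contained proof. First I would observe that in the $(d,x,Q)$-reduced extension with $\SA=\EE$ there are no $Q$-contributions, i.e.\ all $e_i=0$ and the sums in~\eqref{Equ:SolForGeneralExt} over $j$ are empty. Then the condition~\eqref{Equ:pSigmaRel} of Theorem~\ref{Thm:TelescopingGeneralChar} becomes: for each $1\leq i\leq r$ there are $c_{i,j}\in\KK$ ranging over an empty index set with $q_i^{\mu_i-n_i}p_i=0$, i.e.\ simply $p_i=0$ (here $\mu_i=n_i$ since the max is taken only over $n_i$). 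With this reduction the ``iff'' and the extra assertion that $\den(g')$ has $x$-degree{$\leq d$} are immediate from Theorem~\ref{Thm:TelescopingGeneralChar}, and the solution formula~\eqref{Equ:SolForGeneralExt} collapses to $h=g+g'\in\EE$.

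For a direct argument I would run the same two implications as in the proof of Theorem~\ref{Thm:TelescopingGeneralChar}, only simpler. For the forward direction, suppose $h\in\EE$ satisfies~\eqref{Equ:Tele}; combining with~\eqref{Equ:RefinedTele} gives $\sigma(g')-g'=f'$ for $g'=h-g\in\EE$, hence
$$\sigma(g')-g'=\frac{p_1}{q_1^{n_1}}+\dots+\frac{p_r}{q_r^{n_r}}+\frac{p}{q}$$
with $\deg_x(p_i)<\deg_x(q_i)\,n_i$, $q\in\FF[x]$ of $x$-degree{$\leq d$}, and $\{q_1,\dots,q_r\}$ a $(d,x)$-set. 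Now Proposition~\ref{Prop:RemoveBadFactors} applies verbatim and yields $p_1=\dots=p_r=0$ together with the fact that $\den(g')$ has $x$-degree{$\leq d$}; in particular~\eqref{Equ:ExtractTele} holds. Conversely, if $p_1=\dots=p_r=0$ and $g'\in\EE$ satisfies~\eqref{Equ:ExtractTele}, then $\sigma(g')-g'=\frac{p}{q}=f'$ by~\eqref{Equ:NormalForm}, and with $h=g+g'\in\EE$ one gets
$$\sigma(h)-h=(\sigma(g)-g)+(\sigma(g')-g')\stackrel{\eqref{Equ:RefinedTele}}{=}(f-f')+f'=f,$$
as claimed.

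There is essentially no obstacle here: the only nontrivial input, Proposition~\ref{Prop:RemoveBadFactors}, has already been established, and the corollary is a transparent specialization of the preceding theorem. The single thing to be careful about is the bookkeeping of indices when passing from Theorem~\ref{Thm:TelescopingGeneralChar} to the case $\SA=\EE$ --- namely that ``$(d,x,Q)$-reduced with $\SA=\EE$'' forces $u=0$ in Definition~\ref{Def:reducedExt}, so that no $s_{i,j}$ occur and each $\mu_i$ equals $n_i$ --- after which the statement of the corollary is exactly what Theorem~\ref{Thm:TelescopingGeneralChar} says.
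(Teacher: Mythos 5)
Your proposal is correct and matches the paper exactly: the paper derives Corollary~\ref{Cor:TelescopingChar} precisely as the specialization $\SA=\EE$ of Theorem~\ref{Thm:TelescopingGeneralChar} (so that no $Q$-contributions $s_{i,j}$ occur, \eqref{Equ:pSigmaRel} collapses to $p_i=0$, and \eqref{Equ:SolForGeneralExt} becomes $h=g+g'$). Your additional self-contained argument is also sound and simply retraces the proof of that theorem --- Proposition~\ref{Prop:RemoveBadFactors} applied to $\sigma(g')-g'=f'$ in one direction, a direct telescoping computation in the other --- so nothing further is needed.
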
 


\noindent The following ''optimal'' behavior of Algorithm~\ref{Alg:GeneralSummation} holds.

\begin{corollary}
	Let $\dfield{\SA}{\sigma}$ be $(d,x,Q)$-reduced \rpisiE-extension of $\dfield{\FF(x)}{\sigma}$ with $f\in\EE$ as assumed in Algorithm~\ref{Alg:GeneralSummation} and let $Q'$, $\dfield{\SA'}{\sigma}$ and $h\in\SA'$ with~\eqref{Equ:Tele} be the output of Algorithm~\ref{Alg:GeneralSummation}. If there is a simple \rpisiE-extension $\dfield{\HH}{\sigma}$ of $\dfield{\SA}{\sigma}$ with $x$-degree{$\leq d$} with $h'\in\HH$ where $\sigma(h')-h'=f$ then the following holds.
	\begin{enumerate}[topsep=0pt, partopsep=0pt, leftmargin=7pt, itemindent=5pt,label={\arabic*.}]
		\item $Q'=Q$ and $\dfield{\SA'}{\sigma}$ is a \sigmaE-extension of $\dfield{\SA}{\sigma}$ with $x$-degree{$\leq d$}. 		
		\item If $\HH=\SA$, then $\SA'=\SA$.
		\item If $h'$ in the extension $\HH$ is free of the $Q$-contributions (i.e., free of the $s_{i,j}$) then $h$ is also free of the $Q$-contributions.
	\end{enumerate}
\end{corollary}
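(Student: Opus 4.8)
The plan is to follow, step by step, what Algorithm~\ref{Alg:GeneralSummation} does on the given input and to show that the hypothesis excludes its two \emph{costly} branches: adjoining in steps~\ref{Step:BasisEmptyPartStart}--\ref{Step:BasisEmptyPartEnd} a \sigmaE-monomial whose denominator has $x$-degree larger than $d$ (which is also the only place where $Q$ is enlarged), and — for the third claim — using any $Q$-contribution at all. I keep the notation of the algorithm: $\tilde Q=\{q_1,\dots,q_r\}\supseteq Q$ is the $(d,\den(f))$-complete set, $g,f'\in\EE$ satisfy $\sigma(g)-g+f'=f$ with $f'=\sum_{i=1}^{r}\frac{p_i}{q_i^{n_i}}+\frac pq$ in $\sigma$-reduced form, $V_i$ is the space of~\eqref{Equ:ViDef}, and the output is $h=g+g'+w$ with $\SA'=\SA_u$ or $\SA'=\SA_u[t]$.

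The heart of the matter is the claim: \emph{if $\sigma(h')-h'=f$ for some $h'$ in a simple \rpisiE-extension $\dfield{\HH}{\sigma}$ of $\dfield{\SA}{\sigma}$ with $x$-degree $\leq d$, then for every $i$ with $p_i\neq0$ one has $q_i\in Q$ and $V_i\neq\{\vect0\}$.} To prove it I would set $\gamma:=h'-g\in\HH$, so that $\sigma(\gamma)-\gamma=f'$ by~\eqref{Equ:RefinedTele}, and then peel off $\gamma$, from the outside, the generators of $\HH$ lying above $\EE$, exactly as in the proof of Theorem~\ref{Thm:TelescopingGeneralChar} but now along the whole tower $\FF(x)\subseteq\EE\subseteq\SA\subseteq\HH$: using the admissible reordering of the generators of an \rpisiE-extension together with repeated use of \cite[Prop.~6.4]{AS:18}, the genuine \rE- and \piE-monomials cannot occur in $\gamma$, and each \sigmaE-monomial — these include the $Q$-contributions $s_{i,j}$ of $\SA$ — occurs in $\gamma$ only linearly with a coefficient in $\KK$. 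What remains is an equation $\sigma(g')-g'=f'-\sum_{i,j}c_{i,j}\frac{p_{i,j}}{q_i^{n_{i,j}}}-\rho$ with $c_{i,j}\in\KK$, $g'\in\EE$, and $\rho$ a $\KK$-linear combination of elements whose denominators have $x$-degree $\leq d$. Clearing denominators with $\mu_i=\max(n_i,n_{i,1},\dots,n_{i,e_i})$ rewrites the right-hand side as $\sum_i\frac{p'_i}{q_i^{\mu_i}}+\frac{\tilde p}{\tilde q}$ with $\deg_x(p'_i)<\deg_x(q_i)\mu_i$ and $\tilde q$ of $x$-degree $\leq d$, so Proposition~\ref{Prop:RemoveBadFactors} forces $p'_1=\dots=p'_r=0$ and $\den(g')$ of $x$-degree $\leq d$. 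For an $i$ with $p_i\neq0$, the vanishing of $p'_i$ forces some $p_{i,j}$ to be present — hence $e_i\geq1$ and $q_i\in Q$ — and $(c_{i,1},\dots,c_{i,e_i},-1)\in V_i$, i.e.\ $V_i\neq\{\vect0\}$; this proves the claim.

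The three statements now follow quickly. For statement~1: since $p_i\neq0$ implies $V_i\neq\{\vect0\}$, step~\ref{Step:BasisCheck} never produces $B=\{\}$, so no \sigmaE-monomial $s_u$ is adjoined; hence $u=0$, $\SA_u=\SA$, and — as $Q'$ is $Q$ enlarged only by those $q_i$ for which an $s_u$ was adjoined — $Q'=Q$; moreover $\SA'$ equals $\SA$ or $\SA[t]$ with $\sigma(t)-t=\frac pq$, whose denominator $q$ has $x$-degree $\leq d$, so $\dfield{\SA'}{\sigma}$ is a \sigmaE-extension of $\dfield{\SA}{\sigma}$ with $x$-degree $\leq d$. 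For statement~2: if $\HH=\SA$ then $h'\in\SA$, so Theorem~\ref{Thm:TelescopingGeneralChar} additionally gives~\eqref{Equ:ExtractTele} with a solution in $\EE$, whence step~\ref{Step:Compute g'} adjoins no $t$ and $\SA'=\SA_u=\SA$ by statement~1. For statement~3: if $h'$ is free of the $Q$-contributions $s_{i,j}$ then so is $\gamma=h'-g$, hence all $c_{i,j}$ in the peeling argument vanish; then $p'_i=q_i^{\mu_i-n_i}p_i$ and Proposition~\ref{Prop:RemoveBadFactors} gives $p_i=0$ for every $i$, so the loop body of Algorithm~\ref{Alg:GeneralSummation} is never executed, $w=0$, and $h=g+g'$ with $g\in\EE$ and $g'$ either some $\gamma\in\EE$ or the new \sigmaE-monomial $t$ — in every case free of the $Q$-contributions.

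The step I expect to be the main obstacle is the peeling in the proof of the claim: reducing $\sigma(\gamma)-\gamma=f'$ from $\HH$ down to the controlled equation over $\EE$ must be done carefully, because a \sigmaE-monomial of $\HH$ sitting above $\SA$ may have $\sigma(y)-y$ still involving the $s_{i,j}$. The remedy is to peel strictly from the outside along $\FF(x)\subseteq\EE\subseteq\SA\subseteq\HH$, invoking at each level the reordering of \rpisiE-generators and \cite[Prop.~6.4]{AS:18}, and keeping track that the \sigmaE-monomials of $\HH$ above $\EE$ other than the $s_{i,j}$ contribute only denominators of $x$-degree $\leq d$ (so that all of them end up absorbed into $\rho$); the remaining bookkeeping — uniqueness of the linear coefficients, the clearing of denominators, and the appeal to Proposition~\ref{Prop:RemoveBadFactors} — is routine and parallels the proof of Theorem~\ref{Thm:TelescopingGeneralChar}.
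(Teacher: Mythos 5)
Your proposal is correct and follows essentially the same route as the paper: the paper simply packages your ``peeling'' step by rewriting $\HH=\EE'[s_{1,1},\dots,s_{r,e'_\lambda}]$ with $\EE'$ a simple \rpisiE-extension of $\dfield{\FF(x)}{\sigma}$ of $x$-degree{$\leq d$} and then invoking Theorem~\ref{Thm:TelescopingGeneralChar} (with $\SA,\EE$ replaced by $\HH,\EE'$) to get the relations~\eqref{Equ:pSigmaRel}, whence $V_i\neq\{\vect{0}\}$ whenever $p_i\neq 0$ and the branch of steps~\ref{Step:BasisEmptyPartStart}--\ref{Step:BasisEmptyPartEnd} is never entered. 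The remaining deductions of statements 1--3 in your write-up match the paper's verbatim.
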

\begin{proof}
	Suppose that there is such an $\dfield{\HH}{\sigma}$ with $h'\in\HH$ where $\sigma(h')-h'=f$. Then $\HH=\EE'[s_{1,1},\dots,s_{1,e'_1},\dots,s_{r,1},\dots,s_{r,e'_\lambda}]$ where $\dfield{\EE'}{\sigma}$ is a simple \rpisiE-extension of $\dfield{\GG(x)}{\sigma}$ with $x$-degree{$\leq d$}. Now we apply Thm.~\ref{Thm:TelescopingGeneralChar} (with $\SA$ and $\EE$ replaced by $\HH$ and $\EE'$) and conclude that there are $c_{i,j}\in\KK$ with~\eqref{Equ:pSigmaRel} for $1\leq i\leq r$. If $e_i=0$, then $p_i=0$. Otherwise,
	we have $V_i\neq\{\vect{0}\}$ with the basis $B=\{(c_1,\dots,c_{e_i},-1)\}$. 
	Thus we do not enter in steps~\ref{Step:BasisEmptyPartStart}--\ref{Step:BasisEmptyPartEnd} and hence $Q=Q'$ and $\SA_u=\SA$.
	If $\HH=\SA$, it follows by Thm~\ref{Thm:TelescopingGeneralChar} that there is a $g'\in\SA$ with~\eqref{Equ:ExtractTele}. Thus the result is returned in $\SA'=\SA$. Otherwise, we get $\SA'=\SA_u[t]=\SA[t]$ where $t$ is a \sigmaE-monomial with $x$-degree{$\leq d$}. This proves statements~1 and~2 of the proposition. Furthermore, if $h'$ is free of the $s_{i,j}$ then it follows that $c_{i,j}=0$ in~\eqref{Equ:pSigmaRel}. In particular, $p_i=0$ for all $1\leq i\leq r$. Hence we never enter in steps~\ref{Step:ComputeBasis}--\ref{Step:LinSPart} and thus $w=0$. Therefore $h$ is free of the $s_{i,j}$ and statement 3 is proven.
\end{proof}


The above results can be turned to parameterized versions. Here we extend only Corollary~\ref{Cor:TelescopingChar} yielding Algorithm~\ref{Alg:GeneralSummation} below.

\begin{corollary}\label{Cor:ParaT}
Let $\dfield{\FF(x)}{\sigma}$ be a \pisiE-field extension of $\dfield{\FF}{\sigma}$ and let $\dfield{\EE}{\sigma}$ be a simple \rpisiE-extension of $\dfield{\FF(x)}{\sigma}$ with $\EE=\FF(x)\lr{t_1}\dots\lr{t_e}$ and $x$-degree{$\leq d$}. For $f_1,\dots,f_m\in\EE^*$ let $Q=\{q_1,\dots,q_r\}\subseteq\FF[x]$ be $(d,f_i)$-complete with $1\leq i\leq m$. Take
$f'_i,g_i\in\EE$ with $\sigma(g_i)-g_i+f'_i=f_i$
where $f'_i$ is given by

\vspace*{-0.2cm}

\begin{equation}\label{Equ:NormalFormi}
f'_i=\frac{p_{i,1}}{q_1^{n_{i,1}}}+\frac{p_{i,r}}{q_r^{n_{i,1}}}+\frac{p'_i}{q'_i}
\end{equation}

\vspace*{-0.0cm}

\noindent with the properties (1)--(4) ($p_j,n_j$ replaced by $p_{i,j},n_{i,j}$, and $p,q$ replaced by $p'_i,q'_i$) as given in Lemma~\ref{Lemma:TransformToReducedForm}. Let $\mu_i=\max(n_{1,i},\dots,n_{m,i})$ and

\vspace*{-0.5cm}

\begin{equation}\label{Equ:ParaVi}
V_i=\Ann_{\KK}(q_1^{\mu_1-n_{1,j}}p_{1,j},\dots,q_1^{\mu_m-n_{m,j}}p_{m,j}).
\end{equation}
Then with $\vect{c}=(c_1,\dots,c_m)\in\KK^m$ the following holds.
\begin{enumerate}[topsep=0pt, partopsep=0pt, leftmargin=7pt, itemindent=5pt,label={\arabic*.}]
\item There is an $h\in\EE$  with~\eqref{Equ:ParaT} iff there is a $g'\in\EE$ with

\vspace*{-0.2cm}
\begin{equation}\label{Equ:ParafP}
\sigma(g')-g'=c_1\,\frac{p'_1}{q'_1}+\dots+c_m\,\frac{p'_m}{q'_m}
\end{equation}
and $\vect{c}\in V_i$ holds for all $1\leq j\leq r$;
if this is the case, $\den(g')$ has $x$-degree{$\leq d$} and $h=g'+\sum_{i=1}^mc_i\,g_i\in\EE$ is a solution of~\eqref{Equ:ParaT}. 
\item If $\vect{c}\in V_i$ for all $1\leq j\leq r$ but there is no $g'\in\EE$ with~\eqref{Equ:ParafP},
then there is the \sigmaE-extension $\dfield{\EE[t]}{\sigma}$ of $\dfield{\EE}{\sigma}$ with $x$-degree{$\leq d$} where $\sigma(t)-t=c_1\,\frac{p'_1}{q'_1}+\dots+c_m\,\frac{p'_m}{q'_m}$, and $h=t+\sum_{i=1}^mc_i\,g_i$ satisfies~\eqref{Equ:ParaT}. 
\end{enumerate}
\end{corollary}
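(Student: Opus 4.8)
The plan is to mimic the proof of Theorem~\ref{Thm:TelescopingGeneralChar}, but with the roles of ``several $Q$-contributions attached to a fixed $q_i$'' replaced by ``several input functions $f_1,\dots,f_m$'', so that the free constants $c_1,\dots,c_m$ in Problem~PT play the part of the telescoping coefficients. First I would expand the right-hand side of~\eqref{Equ:ParaT}: given $\vect{c}=(c_1,\dots,c_m)$, using $\sigma(g_i)-g_i+f'_i=f_i$ we have $c_1 f_1+\dots+c_m f_m=\sigma\big(\sum_i c_i g_i\big)-\sum_i c_i g_i+\sum_i c_i f'_i$, so~\eqref{Equ:ParaT} holds for some $h\in\EE$ if and only if there is a $\gamma:=h-\sum_i c_i g_i\in\EE$ with $\sigma(\gamma)-\gamma=\sum_{i=1}^m c_i f'_i$. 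Substituting the $\sigma$-reduced forms~\eqref{Equ:NormalFormi} and collecting over each $q_j$, the right-hand side becomes $\sum_{j=1}^r \frac{P_j}{q_j^{\mu_j}}+\sum_{i=1}^m c_i\frac{p'_i}{q'_i}$, where $P_j=\sum_{i=1}^m c_i\,q_j^{\mu_j-n_{i,j}}p_{i,j}\in\FF[x]\lr{t_1}\dots\lr{t_e}$ and (since $\deg_x(p_{i,j})<\deg_x(q_j)n_{i,j}$ by property~(4) of Lemma~\ref{Lemma:TransformToReducedForm}) we have $\deg_x(P_j)<\deg_x(q_j)\mu_j$; also $\den\big(\sum_i c_i p'_i/q'_i\big)$ has $x$-degree $\leq d$ since each $q'_i$ does.

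Now I would invoke Proposition~\ref{Prop:RemoveBadFactors} with $p=\sum_i c_i p'_i$, $q=\lcm_i(q'_i)$ and the numerators $p_j:=P_j$: if a $\gamma\in\EE$ with $\sigma(\gamma)-\gamma=\sum_j \frac{P_j}{q_j^{\mu_j}}+\frac{p}{q}$ exists, then $P_1=\dots=P_r=0$ and $\den(\gamma)$ has $x$-degree $\leq d$. The vanishing $P_j=0$ is exactly the statement $\vect{c}\in V_j$ for all $1\leq j\leq r$ (note the index typo in the corollary statement: the conditions run over $j$, and $V_j$ should read $\Ann_{\KK}(q_j^{\mu_j-n_{1,j}}p_{1,j},\dots,q_j^{\mu_j-n_{m,j}}p_{m,j})$ — I would state it that way). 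Once all $P_j=0$, the equation for $\gamma$ collapses to $\sigma(\gamma)-\gamma=\sum_i c_i p'_i/q'_i$, i.e.~\eqref{Equ:ParafP} with $g'=\gamma$, and the $x$-degree bound on $\den(g')$ is the one just obtained. Conversely, if $\vect{c}\in V_j$ for all $j$ and a $g'$ with~\eqref{Equ:ParafP} exists, run the computation backwards: $\sigma\big(g'+\sum_i c_i g_i\big)-\big(g'+\sum_i c_i g_i\big)=\sum_i c_i p'_i/q'_i+\sum_i c_i f'_i-\sum_i c_i p'_i/q'_i$ — more carefully, $\sigma(g')-g'+\sum_i c_i(\sigma(g_i)-g_i)=\sum_i c_i p'_i/q'_i+\sum_i c_i(f_i-f'_i)$, and adding $\sum_j P_j/q_j^{\mu_j}=0$ back in reconstitutes $\sum_i c_i f_i$, so $h=g'+\sum_i c_i g_i$ satisfies~\eqref{Equ:ParaT}. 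This proves statement~1.

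For statement~2, suppose $\vect{c}\in V_j$ for all $j$ but no $g'\in\EE$ solves~\eqref{Equ:ParafP}. Apply Theorem~\ref{Thm:SigmaExtPara} in the case $m=1$ to the element $\varphi:=c_1 p'_1/q'_1+\dots+c_m p'_m/q'_m\in\EE$: the non-existence of $g'\in\EE$ with $\sigma(g')-g'=\varphi$ is exactly the condition that there is no nonzero $c\in\KK$ and $h\in\EE$ with $\sigma(h)-h=c\,\varphi$ (for $m=1$ Problem~PT with a single function is solvable with some nonzero $c$ iff it is solvable with $c=1$, rescaling), hence the \sigmaE-extension $\dfield{\EE[t]}{\sigma}$ with $\sigma(t)-t=\varphi$ has $\const{\EE[t]}{\sigma}=\const{\EE}{\sigma}=\KK$. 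Since $\den(\varphi)$ has $x$-degree $\leq d$, this \sigmaE-extension has $x$-degree $\leq d$ in the sense of Definition~\ref{Def:DegreeOptimalExt}. Finally $h=t+\sum_i c_i g_i$ satisfies~\eqref{Equ:ParaT} by the same backward computation as in statement~1, now with $g'$ replaced by $t$. The only real subtlety — and the main thing to get right — is the bookkeeping in statement~1: that the $x$-degree hypothesis $\deg_x(p_{i,j})<\deg_x(q_j)n_{i,j}$ survives multiplication by $q_j^{\mu_j-n_{i,j}}$ to give $\deg_x(P_j)<\deg_x(q_j)\mu_j$, which is precisely what licenses the application of Proposition~\ref{Prop:RemoveBadFactors}; everything else is a direct transcription of the $m=1$ argument in Corollary~\ref{Cor:TelescopingChar} with linear combinations over $\KK^m$ inserted.
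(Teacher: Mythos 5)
Your proposal is correct and follows essentially the same route as the paper: form $f=\sum_i c_i f_i$, $g=\sum_i c_i g_i$, $f'=\sum_i c_i f'_i$, combine the $\sigma$-reduced forms over a common power $q_j^{\mu_j}$, and invoke Proposition~\ref{Prop:RemoveBadFactors} (the paper routes this through Corollary~\ref{Cor:TelescopingChar}, which is the same argument packaged with the telescoping reformulation), then Theorem~\ref{Thm:SigmaExtPara} with $m=1$ for statement~2. You are in fact slightly more careful than the paper's own write-up on two points it glosses over: the combined numerators must be $P_j=\sum_i c_i\,q_j^{\mu_j-n_{i,j}}p_{i,j}$ (not $\sum_i c_i\,p_{i,j}$) so that $\deg_x(P_j)<\deg_x(q_j)\mu_j$ and the condition $P_j=0$ matches the definition of $V_j$, and you correctly flag the $i$/$j$ index typos in~\eqref{Equ:ParaVi}.
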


\begin{proof}
(1) Define $f:=\sum_{i=1}^mc_i\,f_i$, $g:=\sum_{i=1}^mc_i\,g_i$, $f'=\sum_{i=1}^mc_i\,f'_i$, $p_j=\sum_{i=1}^mc_i\,p_{i,j}$, 
and $p\in\FF[x]\lr{t_1}\dots\lr{t_e}$, $q\in\GG[x]$ such that $\frac{p}{q}:=\sum_{i=1}^mc_i\frac{p'_i}{q'_i}$.
With $f_i=\sigma(g_i)-g_i+f'_i$ for $1\leq i\leq m$ it follows

\vspace*{-0.3cm}

\begin{equation}\label{Equ:ParaCombined}
f=\text{$\sum_{i=1}^m$}c_i f_i=\sigma\Big(\text{$\sum_{i=1}^m$}c_ig_i\Big)-\text{$\sum_{i=1}^m$}c_ig_i+\text{$\sum_{i=1}^m$}c_if'_i=\sigma(g)-g-f'.
\end{equation} 
One can verify that for $f'$ the representation~\eqref{Equ:NormalForm} with the properties (1)--(4) hold. Thus by Corollary~\ref{Cor:TelescopingChar} if follows that there is an $h\in\EE$ with~\eqref{Equ:Tele} (i.e., \eqref{Equ:ParaT} holds) if and only if $p_j=0$ for all $1\leq j\leq r$ (i.e., $\vect{c}\in V_j$ for all $j$) and there is a $g'\in\EE$ with~\eqref{Equ:ExtractTele} (i.e. \eqref{Equ:ParafP} holds). Finally, 
the irreducible factors in the denominator of $g'$ have $x$-degrees{$\leq d$} and $h=g'+g\in\EE$ is a solution of~\eqref{Equ:ParaT}.\\
(2) Suppose that $\vect{c}\in V_i$ for all $1\leq j\leq r$. Since there is no $g'\in\EE$ with~\eqref{Equ:ParafP}, we can apply Theorem~\ref{Thm:SigmaExtPara} with $m=1$, and it follows that $t$ as given in statement~2 is a \sigmaE-monomial over $\EE$ where $\den(\sigma(t)-t)$ has $x$-degree{$\leq d$}. 
Furthermore, $g'=t$ is a solution of~\eqref{Equ:ParaT}. Hence we can apply statement~1 by replacing $\EE$ with $\EE[t]$ and it follows that $h=g'+g=t+g$ with $c_1,\dots,c_m$ is a solution of~\eqref{Equ:ParaT}.
\end{proof}


\begin{alg}\label{Alg:ParaTele}
	\normalfont \small
	\textbf{(Refined parameterized telescoping)}
\begin{description}[topsep=0pt, partopsep=0pt, leftmargin=3pt]
	\item[\textbf{Input:}]A \pisiE-field extension $\dfield{\FF(x)}{\sigma}$ of $\dfield{\FF}{\sigma}$ in which Problem~SE is solvable; $d\in\NN$  and an LU-computable simple \rpisiE-extension $\dfield{\EE}{\sigma}$ of $\dfield{\FF(x)}{\sigma}$ with $x$-degree{$\leq d$} in which Problem~PT is solvable; $\vect{f}=(f_1,\dots,f_m)\in\EE^m$.
\item[\textbf{Output:}] A solution $\vect{c}=(c_1,\dots,c_m)\in\KK^m\setminus\{\vect{0}\}$ and $h\in\EE$ of~\eqref{Equ:ParaT} if it exists. Otherwise a constructive decision if there is a \sigmaE-extension $\dfield{\SA}{\sigma}$ of $\dfield{\EE}{\sigma}$ with $x$-degree{$\leq d$} with a solution of 
$\vect{c}\in\KK^m\setminus\{\vect{0}\}$ and $h\in\SA$ of~\eqref{Equ:ParaT}.  
\end{description}
\begin{enumerate}[topsep=0pt, partopsep=0pt, leftmargin=8pt,label={\footnotesize\arabic*}]
	\item\label{stepPT:Start} For $f_1,\dots,f_m\in\EE$, compute $Q=\{q_1,\dots,q_r\}\subseteq\FF[x]$ which is $(d,f_i)$-complete; here one may use a variant of Lemma~\ref{Lemma:UpdateCompleteSet}. Further,
	compute
	$f'_i,g_i\in\EE$ such that $\sigma(g_i)-g_i+f'_i=f_i$
	where $f'_i$ is written in the form~\eqref{Equ:NormalFormi}
	with the properties (1)--(4) ($p_j,n_j$ replaced by $p_{i,j},n_{i,j}$, and $p,q$ replaced by $p'_i,q'_i$) as given in Lemma~\ref{Lemma:TransformToReducedForm}. 
	\item\label{stepPT:stop1} Compute for $1\leq i\leq r$ the bases $B_i$ of $V_i$ given in~\eqref{Equ:ParaVi} with $\mu_i=\max(n_{1,i},\dots,n_{m,i})$. 
	If $B_i=\{\}$, then stop and output ``no solution''. 
	\item\label{stepPT:stop2} Compute a basis $B=\{(d_{i,1},\dots,c_{i,m})\}_{1\leq i\leq u}$ of $V=V_1\cap\dots\cap V_m$.\\ 
	If $B=\{\}$, i.e., $V=\{\vect{0}\}$ then stop with the output ``no solution''.  
	\item Compute $(\tilde{f}_1,\dots,\tilde{f}_u)^t=D\,\Big(\frac{p'_1}{q'_1},\dots,\frac{p'_m}{q'_m}\Big)^t$ with $D=(d_{i,j})\in\KK^{u\times m}$.
	\item\label{stepPT:solinE} Compute, if possible, $\vect{\kappa}=(\kappa_1,\dots,\kappa_u)\in\KK^u\setminus\{\vect{0}\}$ and $\tilde{g}\in\EE$ with 

\vspace*{-0.2cm}

	\begin{equation}\label{Equ:SimpleCrea}
	\sigma(\tilde{g})-\tilde{g}=\kappa_1\,\tilde{f}_1+\dots+\kappa_u\,\tilde{f}_u.
	\end{equation}
	\item\label{stepPT:solinS} If there is not such a solution, then take the difference ring extension $\dfield{\SA}{\sigma}$ of $\dfield{\EE}{\sigma}$ with the polynomial ring extension $\SA=\EE[t]$ and $\sigma(t)=t+\tilde{f}_1$. Set $\tilde{g}=t$, $\vect{\kappa}=(1,0,\dots,0)\in\KK^u$.
	\item\label{StepPTFinalOutput}  Return $(\vect{c},h,\dfield{\SA}{\sigma})$ with $\vect{c}=\vect{\kappa}D\in\KK^m$, $h=\tilde{g}+\sum_{i=1}^mc_i\,g_i\in\SA$.		
	\end{enumerate}
\end{alg}

\begin{proposition}
Algorithm~\ref{Alg:ParaTele} is correct and can be executed in a \pisiE-field $\dfield{\FF(x)}{\sigma}$ as specified in Theorem~\ref{Thm:Computable}.
\end{proposition}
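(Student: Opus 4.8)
The plan is to reduce the whole verification to Corollary~\ref{Cor:ParaT}, which already carries the analytic content (Proposition~\ref{Prop:RemoveBadFactors} together with the $\sigma$-reduced representation of Lemma~\ref{Lemma:TransformToReducedForm}); what is left in Algorithm~\ref{Alg:ParaTele} is then linear algebra over $\KK$ glued to one call of Problem~PT in $\EE$. First I would fix notation: by step~\ref{stepPT:Start} together with Lemmas~\ref{Lemma:UpdateCompleteSet} and~\ref{Lemma:TransformToReducedForm}, $Q=\{q_1,\dots,q_r\}$ is $(d,f_i)$-complete for each $i$, every denominator $q'_i$ in~\eqref{Equ:NormalFormi} has $x$-degree${\leq d}$, and the spaces $V_i$ computed in step~\ref{stepPT:stop1} are exactly those in~\eqref{Equ:ParaVi}. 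The key observation to record here is that these $V_i$ are \emph{intrinsic}: they depend only on the elements $q_j^{\mu_j-n_{i,j}}p_{i,j}\in\FF[x]\lr{t_1}\dots\lr{t_e}$ and on the field $\KK$, hence they are unchanged if $\EE$ is replaced by any simple \rpisiE-extension $\EE'$ of $\dfield{\FF(x)}{\sigma}$ with $x$-degree${\leq d}$ that contains $\EE$ (the decomposition $f_i=\sigma(g_i)-g_i+f'_i$ with $g_i\in\EE$ and the form~\eqref{Equ:NormalFormi} remain valid verbatim over $\EE'$). This is what will make the ``no solution'' verdict complete across all admissible \sigmaE-extensions, not merely $\EE$ itself.

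Next I would handle the negative outputs. If step~\ref{stepPT:stop1} reports $B_i=\{\}$ for some $i$, then $V_i=\{\vect0\}$; if step~\ref{stepPT:stop2} reports $B=\{\}$, then $V:=V_1\cap\dots\cap V_r=\{\vect0\}$; in either case $V=\{\vect0\}$. Applying Corollary~\ref{Cor:ParaT}.1 in $\EE$ --- and, by the intrinsicness remark, also in any simple \rpisiE-extension $\EE'\supseteq\EE$ of $\dfield{\FF(x)}{\sigma}$ with $x$-degree${\leq d}$ --- any $h$ satisfying~\eqref{Equ:ParaT} with $\vect c\in\KK^m\setminus\{\vect0\}$ would force $\vect c\in V_i$ for all $i$, i.e.\ $\vect c\in V=\{\vect0\}$, a contradiction. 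Hence ``no solution'' is the correct output: \eqref{Equ:ParaT} is unsolvable in $\EE$ and in every \sigmaE-extension with $x$-degree${\leq d}$.

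For the positive branch I would put $\vect c=\vect\kappa D$. Since the rows of $D$ form a basis of $V=\bigcap_j V_j$, we get $\vect c\in V_j$ for all $j$, and $\vect c\neq\vect0$ because $\vect\kappa\neq\vect0$ and $D$ has full row rank; by linearity, $\sum_{i=1}^m c_i\frac{p'_i}{q'_i}=\sum_{j=1}^u\kappa_j\tilde f_j$. If step~\ref{stepPT:solinE} succeeds, this common value equals $\sigma(\tilde g)-\tilde g$ with $\tilde g\in\EE$, so~\eqref{Equ:ParafP} holds with $g'=\tilde g$, and Corollary~\ref{Cor:ParaT}.1 gives that $h=\tilde g+\sum_i c_i g_i\in\EE$ solves~\eqref{Equ:ParaT} with $\den(\tilde g)$ of $x$-degree${\leq d}$; here one may assume all $\tilde f_j\neq0$ since any vanishing $\tilde f_j$ makes~\eqref{Equ:SimpleCrea} solvable by the corresponding unit vector with $\tilde g=0$. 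If step~\ref{stepPT:solinE} fails, then in particular there is no $g'\in\EE$ with $\sigma(g')-g'=\tilde f_1$ (such a $g'$ together with $\vect\kappa=(1,0,\dots,0)$ would solve~\eqref{Equ:SimpleCrea}) and $\tilde f_1\neq0$; taking $\vect\kappa=(1,0,\dots,0)$ we have $\sum_i c_i\frac{p'_i}{q'_i}=\tilde f_1$, so Corollary~\ref{Cor:ParaT}.2 applies and yields exactly the \sigmaE-extension $\dfield{\EE[t]}{\sigma}$ with $x$-degree${\leq d}$ and $\sigma(t)-t=\tilde f_1$ built in step~\ref{stepPT:solinS}, together with the solution $h=t+\sum_i c_i g_i$ returned in step~\ref{StepPTFinalOutput}. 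Finally, to meet the requirement that a solution \emph{in $\EE$} be returned whenever one exists there, I would argue the converse: if some $h\in\EE$ with $\vect c\neq\vect0$ satisfies~\eqref{Equ:ParaT}, then Corollary~\ref{Cor:ParaT}.1 gives $\vect c\in V$ and a $g'\in\EE$ with~\eqref{Equ:ParafP}; writing $\vect c=\vect\lambda D$ with $\vect\lambda\neq\vect0$ turns~\eqref{Equ:ParafP} into~\eqref{Equ:SimpleCrea} with $\vect\kappa=\vect\lambda$, $\tilde g=g'$, so step~\ref{stepPT:solinE} necessarily succeeds and the algorithm stops within $\EE$.

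For executability: step~\ref{stepPT:Start} needs Problem~SE and univariate factorization over $\FF[x]$ via Lemmas~\ref{Lemma:UpdateCompleteSet} and~\ref{Lemma:TransformToReducedForm}; steps~\ref{stepPT:stop1} and~\ref{stepPT:stop2} together with the matrix product of step~4 are linear algebra over $\KK$ (kernels, intersections, products), available because $\EE$ is LA-computable; step~\ref{stepPT:solinE} is a single instance of Problem~PT in $\EE$; and step~\ref{stepPT:solinS} merely adjoins a transcendental generator, the \sigmaE-property being already certified by Theorem~\ref{Thm:SigmaExtPara}. By Theorem~\ref{Thm:Computable} all of these operations are effective in the \pisiE-fields specified there, so the algorithm terminates and is executable as claimed. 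The only genuinely delicate point in the proof is the intrinsicness/necessity step that makes the completeness of the ``no solution'' verdict valid over \emph{all} \sigmaE-extensions with $x$-degree${\leq d}$; everything else is bookkeeping around Corollary~\ref{Cor:ParaT} and the passage $\vect c=\vect\kappa D$ between the parameterized telescoping problem for $(f_1,\dots,f_m)$ and the reduced one for $(\tilde f_1,\dots,\tilde f_u)$.
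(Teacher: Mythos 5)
Your proof is correct and follows essentially the same route as the paper's: reduce everything to Corollary~\ref{Cor:ParaT}, justify the ``no solution'' outputs via $V=\{\vect{0}\}$, pass between~\eqref{Equ:ParaT} and~\eqref{Equ:SimpleCrea} through $\vect{c}=\vect{\kappa}D$, and check completeness within $\EE$ by the converse direction of the corollary. Your explicit ``intrinsicness'' remark (that the $V_i$ and the decompositions persist unchanged in any simple \rpisiE-extension of $x$-degree{$\leq d$} containing $\EE$) is a welcome spelling-out of a step the paper's proof asserts without elaboration when it rules out solutions in \sigmaE-extensions of $\dfield{\EE}{\sigma}$.
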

\begin{proof}
Suppose that $V=V_1\cap\dots\cap V_m=\{\vect{0}\}$. By Cor.~\ref{Cor:ParaT}.1 there are no $\vect{c}=(c_1,\dots,c_m)\in\KK^m\setminus\{\vect{0}\}$ and $h$ in $\dfield{\EE}{\sigma}$ or in a \sigmaE-extension $\dfield{\SA}{\sigma}$ of $\dfield{\EE}{\sigma}$ with $x$-degree{$\leq d$} which satisfy~\eqref{Equ:ParaT}. Thus the stops in steps~\ref{stepPT:stop1} and~\ref{stepPT:stop2} with the output ''no solution'' are correct. Now suppose that there is such a solution $h\in\EE$ with $\vect{c}\neq\vect{0}$. We conclude with Corollary~\ref{Cor:ParaT}.1 that $\vect{c}\in V$ holds and that there is a $g'\in\EE$ with~\eqref{Equ:ParafP}. Since $B$ is a basis of $V$, there is a $\vect{b}=(b_1,\dots,b_u)\in\KK^u\setminus\{\vect{0}\}$ with $\vect{b}\,D=\vect{c}\neq\vect{0}$. Consequently 
$$b_1\,\tilde{f}_1+\dots+b_u\,\tilde{f}_u=\vect{b}(\tilde{f}_1,\dots,\tilde{f}_u)^t=\overbrace{\vect{b}\,D}^{\vect{c}}(\frac{p'_1}{q'_1},\dots,\frac{p'_m}{q'_m})^t=\sigma(g')-g'.$$
Thus we also find $\vect{\kappa}=(\kappa_1,\dots,\kappa_u)\in\KK^u\setminus\{\vect{0}\}$ and $\tilde{g}\in\EE$ with~\eqref{Equ:SimpleCrea} in step~\ref{stepPT:solinE}. Now consider the output given in step~\ref{StepPTFinalOutput}. Then
\begin{align*}
	c_1\,f'_1&+\dots+c_m\,f'_m=\vect{c}(f'_1,\dots,f'_m)^t=\vect{\kappa}D(f'_1,\dots,f'_m)^t\\[-0.1cm]
	\stackrel{(*)}{=}&\vect{\kappa}D(\frac{p'_1}{q'_1},\dots,\frac{p'_m}{q'_m})^t=\vect{\kappa}(\tilde{f}_1,\dots,\tilde{f}_u)^t=\sigma(\tilde{g}))-\tilde{g};
\end{align*}
here (*) holds since $D$ kills the contributions with denominator factors having $x$-degrees larger than $d$.  With $\sigma(g_i)-g_i+f'_i=f_i$ and~\eqref{Equ:ParaCombined} we obtain $\sum_{i=1}^mc_i\,f_i=\sigma(h)-h$ with $h=g'+\sum_{i=1}^mc_ig_i$. Summarizing, if once can solve Problem~PT in $\EE$, the algorithm finds such a solution. Otherwise, we fail to find $\kappa_i\in\KK$ and $\tilde{g}\in\EE$ with~\eqref{Equ:SimpleCrea}. By Thm.~\ref{Thm:SigmaExtPara}, $\dfield{\SA}{\sigma}$ given in step~\ref{stepPT:solinS} is a \sigmaE-extension of $\dfield{\EE}{\sigma}$ with $x$-degree{$\leq d$}; Further, for the $\kappa_i$ and $g'=t$ we have~\eqref{Equ:SimpleCrea}. Also for this case the output $h\in\SA$ produces the desired solution.
Clearly, the algorithm is applicable as specified in Theorem~\ref{Thm:Computable}.
\end{proof}

\vspace*{-0.3cm}

\begin{example}
Given $\dfield{\EE}{\sigma}$ from Ex.~\ref{Exp:DR}.3 and $\vect{f}\in\EE$ as given in~\eqref{Exp:VectorF} we start Algorithm~\ref{Alg:ParaTele}. Step~\ref{stepPT:Start} has been carried out in Ex.~\ref{Exp:SigmaReduceForm}.2.
Next, we compute a basis of $\Ann_{\QQ}(p_{1,1},p_{2,1},p_{3,1})$. Namely, $B=\{\vect{d}\}$ with $\vect{d}=(d_1,d_2,d_3)=(1,-1,1)$.
Finally, we get $\tilde{f}_1=\sum_{i=1}^3d_i\rho_i=\frac{h_1 z}{x}+\frac{z}{x}$ with $u=1$. Since there is no $\kappa_1\neq0$ and $\tilde{g}\in\EE$ with~\eqref{Equ:SimpleCrea}, we can adjoin the \sigmaE-monomial $t$ with $\sigma(t)-t=\tilde{f}_1=\frac{h_1 z}{x}+\frac{z}{x}$  and set $\vect{\kappa}=(1)$ and $\tilde{g}=t$. 
Thus $\vect{c}=\vect{\kappa}(d_1,d_2,d_3)=(1,-1,1)$ and $h=\tilde{g}+g_1-g_2+g_3=t
	-\frac{\text{h1} x
		-z
	}{x (
		1+x^2)}$
	is a solution of~\eqref{Equ:ParaT} with $m=3$.
\end{example}

\noindent\textit{Remark.} If one is interested in solving Problem PT only in $\EE$ (and not in an extension $\SA$), one could also use the algorithms from~\cite{DR1,DR3}; compare Thm.~\ref{Thm:Computable}. However, similarly to the observation in~\cite{ChenHKL15}, we have the benefit that Algorithm~\ref{Alg:ParaTele} (in comparison to the ones in~\cite{DR1,DR3}) leads to speedups when complicated denominators arise.

\section{Conclusion}\label{Sec:Conclusion}

We presented telescoping algorithms that enable one to decide algorithmically if irreducible factors can be eliminated in the input summand $f$. 
The algorithms require that the nested sums arising in $f$ have already representations with nice denominators. In order to be more flexible, we considered sum extensions in Def.~\ref{Def:reducedExt} where the outermost sum can have bad denominators. It would be interesting to see if this can be pushed further to more complex sums. Here ideas from~\cite{Abramov:75,Paule:95,Abramov:03,BChen:05,Schneider:07d,CSFFL:15} 
might be useful to eliminate denominator factors within unwanted shift-equivalence classes. 

Further, one could try to deal with several \pisiE-field monomials (and not only $x$). This would lead to algorithms that can handle not only the ($q$--)rational but also the multibasic and mixed case~\cite{Bauer:99}.

The above algorithm have been implemented in the package \texttt{Sigma} and are combined with algorithms given in~\cite{Schneider:08c,Schneider:10a,Schneider:10b,Petkov:10,Schneider:15} when one has to solve Problems~T and~PT in 
step~\ref{Step:Compute g'} of Alg.~\ref{Alg:GeneralSummation} and step~\ref{stepPT:solinE} of Alg.~\ref{Alg:ParaTele}. Thus one can search in addition for sum representations with optimal nesting depth. This highly flexible toolbox is crucial to simplify complicated sum expressions coming, e.g., from particle physics~\cite{BKKS:09,BMSS:22a,BMSS:22b}.




\begin{thebibliography}{10}
	
	\bibitem{ABRS:14}
	J.~Ablinger, J.~Bl\"umlein, C.~G. Raab, and C.~Schneider.
	\newblock Iterated binomial sums and their associated iterated integrals.
	\newblock {\em J. Math. Phys.}, 55(112301):1--57, 2014.
	\newblock arXiv:1407.1822.
	
	\bibitem{ABS:11}
	J.~Ablinger, J.~Bl\"umlein, and C.~Schneider.
	\newblock Harmonic sums and polylogarithms generated by cyclotomic polynomials.
	\newblock {\em J. Math. Phys.}, 52(10):1--52, 2011.
	\newblock arXiv:1007.0375.
	
	\bibitem{ABS:13}
	J.~Ablinger, J.~Bl\"umlein, and C.~Schneider.
	\newblock Analytic and algorithmic aspects of generalized harmonic sums and
	polylogarithms.
	\newblock {\em J. Math. Phys.}, 54(8):1--74, 2013.
	\newblock arXiv:1302.0378.
	
	\bibitem{AS:18}
	J.~Ablinger and C.~Schneider.
	\newblock Algebraic independence of sequences generated by (cyclotomic)
	harmonic sums.
	\newblock {\em Annals of Combinatorics}, 22(2):213--244, 2018.
	\newblock arXiv:1510.03692.
	
	\bibitem{Abramov:75}
	S.~A. Abramov.
	\newblock The rational component of the solution of a first-order linear
	recurrence relation with a rational right-hand side.
	\newblock {\em U.S.S.R. Comput. Maths. Math. Phys.}, 15:216--221, 1975.
	\newblock Transl. from Zh. vychisl. mat. mat. fiz. 15, pp. 1035--1039, 1975.
	
	\bibitem{Abramov:03}
	S.~A. Abramov.
	\newblock When does {Z}eilberger's algorithm succeed?
	\newblock {\em Adv. Appl. Math.}, 30(3):424--441, 2003.
	
	\bibitem{ABPS:20}
	S.~A. Abramov, M.~Bronstein, M.~Petkov{\v s}ek, and C.~Schneider.
	\newblock On rational and hypergeometric solutions of linear ordinary
	difference equations in {$\Pi\Sigma^*$}-field extensions.
	\newblock {\em J. Symb. Comput.}, 107:23--66, 2021.
	\newblock arXiv:2005.04944 [cs.SC].
	
	\bibitem{APP:98}
	S.~A. Abramov, P.~Paule, and M.~Petkov{\v s}ek.
	\newblock $q$-{H}ypergeometric solutions of $q$-difference equations.
	\newblock {\em Discrete Math.}, 180(1-3):3--22, 1998.
	
	\bibitem{Petkov:10}
	S.~A. Abramov and M.~Petkov{\v{s}}ek.
	\newblock Polynomial ring automorphisms, rational {$(w,\sigma)$}-canonical
	forms, and the assignment problem.
	\newblock {\em J. Symbolic Comput.}, 45(6):684--708, 2010.
	
	\bibitem{Bauer:99}
	A.~Bauer and M.~Petkov{\v{s}}ek.
	\newblock Multibasic and mixed hypergeometric {Gosper}-type algorithms.
	\newblock {\em J.~Symbolic Comput.}, 28(4--5):711--736, 1999.
	
	\bibitem{Bluemlein:04}
	J.~Bl{\"u}mlein.
	\newblock Algebraic relations between harmonic sums and associated quantities.
	\newblock {\em Comput. Phys. Commun.}, 159(1):19--54, 2004.
	\newblock arXiv:hep-ph/0311046.
	
	\bibitem{BKKS:09}
	J.~Bl{\"u}mlein, M.~Kauers, S.~Klein, and C.~Schneider.
	\newblock Determining the closed forms of the {$O(a_s^3)$} anomalous dimensions
	and {W}ilson coefficients from {M}ellin moments by means of computer algebra.
	\newblock {\em Comput. Phys. Commun.}, 180:2143--2165, 2009.
	\newblock arXiv:0902.4091.
	
	\bibitem{Bluemlein:99}
	J.~Bl\"umlein and S.~Kurth.
	\newblock Harmonic sums and {M}ellin transforms up to two-loop order.
	\newblock {\em Phys. Rev.}, D60, 1999.
	
	\bibitem{BMSS:22a}
	J.~Bl\"umlein, P.~Marquard, C.~Schneider, and K.~Sch\"onwald.
	\newblock {The massless three-loop Wilson coefficients for the deep-inelastic
		structure functions $F_2, F_L, xF_3$ and $g_1$}.
	\newblock {\em Journal of High Energy Physics}, (Paper No. 156):1--83, 2022.
	\newblock arXiv:2208.14325 [hep-ph].
	
	\bibitem{BMSS:22b}
	J.~Bl\"umlein, P.~Marquard, C.~Schneider, and K.~Sch\"onwald.
	\newblock {The three-loop polarized singlet anomalous dimensions from off-shell
		operator matrix elements}.
	\newblock {\em Journal of High Energy Physics}, 2022(193):0--32, 2022.
	\newblock arXiv:2111.12401 [hep-ph].
	
	\bibitem{Bron:00}
	M.~Bronstein.
	\newblock On solutions of linear ordinary difference equations in their
	coefficient field.
	\newblock {\em J.~Symbolic Comput.}, 29(6):841--877, 2000.
	
	\bibitem{CSFFL:15}
	S.~Chen, F.~Chyzak, R.~Feng, G.~Fu, and Z.~Li.
	\newblock On the existence of telescopers for mixed hypergeometric terms.
	\newblock {\em J. Symbolic Comput.}, 68(part 1):1--26, 2015.
	
	\bibitem{ZimingLi:11}
	S.~Chen, R.~Feng, G.~Fu, and Z.~Li.
	\newblock On the structure of compatible rational functions.
	\newblock In {\em {Proceedings of ISSAC 2011}}, pages 91--98, 2011.
	
	\bibitem{ChenHKL15}
	S.~Chen, H.~Huang, M.~Kauers, and Z.~Li.
	\newblock A modified {A}bramov-{P}etkovsek reduction and creative telescoping
	for hypergeometric terms.
	\newblock In K.~Yokoyama, S.~Linton, and D.~Robertz, editors, {\em Proc. ISSAC
		2015}, pages 117--124. {ACM}, 2015.
	
	\bibitem{CJKS:13}
	S.~Chen, M.~Jaroschek, M.~Kauers, and M.~F. Singer.
	\newblock Desingularization explains order-degree curves for {O}re operators.
	\newblock In M.~Kauers, editor, {\em {Proc. of ISSAC 2013}}, pages 157--164,
	2013.
	
	\bibitem{CK:12}
	S.~Chen and M.~Kauers.
	\newblock Order-degree curves for hypergeometric creative telescoping.
	\newblock In J.~van~der Hoeven and M.~van Hoeij, editors, {\em {Proceedings of
			ISSAC 2012}}, pages 122--129, 2012.
	
	\bibitem{BChen:05}
	W.~Y.~C. Chen, Q.~Hou, and Y.~Mu.
	\newblock Applicability of the q-analogue of {Z}eilberger's algorithm.
	\newblock {\em J. Symb. Comput.}, 39(2):155--170, 2005.
	
	\bibitem{Chyzak:00}
	F.~Chyzak.
	\newblock An extension of {Z}eilberger's fast algorithm to general holonomic
	functions.
	\newblock {\em Discrete Math.}, 217:115--134, 2000.
	
	\bibitem{Ge:93}
	G.~Ge.
	\newblock {\em Algorithms related to the multiplicative representation of
		algebraic numbers}.
	\newblock PhD thesis, Univeristy of California at Berkeley, 1993.
	
	\bibitem{Gosper:78}
	R.~W. Gosper.
	\newblock Decision procedures for indefinite hypergeometric summation.
	\newblock {\em Proc. Nat. Acad. Sci. U.S. A.}, 75:40--42, 1978.
	
	\bibitem{Singer:08}
	C.~Hardouin and M.~F. Singer.
	\newblock Differential {G}alois theory of linear difference equations.
	\newblock {\em Math. Ann.}, 342(2):333--377, 2008.
	
	\bibitem{Singer:99}
	P.~A. Hendriks and M.~F. Singer.
	\newblock Solving difference equations in finite terms.
	\newblock {\em J.~Symbolic Comput.}, 27(3):239--259, 1999.
	
	\bibitem{Karr:81}
	M.~Karr.
	\newblock Summation in finite terms.
	\newblock {\em J.~ACM}, 28:305--350, 1981.
	
	\bibitem{Karr:85}
	M.~Karr.
	\newblock Theory of summation in finite terms.
	\newblock {\em J.~Symbolic Comput.}, 1:303--315, 1985.
	
	\bibitem{Schneider:06e}
	M.~Kauers and C.~Schneider.
	\newblock Application of unspecified sequences in symbolic summation.
	\newblock In J.~Dumas, editor, {\em Proc. ISSAC'06.}, pages 177--183. ACM
	Press, 2006.
	
	\bibitem{Schneider:06d}
	M.~Kauers and C.~Schneider.
	\newblock Indefinite summation with unspecified summands.
	\newblock {\em Discrete Math.}, 306(17):2021--2140, 2006.
	
	\bibitem{Schneider:07f}
	M.~Kauers and C.~Schneider.
	\newblock Symbolic summation with radical expressions.
	\newblock In C.~Brown, editor, {\em {Proc. ISSAC'07}}, pages 219--226, 2007.
	
	\bibitem{Koutschan:13}
	C.~Koutschan.
	\newblock Creative telescoping for holonomic functions.
	\newblock In C.~Schneider and J.~Bl\"umlein, editors, {\em {Computer Algebra in
			Quantum Field Theory: Integration, Summation and Special Functions}}, Texts
	and Monographs in Symbolic Computation, pages 171--194. Springer, 2013.
	\newblock arXiv:1307.4554.
	
	\bibitem{OS:18}
	E.~D. Ocansey and C.~Schneider.
	\newblock {Representing (q-)hypergeometric products and mixed versions in
		difference rings}.
	\newblock In C.~Schneider and E.~Zima, editors, {\em { Advances in Computer
			Algebra. WWCA 2016.}}, volume 226 of {\em Springer Proceedings in Mathematics
		\& Statistics}, pages 175--213. Springer, 2018.
	\newblock arXiv:1705.01368.
	
	\bibitem{OS:20}
	E.~D. Ocansey and C.~Schneider.
	\newblock {Representation of hypergeometric products of higher nesting depths
		in difference rings}.
	\newblock RISC Report Series 20-19, Research Institute for Symbolic Computation
	(RISC), Johannes Kepler University Linz, Schloss Hagenberg, 4232 Hagenberg,
	Austria, 2020.
	\newblock arXiv:2011.08775.
	
	\bibitem{Paule:95}
	P.~Paule.
	\newblock Greatest factorial factorization and symbolic summation.
	\newblock {\em J.~Symbolic Comput.}, 20(3):235--268, 1995.
	
	\bibitem{PauleRiese:97}
	P.~Paule and A.~Riese.
	\newblock A {M}athematica $q$-analogue of {Z}eil\-ber\-ger's algorithm based on
	an algebraically motivated approach to $q$-hypergeometric telescoping.
	\newblock In M.~Ismail and M.~Rahman, editors, {\em Special Functions, q-Series
		and Related Topics}, volume~14, pages 179--210. AMS, 1997.
	
	\bibitem{Petkov:92}
	M.~Petkov{\v s}ek.
	\newblock Hypergeometric solutions of linear recurrences with polynomial
	coefficients.
	\newblock {\em J.~Symbolic Comput.}, 14(2-3):243--264, 1992.
	
	\bibitem{Petkov:2013}
	M.~Petkov{\v s}ek and H.~Zakraj{\v s}ek.
	\newblock Solving linear recurrence equations with polynomial coefficients.
	\newblock In C.~Schneider and J.~Bl\"umlein, editors, {\em Computer Algebra in
		Quantum Field Theory: Integration, Summation and Special Functions}, Texts
	and Monographs in Symbolic Computation, pages 259--284. Springer, 2013.
	
	\bibitem{vanHoeij:99}
	\relax{M.~van} Hoeij.
	\newblock Finite singularities and hypergeometric solutions of linear
	recurrence equations.
	\newblock {\em J.~Pure Appl. Algebra}, 139(1-3):109--131, 1999.
	
	\bibitem{Schneider:01}
	C.~Schneider.
	\newblock Symbolic summation in difference fields.
	\newblock Technical Report 01-17, RISC-Linz, J.~Kepler University, November
	2001.
	\newblock PhD Thesis.
	
	\bibitem{Schneider:05c}
	C.~Schneider.
	\newblock Product representations in ${\Pi}{\Sigma}$-fields.
	\newblock {\em Ann. Comb.}, 9(1):75--99, 2005.
	
	\bibitem{Schneider:07d}
	C.~Schneider.
	\newblock Simplifying sums in {$\Pi\Sigma$}-extensions.
	\newblock {\em J. Algebra Appl.}, 6(3):415--441, 2007.
	
	\bibitem{Schneider:07a}
	C.~Schneider.
	\newblock Symbolic summation assists combinatorics.
	\newblock {\em S\'em.~Lothar. Combin.}, 56:1--36, 2007.
	\newblock Article B56b.
	
	\bibitem{Schneider:08c}
	C.~Schneider.
	\newblock A refined difference field theory for symbolic summation.
	\newblock {\em J. Symbolic Comput.}, 43(9):611--644, 2008.
	\newblock arXiv:0808.2543v1.
	
	\bibitem{Schneider:10c}
	C.~Schneider.
	\newblock Parameterized telescoping proves algebraic independence of sums.
	\newblock {\em Ann. Comb.}, 14:533--552, 2010.
	\newblock arXiv:0808.2596; for a preliminary version see FPSAC 2007.
	
	\bibitem{Schneider:10a}
	C.~Schneider.
	\newblock Structural theorems for symbolic summation.
	\newblock {\em Appl. Algebra Engrg. Comm. Comput.}, 21(1):1--32, 2010.
	
	\bibitem{Schneider:10b}
	C.~Schneider.
	\newblock A symbolic summation approach to find optimal nested sum
	representations.
	\newblock In A.~Carey, D.~Ellwood, S.~Paycha, and S.~Rosenberg, editors, {\em
		{Motives, Quantum Field Theory, and Pseudodifferential Operators}}, volume~12
	of {\em Clay Mathematics Proceedings}, pages 285--308. Amer. Math. Soc, 2010.
	\newblock arXiv:0808.2543.
	
	\bibitem{DR2}
	C.~Schneider.
	\newblock A streamlined difference ring theory: Indefinite nested sums, the
	alternating sign and the parameterized telescoping problem.
	\newblock In F.~Winkler, V.~Negru, T.~Ida, T.~Jebelean, D.~Petcu, S.~Watt, and
	D.~Zaharie, editors, {\em Symbolic and Numeric Algorithms for Scientific
		Computing (SYNASC), 2014 15th International Symposium}, pages 26--33. IEEE
	Computer Society, 2014.
	\newblock arXiv:1412.2782.
	
	\bibitem{Schneider:15}
	C.~Schneider.
	\newblock Fast algorithms for refined parameterized telescoping in difference
	fields.
	\newblock In M.~W. J.~Guitierrez, J.~Schicho, editor, {\em Computer Algebra and
		Polynomials}, number 8942 in Lecture Notes in Computer Science (LNCS), pages
	157--191. Springer, 2015.
	\newblock arXiv:1307.7887.
	
	\bibitem{DR1}
	C.~Schneider.
	\newblock A difference ring theory for symbolic summation.
	\newblock {\em J. Symb. Comput.}, 72:82--127, 2016.
	\newblock arXiv:1408.2776.
	
	\bibitem{DR3}
	C.~Schneider.
	\newblock {Summation Theory II: Characterizations of $R\Pi\Sigma$-extensions
		and algorithmic aspects}.
	\newblock {\em J. Symb. Comput.}, 80(3):616--664, 2017.
	\newblock arXiv:1603.04285.
	
	\bibitem{SchneiderProd:20}
	C.~Schneider.
	\newblock {Minimal representations and algebraic relations for single nested
		products}.
	\newblock {\em Programming and Computer Software}, 46(2):133--161, 2020.
	\newblock arXiv:1911.04837.
	
	\bibitem{Schneider:21}
	C.~Schneider.
	\newblock Term algebras, canonical representations and difference ring theory
	for symbolic summation.
	\newblock In J.~Blümlein and C.~Schneider, editors, {\em {Anti-Differentiation
			and the Calculation of Feynman Amplitudes}}, Texts and Monographs in Symbolic
	Computuation, pages 423--485. Springer, 2021.
	\newblock arXiv:2102.01471 [cs.SC], RISC-Linz Report Series No. 21-03.
	
	\bibitem{Vermaseren:99}
	J.~A.~M. Vermaseren.
	\newblock Harmonic sums, {M}ellin transforms and integrals.
	\newblock {\em Int. J.~Mod. Phys.}, A14:2037--2976, 1999.
	
	\bibitem{Zeilberger:90a}
	D.~Zeilberger.
	\newblock A holonomic systems approach to special functions identities.
	\newblock {\em J.~Comput. Appl. Math.}, 32:321--368, 1990.
	
	\bibitem{Zeilberger:91}
	D.~Zeilberger.
	\newblock The method of creative telescoping.
	\newblock {\em J.~Symbolic Comput.}, 11:195--204, 1991.
	
\end{thebibliography}

\end{document}